\title{\LARGE Coherence and Concentration in Tightly-Connected Networks}
\author{Hancheng Min, Richard Pates and Enrique Mallada\thanks{H. Min and E. Mallada are with the Department of Electrical and Computer Engineering, Johns Hopkins University, Baltimore, MD 21218, USA. E-mail: \texttt{\{hanchmin, mallada\}@jhu.edu}.

Richard Pates is with the Department of Automatic Control, Lund University, Box 118, SE-221 00, Lund, Sweden. E-mail: \texttt{richard.pates@control.lth.se}. He is a member of the ELLIIT Strategic Research Area at Lund University.}
\thanks{Preliminary version of this work, covering an alternative version of the results in Section \ref{sec:dymC}, was presented in \cite{min2019cdc}.
}
}
\definecolor{bleudefrance}{rgb}{0.19, 0.55, 0.91}
\definecolor{ao(english)}{rgb}{0.0, 0.5, 0.0}
\newcommand{\addcite}[0]{\ifthenelse{\boolean{showcomments}}
{\textcolor{purple}{(add cite(s)) }}{}}%
\newcommand{\enrique}[1]{  \ifthenelse{\boolean{showcomments}}
{\todo[inline,color=bleudefrance]{Enrique: #1}}{}}
\newcommand{\emmargin}[1]{\ifthenelse{\boolean{showcomments}}{\marginpar{\color{bleudefrance}\tiny EM: #1}}{}}
\newcommand{\hancheng}[1]{  \ifthenelse{\boolean{showcomments}}
{\todo[inline,color=orange]{Hancheng: #1}}{}}
\newcommand{\aem}[1]{
\ifthenelse{\boolean{showedits}}
{\added[id=EM]{#1}}
{#1}
}
\newcommand{\chem}[2]{
\ifthenelse{\boolean{showedits}}
{\replaced[id=EM]{#1}{#2}}
{#1}
}
\newcommand{\dem}[1]{
\ifthenelse{\boolean{showedits}}
{\deleted[id=EM]{#1}}
{}
}
\newcommand{\hl}[1]{
\ifthenelse{\boolean{showedits}}
{{\color{bleudefrance}#1}}
{#1}
}
\newif\ifshownotes
\definecolor{notetext}{rgb}{0.7,0,0}
\newtheorem{thm}{Theorem}
\newtheorem{lem}{Lemma}
\newtheorem{prop}[thm]{Proposition}
\newtheorem{example}{Example}
\newtheorem{clm}{Claim}
\newtheorem{dfn}{Definition}
\newtheorem{asmp}{Assumption}
\newtheorem{rem}{Remark}
\begin{document}
\maketitle
\thispagestyle{plain}
\pagestyle{plain}
\begin{abstract}
    The ability to achieve coordinated behavior---engineered or emergent---on networked systems has attracted widespread interest over several fields. This interest has led to remarkable advances in developing a theoretical understanding of the conditions under which agents within a network can reach an agreement (consensus) or develop coordinated behavior, such as synchronization. However, much less understood is the phenomenon of network coherence. Network coherence generally refers to nodes' ability in a network to have a similar dynamic response despite heterogeneity in their individual behavior. 
    In this paper, we develop a general framework to analyze and quantify the level of network coherence that a system exhibits by relating coherence with a low-rank property of the system. More precisely, for a networked system with linear dynamics and coupling, we show that, as the network connectivity grows, the system transfer matrix converges to a rank-one transfer matrix representing the coherent behavior. Interestingly, the non-zero eigenvalue of such a rank-one matrix is given by the harmonic mean of individual nodal dynamics, and we refer to it as the coherent dynamics. Our analysis unveils the frequency-dependent nature of coherence and a non-trivial interplay between dynamics and network topology. 
    We further show that many networked systems can exhibit similar coherent behavior by establishing a concentration result in a setting with randomly chosen individual nodal dynamics.
\end{abstract}

\section{Introduction}\label{sec:intro}
Coordinated behavior in network systems has been a popular subject of research in many fields, including physics~\cite{Bressloff1999}, chemistry~\cite{Kiss2002}, social sciences~\cite{DeGroot1974}, and biology~\cite{Mirollo1990}. Within engineering, coordination is essential for the proper operation of many networked systems including power networks~\cite{jpm2017cdc,Paganini2019tac}, data and sensor networks~\cite{mmhzt2015ton,m2014phd-thesis}, and autonomous transportation~\cite{Sepulchre2008706,Olfati-Saber2007,Jadbabaie2003988,Bamieh2012}.
While there exist many expressions of this behavior, two forms of coordination have particularly received thorough attention by the control community: Consensus and synchronization.

Consensus~\cite{DeGroot1974, Olfati-Saber2007, Jadbabaie2003988, Bamieh2012, Tegling19,Olfati-Saber20041520, Ghaedsharaf2019}, on one hand, refers to the ability of the network nodes to asymptotically reach a common value over some quantities of interest.
Many extensions of this problem include the study of robustness and performance of consensus networks in the presence of noise~\cite{Jadbabaie2003988,Bamieh2012,Tegling19}, time-delay~\cite{Olfati-Saber20041520, Ghaedsharaf2019}, and switching graph topology~\cite{Ghaedsharaf2019}.
Synchronization~\cite{Mirollo1990,mmhzt2015ton,m2014phd-thesis,Sepulchre2008706,Nair2008661,Kim2011200,Wieland2011}, on the other hand, refers to the ability of network nodes to follow a commonly defined trajectory. Although for nonlinear systems synchronization is a structurally stable phenomenon, in the linear case~\cite{Nair2008661, Sepulchre2008706, Kim2011200, Wieland2011}, synchronization requires the existence of a common internal model that acts as a virtual leader~\cite{Kim2011200,Wieland2011}.

A closely related notion of coordination emerges when looking at how the network agents collectively respond to disturbances. In this setting, agents with noticeably different input-output responses, can present a similar, i.e., coherent, response when interconnected.
A vast body of work, triggered by the seminal paper~\cite{Bamieh2012}, has quantitatively studied the role of the network topology in the emergence of coherence. Examples include, directed~\cite{Tegling19} and undirected~\cite{Oral17} consensus networks, transportation networks~\cite{Bamieh2012}, and power networks~\cite{Paganini2019tac,Bamieh2013,Andreasson17,psf17cdc}. 
The key technical approach amounts to quantify the level of coherence by computing the $\mathcal{H}_2$-norm of the system for appropriately defined nodal disturbance and performance signals. Broadly speaking, the analysis shows a reciprocal dependence between the performance metrics and the non-zero eigenvalues of the network graph Laplacian, validating the fact that strong network coherence (low $\mathcal{H}_2$-norm) results from the high connectivity of the network (large Laplacian eigenvalues).
Unfortunately, the analysis strongly relies on a homogeneity~\cite{Bamieh2012,Tegling19, Oral17,Bamieh2013, Andreasson17,psf17cdc} or proportionality~\cite{Paganini2019tac} assumption of the nodal transfer functions, and thus fails to characterize how individual heterogeneous node dynamics affect the overall coherent network response.

In this paper, we seek to overcome these limitations by formalizing  network coherence by the presence of a low-rank structure, of the system transfer matrix, that appears when the network feedback gain is high. 
More precisely, we show that for linear networks, as the network's effective algebraic connectivity (a frequency-dependent notion to be later defined) grows, the system transfer matrix converges to a rank-one transfer matrix with identical element-wise transfer functions. Interestingly, this transfer function is given by the harmonic mean of individual nodal dynamics, which we refer to as coherent dynamics.  Furthermore, we provide the concentration result of such coherent dynamics in large-scale stochastic networks where the node dynamics are given by random transfer functions. 


This frequency domain analysis provides a deeper characterization of the role of both, network topology and node dynamics, on the coherent behavior of the network. In particular, our results make substantial contributions towards the understanding of coordinated and coherent behavior of network systems in many ways:
\begin{itemize}
    \item We present a general framework in frequency domain to analyze the coherence of heterogeneous networks. We show that network coherence emerges as a low rank structure of the system transfer matrix as we increase the effective algebraic connectivity--a frequency-varying quantity that depends on the network coupling strength and  dynamics.
    \item Unlike previous works, our analysis applies to networks with  heterogeneous nodal dynamics, and further provides an explicit characterization in the frequency domain of the coherent response to disturbances. Thus, in this way, our results highlight the contribution of individual nodal dynamics to the network's coherent behavior.
    \item The analysis further suggests that network coherence is a frequency-dependent phenomenon. That is, the ability for nodes to respond coherently depends on the frequency composition of the input disturbance. {\ifthenelse{\boolean{color}}{\color{blue}}{} We further explicitly connect our frequency domain analysis to time domain notions of coherence.}
    \item By providing an exact characterization of network's coherent dynamics, our analysis can be further applied in settings where only distributional information of the netwok composition is known. More precisely, we show that the coherent dynamics of tightly-connected networks with possibly random nodal dynamics are well approximated by a deterministic transfer function that only depends on the statistical distribution of node dynamics. 
\end{itemize}

{\ifthenelse{\boolean{color}}{\color{blue}}{}Prior works on coherence primarily study homogeneous networks~\cite{Bamieh2012,Andreasson17,psf17cdc} and proportional networks~\cite{Paganini2019tac}, where the network can be diagonalized into $n$ individual dynamics,  each corresponding to one eigenvalue of the network Laplacian. However, such techniques to study coherence cannot be generalized to non-proportional heterogeneous networks. In fact, in such setting, it is a priori unclear what a good representation of the coherent response is. Our new frequency domain framework allows one to analyze coherence in general heterogeneous networks.

Notably, the problem of characterizing coherent dynamics is unique to heterogeneous networks, since the coherent dynamics for homogeneous networks are exactly equal to the common nodal dynamics. In real applications, however,  such as power networks, such characterization is relevant to model reduction~\cite{Germond1978,Apostolopoulou2016,Guggilam2018} and control design~\cite{jbvm2021lcss}. Our analysis provides, in the asymptotic sense, the exact characterization of coherent dynamics that can be used in control design for heterogeneous networks.
}

The paper is organized as follows. In Section \ref{sec:ptw_conv} and Section \ref{sec:unifm_conv} we present respectively the point-wise and uniform convergence results of network transfer matrix as the network connectivity increases. In Section \ref{sec:dymC}, the dynamics concentration in large-scale networks is discussed. In Section \ref{sec:exmples}, we apply our analysis to consensus networks and synchronous generator networks. Conclusions are presented in Section \ref{sec:conclusion}. 

\emph{Notation:}~For a vector $x$, $\|x\|=\sqrt{x^Tx}$ denotes the $2$-norm of $x$, and for a matrix $A$, $\sigma_i(A)$ denotes the $i$th smallest singular value of $A$, $\|A\|$ denotes the spectral norm of $A$. Particularly, if $A$ is real symmetric, we let $\lambda_i(A)$ denote the $i$th smallest eigenvalue of $A$. For two sets $S_1,S_2$, we let $S_1\setminus{S_2}$ denote the set difference.

We let $I_n$ denote the identity matrix of order $n$, $\one$ denote column vector $[1,\cdots,1]^T $, $[n]$ denote the set $\{1,2,\cdots,n\}$ and $\mathbb{N}_+$ denote the set of positive integers. Also, we write complex numbers as $a+jb$, where $j=\sqrt{-1}$.

\section{Problem Setup}\label{sec:prem}
Consider a network consisting of $n$ nodes ($n\geq 2$), indexed by $i\in[n]$ with the block diagram structure in Fig.\ref{blk_p_n}. $L$ is the Laplacian matrix of the weighted graph that describes the network interconnection. We further use $f(s)$ to denote the transfer function representing the dynamics of network coupling, and $G(s)=\mathrm{diag}\{g_i(s)\}$ to denote the nodal dynamics, with $g_i(s),\ i\in[n]$, being an SISO transfer function representing the dynamics of node $i$. 
\begin{figure}[ht]
    \centering
	\includegraphics[height=2.5cm]{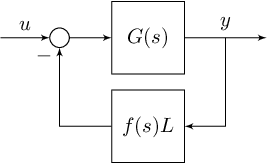}
	\caption{Block diagram of general networked dynamical systems}\label{blk_p_n}
\end{figure}

Under this setting, we can compactly express the transfer matrix from the input signal vector $u$ to the output signal vector $y$ by
\begin{align}
    T(s)&=\; (I_n+G(s)f(s)L)^{-1}G(s)\nonumber\\
    &=\; (I_n+\mathrm{diag}\{g_i(s)\}f(s)L)^{-1}\mathrm{diag}\{g_i(s)\}\,.\label{eq_T_explict}
\end{align}

Many existing networks can be represented by this structure. For example, for the first-order consensus network~\cite{Olfati-Saber20041520,Olfati-Saber2007}, $f(s)=1$, and the node dynamics are given by $g_i(s)=\frac{1}{s}$. For power networks~\cite{Andreasson17,Paganini2019tac}, $f(s)=\frac{1}{s}$, $g_i(s)$ are the dynamics of the generators, and $L$ is the Laplacian  matrix representing the sensitivity of power injection w.r.t. bus phase angles. Finally, in transportation networks~\cite{Jadbabaie2003988,Olfati-Saber2007}, $g_i(s)$ represent the vehicle dynamics whereas $f(s)L$ describes local inter-vehicle information transfer. 

Throughout this paper, we make the following assumptions, all of which are mild and commonly satisfied by several models that analyze the above-mentioned applications.
\begin{asmp}
    The closed-loop system \eqref{eq_T_explict} satisfies:
    \begin{enumerate}
        \item All $g_i(s),\ i=1,\cdots,n$ and $f(s)$ are rational proper transfer functions;
        \item Laplacian matrix $L$ is real symmetric;
        \item Any pole of $f(s)$ is not a zero of any of $g_i(s),\ i=1,\cdots,n$.
    \end{enumerate}
\end{asmp}

A straight forward application of the symmetry assumption in $L$ comes from its eigendecomposition 
\begin{equation}\label{laplcian_decomp}
    L=V\Lambda V^T\,,
\end{equation} where $V=\lhp\frac{\one}{\sqrt{n}},V_\perp\rhp$, $VV^T=V^TV=I_n$, and $\Lambda=\mathrm{diag}\{\lambda_i(L)\}$ with $0=\lambda_1(L)\leq\lambda_2(L)\leq \cdots\leq \lambda_n(L)$. 

Using now \eqref{laplcian_decomp} we can rewrite $T(s)$ as
\begin{align}
    T(s)&=\; (I_n+\mathrm{diag}\{g_i(s)\}f(s)L)^{-1}\mathrm{diag}\{g_i(s)\}\nonumber\\
    &=\; (\mathrm{diag}\{g^{-1}_i(s)\}+f(s)L)^{-1}\nonumber\\
    &=\; (\mathrm{diag}\{g^{-1}_i(s)\}+f(s)V\Lambda V^T)^{-1}\nonumber\\
    &=\; V(V^T\mathrm{diag}\{g^{-1}_i(s)\}V+f(s)\Lambda)^{-1}V^T\,.\label{eq_T_eigenform}
\end{align}

As mentioned before, we are interested in characterizing the behavior of the closed-loop system $T(s)$ of \eqref{eq_T_explict} as the connectivity of $L$, i.e. $\lambda_2(L)$, increases. To gain some insight we first consider the following simplified example.
\subsection{Motivating Example: Homogeneous Node Dynamics}

{\ifthenelse{\boolean{color}}{\color{blue}}{}Suppose $g_i(s)$ are homogeneous, i.e., $g_i(s)=g(s)$. Then using \eqref{eq_T_eigenform} one can decompose $T(s)$ as follows
\be
    T(s) = \frac{1}{n}g(s)\one\one^T+V_\perp\dg\lb \frac{1}{g^{-1}(s)+f(s)\lambda_i(L)}\rb_{i=2}^n V_\perp^T\,,\label{eq_T_homo_decomp}
\ee
where the network dynamics decouple into two terms: 1) the dynamics $\frac{1}{n}g(s)\one\one^T$ that is independent of network topology and corresponds to the coherent behavior of the system; 2) the remaining dynamics that are dependent on the network structure via both, the eigenvalues $\lambda_i(L), i=2,\cdots,n$ and the eigenvectors $V_\perp$. 
Notice that $|f(s)\lambda_2(L)|\leq|f(s)\lambda_i(L)|, i=2,\dots,n$, then $\frac{1}{n}g(s)\one\one^T$ is dominant in $T(s)$ as long as $|f(s)\lambda_2(L)|$ (later referred as \emph{effective algbraic connectivity}),  is large enough to make the norm of the second term in \eqref{eq_T_homo_decomp} sufficiently small. Following such observation, we can find two regimes where the coherent dynamics $\frac{1}{n}g(s)\one\one^T$ is dominant:
\begin{enumerate}
    \item (\emph{High network connectivity}) for almost every $s_0\in\compl$, except for the poles of $g(s)$, the following holds:
    \ben
        \lim_{\lambda_2(L)\ra \infty}\lV T(s_0)-\frac{1}{n}g(s_0)\one\one^T\rV=0\,.
    \een
    Furthermore, one can verify that if a compact set $S\subset \compl$ contains neither zeros nor poles of $g(s)$, the following holds:
    \ben
        \lim_{\lambda_2(L)\ra \infty}\sup_{s\in S}\lV T(s)-\frac{1}{n}g(s)\one\one^T\rV=0\,.
    \een
    \item (\emph{High gain in coupling dynamics}) If $s_0$ is a pole of $f(s)$, and $\lambda_2(L)\neq 0$, then
    $$\lim_{s\ra s_0}\lV T(s)-\frac{1}{n}g(s)\one\one^T\rV=0\,.$$
\end{enumerate}

Such convergence results suggest that if 1) the network has high algebraic connectivity, or 2) our point of interest in frequency domain is close to pole of $f(s)$, the response of the entire system is close to one of $\frac{1}{n}g(s)\one\one^T$. We refer $\frac{1}{n}g(s)\one\one^T$ as the coherent dynamics\footnote{We also refer $g(s)$ as the coherent dynamics since transfer matrix of the form $\frac{1}{n}g(s)\one\one^T$ is uniquely determined by its non-zero eigenvalue $g(s)$.} in the sense that in such system, the inputs are aggregated, and all nodes have exactly the same response to the aggregate input.
\emph{Therefore, coherence of the network corresponds, in the frequency domain, to  the property that the network's transfer matrix approximately having a particular rank-one structure, and thus coherence increases as we increase the network connectivity}, as depicted in Fig. \ref{fig_diagram_coherence}.
\begin{figure}[ht]
    \centering
    \includegraphics[width=8cm]{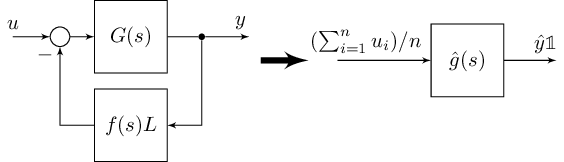}
    \caption{The network coherence can be understood as approximating (left) the closed-loop networks dynamics $T(s)$ by (right) a coherent dynamics governed by an SISO dynamics $\hat{g}(s)$.}
    \label{fig_diagram_coherence}
\end{figure}

The aforementioned analysis can be extended to the case with proportionality assumption, i.e., $g_i(s)=p_ig(s)$ for some $g(s)$ and $p_i>0,i=1,\cdots,n$, where one can still obtain decoupled dynamics through proper coordinate transformation~\cite{Paganini2019tac}. However, it is challenging to characterize the coherent dynamics without the proportionality assumption. Our work precisely aims at understanding the coherent dynamics of non-proportional heterogeneous networks.
\begin{rem}
    In this paper, we write most of our convergence results in the high connectivity regime as the limit of differences in norm when $\lambda_2(L)\ra \infty$ for simplicity. However, one does not require infinitely high connectivity to achieve coherence. These limits suggests, under sufficiently high connectivity, the transfer matrix $T(s)$ is, in some sense, close to coherent dynamics $\frac{1}{n}g(s)\one\one^T$. The precise non-asymptotic result is presented in Lemma \ref{lem_reg_norm_bd}. Moreover, notice that in the other regime around pole of $f(s)$, we only requires the network to be connected.
\end{rem}
}

\subsection{The Goal of This Work}\label{ssec:goal}
In this paper, we would like to show that even when $g_i(s)$ are heterogeneous, similar convergence result still holds. More precisely, we will, in the following sections, discuss the point-wise and uniform convergence of $T(s)$ to a transfer matrix of the form $\frac{1}{n}\bar{g}(s)\one\one^T$, as the effective algebraic connectivity $|f(s)\lambda_2(L)|$ increases. 
However, unlike the homogeneous node dynamics case where the coherent behavior is driven by $\bar g(s)=g(s)$, we will show that the coherent dynamics $\bar{g}(s)$ are given by the harmonic mean of $g_i(s),i=1,\cdots,n$, i.e.,
\be
    \bar{g}(s)=\lp \frac{1}{n}\sum_{i=1}^ng_i^{-1}(s)\rp^{-1}\,.\label{eq_g_bar}
\ee

Such asymptotic analyses under high connectivity serve two main purposes. 
Firstly, using the coherent dynamics $\bar{g}(s)$, one can infer the point-wise convergence of $T(s)$ as $\lambda_2(L)$ increases. In particular, we show that
\begin{enumerate}
    \item For a point that is neither a pole nor a zero of $\bar{g}(s)$, $T(s)$ converges to $\frac{1}{n}\bar{g}(s)\one\one^T$;
    \item Poles of $\bar{g}(s)$ are asymptotically poles of $T(s)$;
    \item Zeros of $\bar{g}(s)$ are asymptotically zeros of $T(s)$.
\end{enumerate}
Secondly, uniform convergence of $T(s)$ explains the coherent behavior/response of a tightly-connected network subject to disturbances. To see the connection, recall the Inverse Laplace Transform~\cite[Theorem 3.20]{Dullerud2013} computes the system time-domain response by integration on the line $\{\sigma+j\omega: \omega\in[-\infty,+\infty]\}$ in frequency domain with a suitable $\sigma$. Then uniform convergence of $T(s)$ on this line would show that time-domain response of the network converges to one of the coherent dynamics $\frac{1}{n}\bar{g}(s)\one\one^T$ as network connectivity increases. However, we will see that such convergence does not hold for most networks through our analysis and subsequent examples, which suggests that \emph{the coherence we analyze here is a frequency-dependent phenomenon}. On that note, one generally resort to weaker convergence results on a line segment $\{\sigma+j\omega: \omega\in[-\omega_c,+\omega_c]\}$ for some $\omega_c>0$, which, once established, justifies the coherent behavior of tightly-connected networks subject to low-frequency disturbances (see Section \ref{ssec:to_time}). Since the set above is a compact subset of $\compl$, we are mostly discussing the uniform convergence results over compact sets. 
{\ifthenelse{\boolean{color}}{\color{blue}}{}Moreover, we also show the convergence in the other regime: $T(s)$ is coherent around the pole of $f(s)$. This suggests that the network responses coherently, when subject to input signals that has its frequency component concentrate around pole of $f(s)$, and we show such an example in power networks in Section \ref{sec:exmples}.}

One additional feature of our analysis is that it can be further applied in settings where the composition of the network is unknown and only distributional information is present. More precisely, we will extend such convergence results by considering a tightly-connected network where node dynamics are given by random transfer functions. As the network size grows, the coherent dynamics $\bar{g}(s)$ converges in probability to a deterministic transfer function. We term such a phenomenon, where a family of uncertain large-scale tightly-connected systems concentrates to a common deterministic system, \emph{dynamics concentration}.

{\ifthenelse{\boolean{color}}{\color{blue}}{} 
\begin{rem}
    In order for the frequency domain analysis provided to be meaningful, it is necessary that both the transfer functions $T(s)$ and $T(s)-\one\one^T\bar g(s)$ are stable. Since our primary focus is on the interpretation of the frequency domain results, we are largely working under the tacit assumption that these transfer functions are stable whenever required. Some simple passivity motivated criteria that ensure stability even as $\lambda_2(L)$ becomes arbitrarily large are given in Theorem~\ref{thm_ptw_conv_pole}. It should also be noted that there exist a range of scalable stability criteria in the literature that can be used to guarantee internal stability of the feedback setup in Figure~\ref{fig_diagram_coherence}. Perhaps the most well known is that if each $g_i(s)$ is \textit{strictly positive real}, and $f(s)$ is \textit{positive real}, then the transfer functions $\bar{g}(s)$ and
    \[
    \begin{bmatrix}
    G(s)\\I
    \end{bmatrix}\left(I+f(s)LG(s)\right)^{-1}\begin{bmatrix}
    f(s)L&I
    \end{bmatrix}
    \]
    are stable (see e.g. \cite{MD95}). Alternative approaches that can be easily adapted to our framework that give criteria that allow for different classes of transfer functions include \cite{LV06,JK10,pm19tcns}.
\end{rem}}

\subsection{Preliminaries}
Before presenting our results, we
first state a few facts and preliminary results that are used in later proofs.
\subsubsection{Basic Results on Vectors and Matrices}
The following results can be found in~\cite{Horn:2012:MA:2422911}.

\begin{itemize}
    \item \emph{Norm Inequalities:} Let $x,y\in \mathbb{R}^n$ and $A,B\in\mathbb{R}^{m\times n}$, we have
    \begin{align}
        &\; \text{2-Norm compatibility:} & \|Ax\|\leq \|A\|\|x\|\,,\label{eq_norm_comp}\\
        &\; \text{2-Norm sub-multiplicativity:} & \|AB\|\leq \|A\|\|B\|\,,\label{eq_norm_sub_multi}\\
        &\; \text{Cauchy-Schwarz:} & |x^Ty|\leq\|x\|\|y\|\label{eq_cs_ineq}\,.
    \end{align}
    \item \emph{Inverse of Block Matrix:} For block matrix $M=\bmt A&B\\ C&D\emt$ with $A,D$ being square matrices, its inverse can be written as
    \be
        M^{-1}=\bmt S^{-1}& -S^{-1}BD^{-1}\\
        -D^{-1}CS^{-1} & D^{-1}+D^{-1}CS^{-1}BD^{-1}\emt\,,\label{eq_blk_mat_inv}
    \ee
    where $S=A-BD^{-1}C$, provided that all relevant inverses exist.
\end{itemize}

\subsubsection{Inequalities for singular values}
We also provide several inequalities for matrix singular values from~\cite[7.3.P16]{Horn:2012:MA:2422911} which will be used in our proofs.
\begin{lem}[Weyl's Inequality]\label{lem_sing_val_ineq}
    Let $A,B$ be square matrices of order $n$, the following inequalities hold:
    \begin{subequations}
    \begin{align}
        \|AB\|&\geq\; \|A\|\sigma_1(B)\label{eq_sv_prod_lb}\,,\\
        \sigma_1(AB)&\leq\; \|A\|\sigma_1(B)\,,\label{eq_sv_prod_ub}\\
        \sigma_1(A+B)&\leq\; \sigma_1(A)+\|B\|\,. \label{eq_weyl_ineq_sv}
    \end{align}
    \end{subequations}
\end{lem}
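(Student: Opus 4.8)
The plan is to derive all three inequalities from the two standard variational characterizations of singular values that follow from the SVD: for any square matrix $M$, the smallest singular value satisfies $\sigma_1(M)=\min_{\|x\|=1}\|Mx\|$, while the spectral norm satisfies $\|M\|=\max_{\|x\|=1}\|Mx\|$ (the largest singular value). Combined with the already-stated norm inequalities \eqref{eq_norm_comp}--\eqref{eq_norm_sub_multi}, these reduce each bound to a short computation. I would first record (or simply cite) these two characterizations, since every step below amounts to evaluating a min- or max-expression at a cleverly chosen unit vector.

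For \eqref{eq_weyl_ineq_sv}, I would let $x^\star$ be a unit vector attaining $\|Ax^\star\|=\sigma_1(A)$. Since $\sigma_1(A+B)$ is the minimum of $\|(A+B)x\|$ over unit $x$, evaluating at $x^\star$ gives $\sigma_1(A+B)\le\|(A+B)x^\star\|\le\|Ax^\star\|+\|Bx^\star\|$ by the triangle inequality, and then $\|Bx^\star\|\le\|B\|$ by 2-norm compatibility \eqref{eq_norm_comp}, yielding the claim. The proof of \eqref{eq_sv_prod_ub} is analogous: letting $x^\star$ now attain $\|Bx^\star\|=\sigma_1(B)$, I evaluate the min defining $\sigma_1(AB)$ at $x^\star$ and use compatibility to get $\sigma_1(AB)\le\|ABx^\star\|\le\|A\|\,\|Bx^\star\|=\|A\|\sigma_1(B)$.

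The remaining bound \eqref{eq_sv_prod_lb} is the one that requires a different trick, and I expect it to be the main obstacle: since $\|AB\|$ is a \emph{maximum}, I cannot simply evaluate it at a convenient point without producing that point explicitly. I would split on whether $B$ is singular. If $\sigma_1(B)=0$ the right-hand side vanishes and the inequality is trivial. Otherwise $B$ is invertible, and the singular values of $B^{-1}$ are the reciprocals of those of $B$, so $\|B^{-1}\|=1/\sigma_1(B)$; applying submultiplicativity \eqref{eq_norm_sub_multi} to the factorization $A=(AB)B^{-1}$ gives $\|A\|\le\|AB\|\,\|B^{-1}\|=\|AB\|/\sigma_1(B)$, which rearranges to the claim. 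Equivalently, one can take a right singular vector $w$ of $A$ attaining $\|Aw\|=\|A\|$, set $x=B^{-1}w/\|B^{-1}w\|$, and check that $\|ABx\|=\|A\|/\|B^{-1}w\|\ge\|A\|\,\sigma_1(B)$, so that $\|AB\|\ge\|ABx\|$ gives the bound. The only genuine care needed is the degenerate singular case and the identity $\|B^{-1}\|=1/\sigma_1(B)$; everything else is immediate from the variational formulas and the norm inequalities already available.
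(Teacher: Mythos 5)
Your proposal is correct in all three parts. Note that the paper itself offers no proof of this lemma --- it is stated with a citation to Horn and Johnson (7.3.P16) --- so there is no in-paper argument to compare against; you have supplied a self-contained derivation where the authors rely on a reference. Your two variational characterizations, $\sigma_1(M)=\min_{\|x\|=1}\|Mx\|$ and $\|M\|=\max_{\|x\|=1}\|Mx\|$ for square $M$, are the standard route, and the evaluations at well-chosen unit vectors for \eqref{eq_sv_prod_ub} and \eqref{eq_weyl_ineq_sv} are exactly right. For \eqref{eq_sv_prod_lb} you correctly identified that the maximum cannot be evaluated pointwise without producing a witness, and both of your workarounds are sound: the case split on $\sigma_1(B)=0$ disposes of the degenerate case, the identity $\|B^{-1}\|=1/\sigma_1(B)$ for invertible $B$ follows from the SVD, and submultiplicativity applied to $A=(AB)B^{-1}$ closes the argument. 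No gaps.
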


Lemma \ref{lem_sing_val_ineq} allows us to obtain a useful bound on the spectral norm of $(A+B)^{-1}$. We state it as another Lemma as it will be repeatedly used in sections \ref{sec:ptw_conv} and \ref{sec:unifm_conv}:
\begin{lem}\label{lem_bd_norm_mat_inv}
    Let $A,B$ be square matrices of order $n$. If $\sigma_1(A)\geq \|B\|>0$, then the following inequality holds:
    \ben
        \|(A+B)^{-1}\| \leq\frac{1}{\sigma_1(A)-\|B\|}\,.
    \een
\end{lem}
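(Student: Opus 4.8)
The plan is to reduce everything to a lower bound on the smallest singular value of $A+B$, since for any nonsingular matrix $M$ the spectral norm of its inverse equals the reciprocal of its smallest singular value, $\|M^{-1}\|=1/\sigma_1(M)$ (the singular values of $M^{-1}$ are precisely the reciprocals of those of $M$, so the largest one, $\|M^{-1}\|$, is the reciprocal of the smallest singular value of $M$). Thus, once $A+B$ is shown to be nonsingular, bounding $\|(A+B)^{-1}\|$ amounts to bounding $\sigma_1(A+B)$ from below.

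First I would establish the reverse triangle-type inequality $\sigma_1(A+B)\geq \sigma_1(A)-\|B\|$. This follows directly from the one-sided Weyl inequality \eqref{eq_weyl_ineq_sv} by a relabeling. Writing $A=(A+B)+(-B)$ and applying \eqref{eq_weyl_ineq_sv} to the pair of summands $A+B$ and $-B$ gives $\sigma_1(A)\leq \sigma_1(A+B)+\|-B\|=\sigma_1(A+B)+\|B\|$, which rearranges to the claimed bound.

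Next I would invoke the hypothesis $\sigma_1(A)\geq\|B\|>0$. When $\sigma_1(A)>\|B\|$, the previous step yields $\sigma_1(A+B)\geq \sigma_1(A)-\|B\|>0$, so $A+B$ has no zero singular value and is therefore nonsingular. Combining this with the identity $\|(A+B)^{-1}\|=1/\sigma_1(A+B)$ gives
$$\|(A+B)^{-1}\|=\frac{1}{\sigma_1(A+B)}\leq\frac{1}{\sigma_1(A)-\|B\|},$$
which is the desired estimate. In the boundary case $\sigma_1(A)=\|B\|$ the right-hand side is $+\infty$ and the inequality holds trivially whenever the inverse is taken to exist.

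The main obstacle, insofar as there is one, is the first step: the lemma supplies only the one-directional inequality \eqref{eq_weyl_ineq_sv}, so the key idea is to apply it to the pair $(A+B,-B)$ rather than to $(A,B)$ directly, converting an upper bound on $\sigma_1$ into the lower bound on $\sigma_1(A+B)$ that drives the whole argument. Everything else is routine bookkeeping with the identity $\|M^{-1}\|=1/\sigma_1(M)$.
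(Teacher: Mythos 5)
Your proposal is correct and follows essentially the same route as the paper's proof: apply the Weyl-type inequality \eqref{eq_weyl_ineq_sv} to the decomposition $A=(A+B)+(-B)$ to get $\sigma_1(A)\leq\sigma_1(A+B)+\|B\|$, rearrange, and use $\|(A+B)^{-1}\|=1/\sigma_1(A+B)$. Your additional remarks on nonsingularity and the boundary case $\sigma_1(A)=\|B\|$ are sound details the paper leaves implicit.
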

\begin{proof}
    By \eqref{eq_weyl_ineq_sv}, we have
    \ben
        \sigma_{1}(A)\leq \sigma_{1}(A+B)+\|-B\|\,.
    \een
    Then as long as $\sigma_1(A)\geq \|B\|>0$, it leads to 
    \ben
        \frac{1}{\sigma_1(A+B)}\leq \frac{1}{\sigma_1(A)-\|B\|}\,,
    \een
    and the left-hand side is exactly $\|(A+B)^{-1}\|$.
\end{proof}

\subsubsection{Grounded Laplacian Matrix}\label{app_grd_lap}
For a $n\by n$ Laplacian matrix $L$, we select an index set $I\subset [n]$. Then the grounded Laplacian $\Tilde{L}$ is the principal submatrix of $L$ obtained by removing the rows and columns corresponding to the index set $I$. The following lemma relates the eigenvalues of $\Tilde{L}$ and $L$.
\begin{lem}\label{lem_grd_Lap_eig_bd}
    Given a $n\by n$ symmetric Laplacian matrix $L$, let $\Tilde{L}$ be its grounded Laplacian corresponding to a index set $\mathcal{I}$ with $|\mathcal{I}|=m<n$. Then for the least eigenvalue of $\Tilde{L}$, the following inequality holds:
    \begin{equation*}
        \lambda_1(\Tilde{L})\geq \frac{m}{n}\lambda_2(L)\,.
    \end{equation*}
\end{lem}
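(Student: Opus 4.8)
The plan is to prove the bound through the variational (Courant--Fischer) characterization of the smallest eigenvalue, exploiting the fact that $\Tilde{L}$ is a principal submatrix of $L$. First I would write $\lambda_1(\Tilde{L})=\min_{z\neq 0}\frac{z^T\Tilde{L}z}{z^Tz}$, where $z$ ranges over $\mathbb{R}^{n-m}$. The key observation is that any such $z$ extends to a vector $x\in\mathbb{R}^n$ by inserting zeros in the positions indexed by $\mathcal{I}$; since the entries of $L$ coupling $\mathcal{I}$ to the remaining nodes are exactly the rows and columns that were deleted to form $\Tilde{L}$, the quadratic form is preserved, $z^T\Tilde{L}z=x^TLx$, and $\|z\|=\|x\|$. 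Hence $\lambda_1(\Tilde{L})=\min\{x^TLx/\|x\|^2:\,x\neq 0,\ x_i=0\ \forall i\in\mathcal{I}\}$, which reduces the problem to lower-bounding the Rayleigh quotient of $L$ over vectors supported on the $n-m$ non-grounded coordinates.

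Next I would invoke the spectral structure of $L$. Decomposing $x=\frac{\one^Tx}{n}\one+x_\perp$ with $x_\perp\perp\one$, and using $L\one=0$ together with the eigendecomposition \eqref{laplcian_decomp}, one obtains $x^TLx=x_\perp^TLx_\perp\geq \lambda_2(L)\|x_\perp\|^2$, since $\lambda_2(L)$ is the smallest eigenvalue of $L$ restricted to the orthogonal complement of its kernel direction $\one$. Because $\|x_\perp\|^2=\|x\|^2-\frac{(\one^Tx)^2}{n}$, this yields
\[
    \frac{x^TLx}{\|x\|^2}\geq \lambda_2(L)\left(1-\frac{(\one^Tx)^2}{n\|x\|^2}\right).
\]

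Finally, the factor $m/n$ emerges from the support constraint. Since $x_i=0$ for every $i\in\mathcal{I}$, the inner product $\one^Tx=\sum_{i\notin\mathcal{I}}x_i$ is a sum of only $n-m$ nonzero terms, so Cauchy--Schwarz \eqref{eq_cs_ineq} gives $(\one^Tx)^2\leq (n-m)\|x\|^2$, whence $\frac{(\one^Tx)^2}{n\|x\|^2}\leq\frac{n-m}{n}$. Substituting back gives $\frac{x^TLx}{\|x\|^2}\geq \lambda_2(L)\cdot\frac{m}{n}$ uniformly over all admissible $x$, and taking the minimum over $x$ proves the claim. The only real subtlety---what I would call the crux rather than a true obstacle---is recognizing that grounding $m$ coordinates forces the admissible vectors to have limited overlap with the kernel direction $\one$; once this overlap is quantified by Cauchy--Schwarz, the remainder is a direct Rayleigh-quotient estimate.
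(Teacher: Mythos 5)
Your proof is correct and follows essentially the same route as the paper's: the paper invokes Fiedler's result that $L-\lambda_2(L)\left(I-\tfrac{1}{n}\one\one^T\right)$ is positive semidefinite and evaluates the quadratic form at the zero-padded eigenvector of $\Tilde{L}$, which is exactly your Rayleigh-quotient bound $x^TLx\geq\lambda_2(L)\|x_\perp\|^2$ applied at the minimizer, followed by the same Cauchy--Schwarz estimate $(\one^Tx)^2\leq(n-m)\|x\|^2$ on the $(n-m)$-supported vector. The only difference is that you derive the key semidefiniteness inequality from the eigendecomposition rather than citing it.
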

\ifthenelse{\boolean{archive}}{The proof is shown in the Appendix.\ref{app_proof_lem_grd_Lap_eig_bd}.}{We refer the readers to~\cite{min2021a} for the proof.} This lower bound shows that for weighted graphs with fixed network size $n$, as $\lambda_2(L)\ra \infty$, we also have $\lambda_1(\Tilde{L})\ra \infty$. This result is used in section \ref{sec:ptw_conv} and \ref{sec:unifm_conv} when we present the convergence analysis regarding zeros of $\bar{g}(s)$.

Now we are ready for the main results of this paper, and we start with the exact characterization of coherent dynamics of tightly-connected networks by proving the point-wise convergence of $T(s)$.

\section{Point-wise Coherence}\label{sec:ptw_conv}
    
In this section, we analyze the strength of network coherence at a single point in the frequency domain. We start with an important lemma revealing how such coherence is related to algebraic connectivity $\lambda_2(L)$ and feedback dynamics $f(s)$. 

\begin{lem}\label{lem_reg_norm_bd}
    Let $T(s)$ and $\bar{g}(s)$ be defined as in \eqref{eq_T_explict} and \eqref{eq_g_bar}, respectively. Suppose that for $s_0\in\compl$ that is not a pole of $f(s)$, we have $$|\bar{g}(s_0)|\leq M_1, \text{and }\max_{1\leq i\leq n}|g_i^{-1}(s_0)|\leq M_2\,,$$ for some $M_1,M_2>0$. Then for large enough $\lambda_2(L)$, the following inequality holds:
    \be
        \lV T(s_0)-\frac{1}{n}\bar{g}(s_0)\one\one^T\rV\leq \frac{\lp M_1M_2+1\rp^2}{|f(s_0)|\lambda_2(L)-M_2-M_1M_2^2}\,.\label{eq_T_norm_bd}
    \ee
\end{lem}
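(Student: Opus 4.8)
The plan is to pass to the eigenbasis of $L$, where the coherent direction $\one/\sqrt{n}$ becomes a single scalar coordinate that decouples from the rest, and then to show that the remaining $(n-1)$-dimensional block is uniformly small once the connectivity is large. Starting from \eqref{eq_T_eigenform}, abbreviate $D:=\mathrm{diag}\{g_i^{-1}(s_0)\}$ and $f:=f(s_0)$, and conjugate by the orthogonal matrix $V=[\,\one/\sqrt{n},\,V_\perp\,]$ from \eqref{laplcian_decomp}. Because $V$ is orthogonal it preserves the spectral norm, and since $V^T\big(\tfrac1n\one\one^T\big)V=e_1e_1^T$ with $e_1$ the first standard basis vector, the quantity to be bounded becomes
\ben
    \lV T(s_0)-\tfrac1n\bar{g}(s_0)\one\one^T\rV=\lV W^{-1}-P\rV,
\een
where $W:=V^TDV+f\Lambda$ and $P:=\mathrm{diag}\{\bar{g}(s_0),0,\dots,0\}$.

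Next I would partition $W$ compatibly with the splitting $\{1\}\cup\{2,\dots,n\}$ as $W=\bmt A&B\\ C&D_b\emt$. The crucial algebraic fact is that the scalar $(1,1)$ block is exactly the inverse harmonic mean, $A=\tfrac1n\one^TD\one=\bar{g}^{-1}(s_0)$ by \eqref{eq_g_bar}, while $B=\tfrac1{\sqrt n}\one^TDV_\perp$, $C=B^T$, and $D_b=V_\perp^TDV_\perp+f\Lambda_\perp$ with $\Lambda_\perp=\mathrm{diag}\{\lambda_2(L),\dots,\lambda_n(L)\}$. Applying the block-inverse formula \eqref{eq_blk_mat_inv} and subtracting $P=\mathrm{diag}\{A^{-1},0,\dots,0\}$ writes the four error blocks in terms of $D_b^{-1}$ and the Schur complement $S:=A-BD_b^{-1}C$: the $(1,1)$ error is $S^{-1}-A^{-1}=(BD_b^{-1}C)/(AS)$, the off-diagonal errors are $-S^{-1}BD_b^{-1}$ and $-D_b^{-1}CS^{-1}$, and the $(2,2)$ error is $D_b^{-1}+D_b^{-1}CS^{-1}BD_b^{-1}$.

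The estimates then rest on three ingredients. First, $\lV D\rV\le M_2$ gives $\lV B\rV=\lV C\rV\le M_2$ and $\lV V_\perp^TDV_\perp\rV\le M_2$; since $\sigma_1(f\Lambda_\perp)=|f|\lambda_2(L)$, Lemma \ref{lem_bd_norm_mat_inv} yields $\lV D_b^{-1}\rV\le(|f|\lambda_2(L)-M_2)^{-1}$ as soon as $|f|\lambda_2(L)>M_2$. Second, $|A^{-1}|=|\bar{g}(s_0)|\le M_1$ and $|A|\ge M_1^{-1}$, so $|S|\ge M_1^{-1}-M_2^2\lV D_b^{-1}\rV$, and therefore
\ben
    |S^{-1}|\le\frac{M_1\,(|f|\lambda_2(L)-M_2)}{|f|\lambda_2(L)-M_2-M_1M_2^2}
\een
once $|f|\lambda_2(L)>M_2+M_1M_2^2$; this is precisely the step that introduces the $-M_1M_2^2$ term into the claimed denominator. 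Third, I would bound $\lV W^{-1}-P\rV$ by the sum of the four block norms and substitute the estimates above. The key simplification is the cancellation $\lV D_b^{-1}\rV\,|S^{-1}|\le M_1/(|f|\lambda_2(L)-M_2-M_1M_2^2)$, which controls the $(1,1)$ and off-diagonal errors by $M_1^2M_2^2$, $M_1M_2$, and $M_1M_2$ over the common denominator, while grouping the $(2,2)$ error as $\lV D_b^{-1}\rV\big(1+M_2^2\lV D_b^{-1}\rV\,|S^{-1}|\big)$ contributes the final $1$. Summing gives the numerator $M_1^2M_2^2+2M_1M_2+1=(M_1M_2+1)^2$ of \eqref{eq_T_norm_bd}.

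The main obstacle is the control of the Schur complement $S$: one must certify that it remains invertible and bound $|S^{-1}|$ uniformly, and it is exactly this bound that forces the additional $-M_1M_2^2$ in the denominator and hence the ``large enough $\lambda_2(L)$'' hypothesis, namely $|f(s_0)|\lambda_2(L)>M_2+M_1M_2^2$ (which also requires $f(s_0)\neq0$). The secondary subtlety is purely organizational: the deceptively clean numerator $(M_1M_2+1)^2$ appears only if the $(2,2)$ error is grouped as above rather than bounded term-by-term, so that every contribution shares the single denominator coming from the $|S^{-1}|$ estimate.
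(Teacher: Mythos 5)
Your proposal is correct and follows essentially the same route as the paper's proof: conjugate by $V$ to reduce to $\|H^{-1}-\bar g(s_0)e_1e_1^T\|$, partition along $\{1\}\cup\{2,\dots,n\}$ so the $(1,1)$ entry is $\bar g^{-1}(s_0)$, apply the block-inverse formula \eqref{eq_blk_mat_inv}, and combine the three bounds $\|h_{21}\|\le M_2$, $\|H_{22}^{-1}\|\le(|f(s_0)|\lambda_2(L)-M_2)^{-1}$, and the Schur-complement estimate to assemble the numerator $(M_1M_2+1)^2$. Your grouping of the $(2,2)$ error so that it contributes exactly $1$ over the common denominator matches the algebraic simplification the paper performs in \eqref{eq_Hinv_norm_bd2}.
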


\begin{proof}
    Let $H=V^T\dg\{g_i^{-1}(s_0)\}V+f(s_0)\Lambda$, such that 
    \eqref{eq_T_eigenform} becomes\[
    T(s) = VH^{-1}V^T.
    \]
    Then it is easy to see that
    \begin{align}
        \lV T(s_0)-\frac{1}{n}\bar{g}(s_0)\one\one^T\rV&=\; \|T(s_0)-\bar{g}(s_0)Ve_1e_1^TV^T\|\nonumber\\
        &=\; \lV V\lp H^{-1}-\bar{g}(s_0)e_1e_1^T\rp V^T\rV\nonumber\\
        &=\; \lV H^{-1}-\bar{g}(s_0)e_1e_1^T \rV\,,\label{eq_T_H_norm_equiv}
    \end{align}
    where $e_1$ is the first column of identity matrix $I_n$. The first equality holds by noticing that $\frac{\one}{\sqrt{n}}$ is the first column of $V$, and the last equality comes from the fact that multiplying by a unitary matrix $V$ preserves the spectral norm.
    
    We now write $H$ in block matrix form:
    \begin{align*}
        H&=\;V^T\dg\{g_i^{-1}(s_0)\}V+f(s_0)\Lambda\\
        &= \bmt
            \frac{\one^T}{\sqrt{n}}\\
            V_\perp^T
        \emt \dg\{g_i^{-1}(s_0)\} \bmt
        \frac{\one}{\sqrt{n}}& V_\perp\emt+f(s_0)\Lambda\\
        &= {\small\bmt
        \bar{g}^{-1}(s_0)& \frac{\one^T}{\sqrt{n}}\dg\{g_i^{-1}(s_0)\}V_\perp\\
        V_\perp^T\dg\{g_i^{-1}(s_0)\}\frac{\one}{\sqrt{n}} &V_\perp^T\dg\{g_i^{-1}(s_0)\}V_\perp+f(s_0)\Tilde{\Lambda}
        \emt}\\
        &:= \bmt
        \bar{g}^{-1}(s_0)& h^T_{21}\\
        h_{21} & H_{22}
        \emt\,,
    \end{align*}
    where  $\Tilde{\Lambda}=\dg\{\lambda_2(L),\cdots,\lambda_n(L)\}$, and we use the fact that $\lambda_1(\Lambda)=0$.
    
    Inverting $H$ in its block form as in \eqref{eq_blk_mat_inv}, we have
    \ben
        H^{-1} = \bmt
        a &-ah_{21}^TH_{22}^{-1}\\
        -aH_{22}^{-1}h_{21}& H_{22}^{-1}+aH_{22}^{-1}h_{21}h_{21}^TH_{22}^{-1}
        \emt\,,
    \een
    where $a = \frac{1}{\bar{g}^{-1}(s_0)-h_{21}^TH_{22}^{-1}h_{21}}$.
    
    Notice that $||V_\perp||=1$ and $||\one||=\sqrt{n}$, we have
    \begin{align}
        \|h_{21}\|&=\; \lV V_\perp^T\dg\{g_i^{-1}(s_0)\}\frac{\one}{\sqrt{n}}\rV\nonumber\\
        &\leq\; \|V_\perp\|\|\dg\{g_i^{-1}(s_0)\}\|\frac{\|\one\|}{\sqrt{n}}\leq M_2\,,\label{eq_h12_norm_bd}
    \end{align}
    where \eqref{eq_h12_norm_bd} follows from the norm compatibility \eqref{eq_norm_comp} and that matrix 2-norm is sub-multiplicative \eqref{eq_norm_sub_multi}.
    
    Also, by Lemma \ref{lem_bd_norm_mat_inv}, when $|f(s_0)|\lambda_2(L)>M_2$, the following holds:
    \begin{align}
        \|H_{22}^{-1}\|&=\;\|(f(s_0)\Tilde{\Lambda}+ V_\perp^T\dg\{g_i^{-1}(s_0)\}V_\perp)^{-1}\|\nonumber\\
        &\leq\; \frac{1}{\sigma_1(f(s_0)\Tilde{\Lambda})-\|V_\perp^T\dg\{g_i^{-1}(s_0)\}V_\perp\|}\nonumber\\
        &\leq\; \frac{1}{\sigma_1(f(s_0)\Tilde{\Lambda})-M_2}\leq \frac{1}{|f(s_0)|\lambda_2(L)-M_2} \,.\label{eq_H22_norm_bd}
    \end{align}
    Again \eqref{eq_H22_norm_bd} uses the fact that $\|V_\perp\|=1$, the function $\frac{1}{y-x}$ is decreasing in $y$ and increasing in $x$, and, by our assumption, $\|\dg\{g_i^{-1}(s_0)\}\|=\max_{1\leq i\leq n}|g_i^{-1}(s_0)|\leq M_2$.
    
    Lastly, when $|f(s_0)|\lambda_2(L)>M_2+M_2^2M_1$, a similar reasoning as above, using \eqref{eq_h12_norm_bd} \eqref{eq_H22_norm_bd}, and our assumption $|\bar{g}(s_0)|\leq M_1$, gives
    \begin{align}
        |a|&\leq\; \frac{1}{|\bar{g}^{-1}(s_0)|-|h_{21}^TH_{22}^{-1}h_{21}|}\nonumber\\
        &\leq\; \frac{1}{|\bar{g}^{-1}(s_0)|-\|h_{21}\|^2\|H_{22}^{-1}\|}\nonumber\\
        &\leq\; \frac{1}{\frac{1}{M_1}-\frac{M_2^2}{|f(s_0)|\lambda_2(L)-M_2}}\nonumber\\
        &= \;
        \frac{(|f(s_0)|\lambda_2(L)-M_2)M_1}{|f(s_0)|\lambda_2(L)-M_2-M_1M_2^2}\,,\label{eq_a_norm_bd}
    \end{align}
    where in the second inequality, we used norm compatibility and Cauchy-Schwarz inequality \eqref{eq_cs_ineq} to upper-bound $|h_{21}^TH_{22}h_{21}|$.
    
    Now we bound the norm of $H^{-1}-\bar{g}(s_0)e_1e_1^T$ by the sum of norms of all its blocks:
    \begin{align}
        &\;\|H^{-1}-\bar{g}(s_0)e_1e_1^T\|\nonumber\\
        =&\; \lV \bmt
        a\bar{g}(s_0)h_{21}^TH_{22}^{-1}h_{21} &-ah_{21}^TH_{22}^{-1}\\
        -aH_{22}^{-1}h_{21}& H_{22}^{-1}+aH_{22}^{-1}h_{21}h_{21}^TH_{22}^{-1}
        \emt\rV\nonumber\\
        \leq &\; |a\bar{g}(s_0)h_{21}^TH_{22}^{-1}h_{21}|+2\|aH_{22}^{-1}h_{21}\|\nonumber\\
        &\; +\|H_{22}^{-1}+aH_{22}^{-1}h_{21}h_{21}^TH_{22}^{-1}\|\nonumber\\
        \leq &\; |a|\|H_{22}^{-1}\|(|\bar{g}(s_0)|\|h_{21}\|^2+2\|h_{21}\|+\|h_{21}\|^2\|H_{22}^{-1}\|)\nonumber\\
        &\;\quad\quad +\|H_{22}^{-1}\|\,,\label{eq_Hinv_norm_bd1}
    \end{align}
    Using \eqref{eq_h12_norm_bd}\eqref{eq_H22_norm_bd}\eqref{eq_a_norm_bd}, we can further upper bound \eqref{eq_Hinv_norm_bd1} as
    \begin{align}
        &\;\|H^{-1}-\bar{g}(s_0)e_1e_1^T\|\nonumber\\
        \leq &\;  \frac{M_1^2M_2^2+2M_1M_2+\frac{M_1M_2^2}{|f(s_0)|\lambda_2(L)-M_2}}{|f(s_0)|\lambda_2(L)-M_2-M_1M_2^2}\nonumber\\
        &\;\qquad+\frac{1}{|f(s_0)|\lambda_2(L)-M_2}\nonumber\\
        =&\;\frac{\lp M_1M_2+1\rp^2}{|f(s_0)|\lambda_2(L)-M_2-M_1M_2^2}\,.\label{eq_Hinv_norm_bd2}
    \end{align}
    This bound holds as long as $|f(s_0)|\lambda_2(L)>M_2+M_2^2M_1$. Combining \eqref{eq_T_H_norm_equiv} and \eqref{eq_Hinv_norm_bd2} gives the desired inequality.
\end{proof}

{\ifthenelse{\boolean{color}}{\color{blue}}{}Lemma \ref{lem_reg_norm_bd} provides an upper bound for the incoherence measure we are interested in, namely how far apart the system transfer matrix is, at a particular point in the frequency domain, from being rank-one with coherent direction $\frac{1}{n}\one\one^T$. Notice that this incoherence measure also provide upper bounds for
\begin{align*}
    \max_{ij}|T_{ij}(s_0)-\bar{g}(s_0)|&\leq\; \lV T(s_0)-\frac{1}{n}\bar{g}(s_0)\one\one^T\rV\,\\
    \max_{i,j,k,l}|T_{ij}(s_0)-T_{kl}(s_0)|&\leq\; 2\lV T(s_0)-\frac{1}{n}\bar{g}(s_0)\one\one^T\rV\,,
\end{align*}
thus the transfer function from any input channel to any node output is approximately $\bar{g}(s)$, if the incoherence measure is small.

We make following additional remarks:

Lemma 4 provides a non-asymptotic rate for our incoherence measure
\begin{equation}\label{eq:non-asymp-rate}
    \lV T(s_0)-\frac{1}{n}\bar{g}(s_0)\one\one^T\rV\sim \mathcal{O}\lp \frac{M_1^2M_2^2}{|f(s_0)|\lambda_2(L)}\rp\,.
\end{equation}
A large value of $|f(s_0)|\lambda_2(L)$ is sufficient to have the incoherence measure small, and we term this quantity as \emph{effective algebraic connectivity}. We see that there are two possible ways to achieve such point-wise coherence: Either we increase the network algebraic connectivity $\lambda_2(L)$, by adding edges to the network and increasing edge weights, etc., or we move our point of interest $s_0$ to a pole of $f(s)$. This point-wise coherence via effective connectivity provides the basis of our subsequent analysis. Moreover, such coherence could be achieved by practical networks, and in Section \ref{sec:exmples}, we apply our results to understand the coherent response of power generator networks.}

Secondly, the upper bound is frequency-dependent since it is provided at a single point $s_0$ in the $s$-domain. To see such dependence, notice that $s_0$ near a pole of $f(s)$ has large effective algebraic connectivity, hence the system is more coherent around poles of $f(s)$; On the contrary, $s_0$ near a pole of $\bar{g}(s)$ requires large $M_1$ for the condition of Lemma \ref{lem_reg_norm_bd} to hold, and readers can check that $s_0$ near a zero of $\bar{g}(s)$ requires large $M_2$, therefore for at these points, it is more difficult for us to upper bound the incoherence measure by Lemma \ref{lem_reg_norm_bd}. Such dependence makes it challenging to understand the network coherence uniformly in the entire frequency domain.
    
Last but not least, Although Lemma \ref{lem_reg_norm_bd} provides a sufficient condition for the network coherence to emerge, i.e. the increasing effective algebraic connectivity, it is still unknown whether such a condition is necessary. In other words, we do not know whether low effective algebraic connectivity means some kind of incoherence. This problem seems trivial for the extreme case: if $|f(s_0)|=0$ or $L=0$, the feedback loop vanishes, and every node responses independently, but certainly not otherwise.
    
When the condition in Lemma \ref{lem_reg_norm_bd} is satisfied, the system is asymptotically coherent, i.e. $T(s_0)$ converges to $\frac{1}{n}\bar{g}(s_0)$ as the effective algebraic connectivity $|f(s_0)|\lambda_2(L)$ increases. As we mentioned above, we can achieve this by increasing either $\lambda_2(L)$ or $|f(s_0)|$, provided that the other value is fixed and non-zero. Subsection \ref{ssec:gen_p_of_f} considers the former and  Subsection \ref{ssec:poles_of_f} the latter.
    
Before presenting with the results, we define
\begin{dfn}
    For transfer function $g(s)$ and $s_0\in \compl$, $s_0$ is a generic point of $g(s)$ if $s_0$ is neither a pole nor a zero of $g(s)$.
\end{dfn}
As we have seen through the discussions above, we always require some generic point assumptions for either $\bar{g}(s)$, $f(s)$, or both.  Those points are of the most interest in this paper but we will provide some results for the cases where the generic assumption fails.
    
\subsection{Convergence at Generic Points of $f(s)$}\label{ssec:gen_p_of_f}
In this section we keep $s_0$ fixed and present the point-wise convergence result of $T(s_0)$ as $\lambda_2(L)$ increases. This requires $s_0$ to be a generic point of $f(s)$.
    
Notice that for any $s_0$ that is also a generic point of $\bar{g}(s)$, we can always find such $M_1,M_2>0$ and large enough $\lambda_2(L)$ for the upper bound in \eqref{eq_T_norm_bd} to hold. Furthermore, given fixed $M_1$ and $M_2$, one can let the upper bound be arbitrarily small by increasing $\lambda_2(L)$, which leads to the point-wise convergence of $T(s_0)$, as stated in the following theorem.
\begin{thm}\label{thm_ptw_conv_reg}
    Let $T(s)$ and $\bar{g}(s)$ be defined as in \eqref{eq_T_explict} and \eqref{eq_g_bar}, respectively. If $s_0\in\compl$ is a generic point of both $\bar{g}(s)$ and $f(s)$, then
    \ben
        \lim_{\lambda_2(L)\ra +\infty} \lV T(s_0)-\frac{1}{n}\bar{g}(s_0)\one\one^T\rV=0\,.
    \een
\end{thm}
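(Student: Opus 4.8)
The plan is to derive the theorem directly from the quantitative bound in Lemma \ref{lem_reg_norm_bd} by fixing the constants $M_1,M_2$ and sending $\lambda_2(L)\to+\infty$. The two genericity hypotheses are used precisely to guarantee that such constants exist and, crucially, that the coefficient of $\lambda_2(L)$ in the denominator of \eqref{eq_T_norm_bd} is strictly positive, so that the bound actually vanishes.

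First I would fix the bounding constants. Since $s_0$ is a generic point of $\bar{g}(s)$, it is not a pole of $\bar{g}$, so $\bar{g}(s_0)$ is finite and we may take $M_1=|\bar{g}(s_0)|>0$, where positivity follows because $s_0$ is not a zero of $\bar{g}$. Likewise, because $s_0$ avoids the zeros of the individual nodal dynamics, each $g_i^{-1}(s_0)$ is finite, so that $M_2:=\max_{1\le i\le n}|g_i^{-1}(s_0)|<\infty$ is a well-defined positive constant. The point to stress is that $M_1$, $M_2$, and $f(s_0)$ are all \emph{independent} of $\lambda_2(L)$.

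Next I would verify the remaining hypotheses of Lemma \ref{lem_reg_norm_bd} and isolate the role of the genericity of $f$. Because $s_0$ is a generic point of $f$, it is not a pole of $f$ (so the lemma applies) and, more importantly, it is not a zero of $f$, hence $|f(s_0)|>0$. With the constants above, Lemma \ref{lem_reg_norm_bd} gives, for every $\lambda_2(L)$ large enough that $|f(s_0)|\lambda_2(L)>M_2+M_1M_2^2$,
\[
0\le \Big\|T(s_0)-\tfrac{1}{n}\bar{g}(s_0)\one\one^T\Big\|\le \frac{(M_1M_2+1)^2}{|f(s_0)|\lambda_2(L)-M_2-M_1M_2^2}.
\]

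Finally I would take the limit. The numerator $(M_1M_2+1)^2$ is a fixed finite constant, and since $|f(s_0)|>0$ the denominator grows without bound as $\lambda_2(L)\to+\infty$; therefore the right-hand side tends to $0$. Squeezing the nonnegative left-hand side between $0$ and this vanishing upper bound yields the claim. The only genuinely delicate point is the first step, namely confirming that $M_1$ and $M_2$ are finite and fixed: genericity of $\bar{g}$ (together with the fact that a zero of any $g_i$ generically forces a zero of $\bar{g}$, so that $s_0$ avoids the zeros of the individual $g_i$) keeps both constants finite, while genericity of $f$ ensures $|f(s_0)|>0$ so that it is the effective algebraic connectivity $|f(s_0)|\lambda_2(L)$, rather than some bounded quantity, that diverges.
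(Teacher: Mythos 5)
Your proposal is correct and follows essentially the same route as the paper: both fix constants $M_1,M_2$ from the genericity of $\bar g$ (using that $s_0$ is a zero of $\bar g$ iff it is a zero of some $g_i$), invoke Lemma \ref{lem_reg_norm_bd}, and let $\lambda_2(L)\to+\infty$ in the bound \eqref{eq_T_norm_bd}. Your explicit remark that genericity of $f$ guarantees $|f(s_0)|>0$, so the denominator actually diverges, is a point the paper leaves implicit but is exactly the right justification.
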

\begin{proof}
    Since $s_0$ is not a pole of $\bar{g}(s)$, $|\bar{g}(s_0)|$ is trivially upper bounded by some $M_1>0$. Also, it is easy to see that $s_0$ is not a zero of $\bar{g}(s)$ if and only if $s_0$ is not a zero of any $g_i(s)$. Then $\max_{1\leq i\leq n}|\bar{g}^{-1}(s_0)|$ is upper bounded by some $M_2>0$. Therefore the conditions of Lemma \ref{lem_reg_norm_bd} are satisfied. We finish the proof by taking $\lambda_2(L)\ra +\infty$ on both sides of \eqref{eq_T_norm_bd}.
\end{proof}

Theorem \ref{thm_ptw_conv_reg} establishes the emergence of coherence at generic points of $\bar{g}(s)$. This forms the basis of our analysis, yet requires such $s_0$ satisfying generic conditions. A more careful analysis shows that, as $\lambda_2(L)\ra +\infty$, the pole of $\bar{g}(s)$ is asymptotically a pole of $T(s)$, and the zero of $\bar{g}(s)$ is asymptotically a zero of $T(s)$, as stated in the following two theorems. \ifthenelse{\boolean{archive}}{}{We refer the readers to \cite{min2021a} for the proofs for Theorem \ref{thm_ptw_conv_pole} and \ref{thm_ptw_conv_zero}.}
\begin{thm}\label{thm_ptw_conv_pole}
    Let $T(s)$ and $\bar{g}(s)$ be defined as in \eqref{eq_T_explict} and \eqref{eq_g_bar}, respectively. If $s_0\in\compl$ is a pole of $\bar{g}(s)$ and a generic point of $f(s)$, then
    \ben
        \lim_{\lambda_2(L)\ra +\infty} \lV T(s_0)\rV=+\infty\,.
    \een
\end{thm}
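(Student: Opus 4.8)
The plan is to specialize the block-matrix machinery already developed for Lemma \ref{lem_reg_norm_bd}. Exactly as there, I would set $H=V^T\dg\{g_i^{-1}(s_0)\}V+f(s_0)\Lambda$, so that \eqref{eq_T_eigenform} reads $T(s_0)=VH^{-1}V^T$ and, because $V$ is unitary, $\lV T(s_0)\rV=\lV H^{-1}\rV$. The entire claim therefore reduces to showing $\lV H^{-1}\rV\ra+\infty$ as $\lambda_2(L)\ra+\infty$, and I would keep the same $2\times 2$ block partition of $H$ with corner entry $\bar g^{-1}(s_0)$, off-diagonal block $h_{21}$, and trailing block $H_{22}=V_\perp^T\dg\{g_i^{-1}(s_0)\}V_\perp+f(s_0)\Tilde{\Lambda}$.

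The decisive structural fact is that, by definition, $s_0$ being a pole of $\bar g(s)$ means precisely $\bar g^{-1}(s_0)=0$, so the $(1,1)$ corner of $H$ vanishes. Before using this I would first confirm that $M_2:=\max_i|g_i^{-1}(s_0)|$ is finite: a rational function cannot have both a pole and a zero at the same point, so $s_0$ is not a zero of $\bar g$, and by the equivalence noted in the proof of Theorem \ref{thm_ptw_conv_reg} this forces $s_0$ to be a zero of none of the $g_i$; hence every $g_i^{-1}(s_0)$ is finite. With $M_2<\infty$ in hand, the estimates \eqref{eq_h12_norm_bd} and \eqref{eq_H22_norm_bd} from the lemma survive unchanged: $\lV h_{21}\rV\le M_2$ and, for $|f(s_0)|\lambda_2(L)>M_2$, $\lV H_{22}^{-1}\rV\le 1/(|f(s_0)|\lambda_2(L)-M_2)$, the latter tending to $0$.

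Reading the $(1,1)$ entry of $H^{-1}$ off the block-inverse formula \eqref{eq_blk_mat_inv} gives $a=(\bar g^{-1}(s_0)-h_{21}^TH_{22}^{-1}h_{21})^{-1}$, which, once $\bar g^{-1}(s_0)=0$ is substituted, becomes $a=-(h_{21}^TH_{22}^{-1}h_{21})^{-1}$. By Cauchy-Schwarz and submultiplicativity, $|h_{21}^TH_{22}^{-1}h_{21}|\le\lV h_{21}\rV^2\lV H_{22}^{-1}\rV\le M_2^2/(|f(s_0)|\lambda_2(L)-M_2)\ra 0$, so $|a|\ge(|f(s_0)|\lambda_2(L)-M_2)/M_2^2\ra+\infty$. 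Since the spectral norm dominates the magnitude of any single entry, $\lV T(s_0)\rV=\lV H^{-1}\rV\ge|a|\ra+\infty$, which is the assertion.

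The one place I would be careful --- and the only real obstacle --- is the degenerate possibility that $h_{21}^TH_{22}^{-1}h_{21}=0$ at isolated values of $\lambda_2(L)$, which renders $H$ singular and the formula for $a$ undefined. To cover this case uniformly I would instead bound the smallest singular value directly via the test vector $x=\bmt 1\\ -H_{22}^{-1}h_{21}\emt$: it satisfies $\lV x\rV\ge1$ and $Hx=\bmt -h_{21}^TH_{22}^{-1}h_{21}\\ 0\emt$, whence $\sigma_1(H)\le\lV Hx\rV/\lV x\rV\le M_2^2\lV H_{22}^{-1}\rV\ra0$ and $\lV H^{-1}\rV=1/\sigma_1(H)\ra+\infty$ (with $\lV T(s_0)\rV=+\infty$ by convention at any singular $\lambda_2(L)$). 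Everything else is a direct transcription of Lemma \ref{lem_reg_norm_bd}.
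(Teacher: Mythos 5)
Your proof is correct and follows essentially the same route as the paper's: the same block partition of $H=V^T\dg\{g_i^{-1}(s_0)\}V+f(s_0)\Lambda$ with vanishing corner $\bar g^{-1}(s_0)=0$, the same bounds $\lV h_{21}\rV\le M_2$ and $\lV H_{22}^{-1}\rV\le 1/(|f(s_0)|\lambda_2(L)-M_2)$, and the same conclusion that $|a|=1/|h_{21}^TH_{22}^{-1}h_{21}|$ blows up (the paper lower-bounds $\lV H^{-1}\rV$ via its rank-one-plus-block decomposition rather than via the $(1,1)$ entry, but this is a cosmetic difference). Your test-vector argument bounding $\sigma_1(H)$ directly is a small but genuine improvement, since it also covers the degenerate case $h_{21}^TH_{22}^{-1}h_{21}=0$ that the paper's block-inverse formula silently assumes away.
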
 
\ifthenelse{\boolean{archive}}{
\begin{proof}
    Similarly to the proof of Lemma \ref{lem_reg_norm_bd}, we define $H=V^T\dg\{g_i^{-1}(s_0)\}V+f(s_0)\Lambda$ and now we need to show that $\|T(s_0)\|=\|H^{-1}\|$ grows unbounded as $\lambda_2(L)\ra +\infty$. 
    
    Write $H$ in block matrix form:
    \begin{align*}
         H&= {\small\bmt
        \bar{g}^{-1}(s_0)& \frac{\one^T}{\sqrt{n}}\dg\{g_i^{-1}(s_0)\}V_\perp\\
        V_\perp^T\dg\{g_i^{-1}(s_0)\}\frac{\one}{\sqrt{n}} &V_\perp^T\dg\{g_i^{-1}(s_0)\}V_\perp+f(s_0)\Tilde{\Lambda}
        \emt}\\
        &:= \bmt
        0& h^T_{21}\\
        h_{21} & H_{22}
        \emt\,,
    \end{align*}
    by noticing that $\bar{g}^{-1}(s_0)=0$ because $s_0$ is a pole of $\bar{g}(s)$.
    
    Inverting $H$ in its block form gives
    \begin{align}
        H^{-1} &=\; \bmt
        a &-ah_{21}^TH_{22}^{-1}\\
        -aH_{22}^{-1}h_{21}& H_{22}^{-1}+aH_{22}^{-1}h_{21}h_{21}^TH_{22}^{-1}
        \emt\nonumber\\
        &=\;
        a\bmt 1\\
        -H_{22}^{-1}h_{21}\emt\bmt1 &-h^T_{21}H_{22}^{-1}\emt+\bmt 0&0\\
        0 & H_{22}^{-1}\emt\,, \label{eq_H_inv_pole}
    \end{align}
    where $a$ now is given by $a = -\frac{1}{h_{21}^TH_{22}^{-1}h_{21}}$.
    
    Then from \eqref{eq_H_inv_pole}, when $\lambda_2(L)$ is large enough, we can lower bound $\|H^{-1}\|$ by
    \begin{align}
        \|H^{-1}\|&\geq \lV a\bmt 1\\
        -H_{22}^{-1}h_{21}\emt\bmt 1\\
        -H_{22}^{-1}h_{21}\emt^T\rV - \lV\bmt 0&0\\
        0 & H_{22}^{-1}\emt \rV\nonumber\\
        &= \frac{1}{|h_{21}^TH_{22}^{-1}h_{21}|} \lV \bmt 1\\
        -H_{22}^{-1}h_{21}\emt\rV^2-\|H_{22}^{-1}\|\nonumber\\
        &\geq \frac{1}{|h_{21}^TH_{22}^{-1}h_{21}|}-\|H_{22}^{-1}\|\nonumber\\
        &\geq \frac{1}{\|h_{21}\|^2\|H_{22}^{-1}\|}-\|H_{22}^{-1}\|\,,\label{eq_H_lower_bd_pole}
    \end{align}
    where in the second inequality, we simply use the fact that the norm of a vector is lower bounded by its first entry.
    
    Because $s_0$ is a pole of $\bar{g}(s)$, it cannot be a zero of any $g_i(s)$; otherwise this would lead to the contradiction $\bar{g}(s_0)=0$. Therefore, $\max_{1\leq i\leq n}|g_i^{-1}(s)|$ is upper bounded by some $M>0$. Similarly to \eqref{eq_h12_norm_bd} and \eqref{eq_H22_norm_bd}, we have $\|h_{21}\|\leq M$ and $\|H_{22}^{-1}\|\leq \frac{1}{|f(s_0)|\lambda_2(L)-M}$. Then \eqref{eq_H_lower_bd_pole} can be lower bounded by
    \begin{align*}
        \|H^{-1}\|&\geq \frac{1}{\|h_{21}\|^2\|H_{22}^{-1}\|}-\|H_{22}^{-1}\|\\
        &\geq \frac{1}{\frac{M^2}{|f(s_0)|\lambda_2(L)-M}}-\frac{1}{|f(s_0)|\lambda_2(L)-M}\\
        &= \frac{(|f(s_0)|\lambda_2(L)-M)^2-M^2}{M^2(|f(s_0)|\lambda_2(L)-M)}\,.
    \end{align*}
    This lower bound holds when $|f(s_0)|\lambda_2(L)\geq M$, and it grows unbounded as $\lambda_2(L)\ra +\infty$, which finishes the proof.
\end{proof}}{}

{
\begin{rem}\label{rem_pole}
    Theorem \ref{thm_ptw_conv_pole} does not suggest whether the network is asymptotically coherent at poles of $\bar{g}(s)$. Our incoherence measure $\lV T(s_0)-\frac{1}{n}\bar{g}(s_0)\one\one^T\rV$ is undefined at such poles. Alternatively, for $s_0$ the pole of $\bar{g}(s)$, one can prove that when $\tilde{\Lambda}/\tilde{\lambda_2(L)}\ra \Lambda_{\mathrm{lim}}$ as $\lambda_2(L)\ra +\infty$, we have the limit $\lV \frac{T(s_0)}{\|T(s_0)\|}-\frac{1}{n}\gamma(\Lambda_\mathrm{lim})\one\one^T\rV\ra 0$, for some $\gamma(\Lambda_\mathrm{lim})\in\compl$ determined by $\Lambda_\mathrm{lim}$ with $|\gamma(\Lambda_\mathrm{lim})|=1$. We leave the formal statement to Appendix.\ref{app_lim_dir_pole}. This suggests that $T(s_0)$ has the desired rank-one structure for coherence. While the normalized transfer matrix is not discussed in this paper due to the space constraints, such formulation is better for understanding the network coherence at the poles of $\bar{g}(s)$.
\end{rem}}
Next, the convergence result regarding the zeros of $\bar{g}(s)$ is stated as
\begin{thm}\label{thm_ptw_conv_zero}
    Let $T(s)$ and $\bar{g}(s)$ be defined as in \eqref{eq_T_explict} and \eqref{eq_g_bar}, respectively. If $s_0\in\compl$ is a zero of $\bar{g}(s)$ and a generic point of $f(s)$, then
    \ben
        \lim_{\lambda_2(L)\ra +\infty} \lV T(s_0)\rV=0\,.
    \een
\end{thm}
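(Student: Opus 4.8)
The plan is to isolate the nodes responsible for the zero of $\bar g(s)$ and show that they effectively \emph{ground} the rest of the network. From the harmonic-mean form \eqref{eq_g_bar}, the identity $\bar g(s_0)=0$ forces $\frac1n\sum_{i}g_i^{-1}(s_0)$ to diverge, which happens exactly when $g_i(s_0)=0$ for at least one $i$. I would therefore define the index set $I=\{i\in[n]:g_i(s_0)=0\}$ and put $m=|I|\ge 1$. If $m=n$ then $G(s_0)=0$ and \eqref{eq_T_explict} gives $T(s_0)=(I_n+0)^{-1}\cdot 0=0$ trivially, so I assume $1\le m<n$, which is precisely the regime in which Lemma \ref{lem_grd_Lap_eig_bd} applies.

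The core step is to reduce $T(s_0)$ to a grounded subsystem. Reordering the nodes so that $I$ comes first, I write $\mathrm{diag}\{g_i^{-1}(s)\}+f(s)L$ in $2\times2$ block form $\bmt A(s)&B(s)\\C(s)&D(s)\emt$, with the $(I,I)$ block $A(s)=\mathrm{diag}\{g_i^{-1}(s)\}_{i\in I}+f(s)L_{II}$, the $(I^c,I^c)$ block $D(s)=\mathrm{diag}\{g_i^{-1}(s)\}_{i\in I^c}+f(s)L_{I^cI^c}$, and bounded off-diagonal blocks $B(s),C(s)$ proportional to $f(s)$. As $s\to s_0$ every diagonal entry of $A(s)$ diverges while $B,C,D$ stay bounded with $D(s)\to D(s_0)$ invertible, so the Schur complement $S=A-BD^{-1}C$ diverges and $\|S^{-1}\|\to0$. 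Feeding this into the block-inverse formula \eqref{eq_blk_mat_inv} annihilates every block of $M^{-1}$ except the lower-right one, yielding
\ben
T(s_0)=\bmt 0&0\\ 0& \big(\mathrm{diag}\{g_i^{-1}(s_0)\}_{i\in I^c}+f(s_0)\tilde L\big)^{-1}\emt\,,
\een
where $\tilde L=L_{I^cI^c}$ is exactly the grounded Laplacian obtained by deleting the rows and columns indexed by $I$. I note this inverse stays finite even if $s_0$ is a pole of some $g_i$ with $i\in I^c$, since then $g_i^{-1}(s_0)=0$ merely contributes a zero diagonal entry.

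It then remains to bound $T_{cc}:=\big(\mathrm{diag}\{g_i^{-1}(s_0)\}_{i\in I^c}+f(s_0)\tilde L\big)^{-1}$, for which $\|T(s_0)\|=\|T_{cc}\|$. Since no $i\in I^c$ is a zero of $g_i$, the quantity $M:=\max_{i\in I^c}|g_i^{-1}(s_0)|$ is finite, and since $s_0$ is a generic point of $f(s)$, $f(s_0)$ is finite and nonzero. Applying Lemma \ref{lem_bd_norm_mat_inv} with the matrices $f(s_0)\tilde L$ and $\mathrm{diag}\{g_i^{-1}(s_0)\}_{i\in I^c}$---and using that $\tilde L$ is symmetric positive definite, so the smallest singular value of $f(s_0)\tilde L$ equals $|f(s_0)|\lambda_1(\tilde L)$---gives $\|T_{cc}\|\le(|f(s_0)|\lambda_1(\tilde L)-M)^{-1}$ whenever $|f(s_0)|\lambda_1(\tilde L)>M$. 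Finally Lemma \ref{lem_grd_Lap_eig_bd} supplies $\lambda_1(\tilde L)\ge \frac{m}{n}\lambda_2(L)$, so
\ben
\|T(s_0)\|\le \frac{1}{|f(s_0)|\frac{m}{n}\lambda_2(L)-M}\,,
\een
and letting $\lambda_2(L)\to+\infty$ drives the right-hand side to $0$, which is the claim.

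I expect the main obstacle to be making the limiting block-inverse argument rigorous, since $\mathrm{diag}\{g_i^{-1}(s)\}+f(s)L$ has entries that blow up at $s_0$ and cannot be evaluated there directly. I would circumvent this by working with the equivalent form $T(s)=(I_n+G(s)f(s)L)^{-1}G(s)$ of \eqref{eq_T_explict} for $s$ near $s_0$, where $G(s_0)$ is finite with zero entries on $I$; the trailing factor $G(s_0)$ then forces the rows and columns indexed by $I$ to vanish, and a clean Schur-complement estimate (showing the $I$-blocks are $O(\|S^{-1}\|)$) yields the reduction without ever evaluating an infinite entry. The remaining boundedness claim---that $g_i^{-1}(s_0)$ is finite for $i\in I^c$---is immediate from the definition of $I$.
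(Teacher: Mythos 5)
Your proposal is correct and follows essentially the same route as the paper: isolate the indices where $g_i(s_0)=0$, handle the trivial case $m=n$, reduce $T(s_0)$ to the grounded subsystem $(\tilde{G}^{-1}(s_0)+f(s_0)\tilde{L})^{-1}$, and combine Lemma \ref{lem_bd_norm_mat_inv} with Lemma \ref{lem_grd_Lap_eig_bd}. The ``clean'' variant you describe in your final paragraph---computing $(I_n+G(s_0)f(s_0)L)^{-1}G(s_0)$ directly with the finite matrix $G(s_0)$, rather than passing to the limit of a Schur complement of $\mathrm{diag}\{g_i^{-1}(s)\}+f(s)L$ with diverging entries---is exactly the computation the paper performs.
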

\ifthenelse{\boolean{archive}}{\begin{proof}
    Since $s_0$ is a zero of $\bar{g}(s)$, it is the zero of at least one $g_i(s)$. Without loss of generality, suppose $g_i(s_0)=0$ for $1\leq i\leq m$ and $g_i(s_0)\neq 0$ for $m+1\leq i\leq n$.
    
    If $m=n$, then $T(s_0)=0$. We only consider the non-trivial case when $m<n$. The transfer matrix is now given by
    \begin{align}
        T(s_0)&=\;(I_n+G(s_0)f(s_0)L)^{-1}G(s_0)\nonumber\\
        &=\;\bmt I_{m}&0_{m\times (n-m)}\\
        0_{(n-m)\times m}& I_{n-m}+\tilde{G}(s_0)f(s_0)\tilde{L}\emt^{-1}G(s_0)\nonumber\\
        &=\;\bmt
            0_{m\by m}& 0_{m\by (n-m)}\\
            0_{(n-m)\by m} & (I_{n-m}+\Tilde{G}(s_0)f(s_0)\Tilde{L})^{-1}\Tilde{G}(s_0)
        \emt\,,\label{eq_T_zero}
    \end{align}

    where $\Tilde{G}(s)=\dg\{g_{m+1}(s),\cdots,g_{n}(s)\}$ and $\Tilde{L}$ is the \emph{grounded Laplacian} of $L$ by removing the first $m$ rows and columns.
    
    By Lemma \ref{lem_grd_Lap_eig_bd}, when $\lambda_1(\Tilde{L})$ is large enough, we have
    \begin{align*}
        \|T(s_0)\|&=\;\|(I_{n-m}+\Tilde{G}(s_0)f(s_0)\Tilde{L})^{-1}\Tilde{G}(s_0)\|\\
        &=\; \|(\Tilde{G}^{-1}(s_0)+f(s_0)\Tilde{L})^{-1}\|\\
        &\leq\; \frac{1}{\sigma_1(f(s_0)\Tilde{L})-\|\Tilde{G}^{-1}(s_0)\|}\\
        &\leq\; \frac{1}{|f(s_0)|\lambda_1(\Tilde{L})-\|\Tilde{G}^{-1}(s_0)\|}\,.
    \end{align*}
    Since $g_i(s_0)\neq 0$ for $m+1\leq i\leq n$, $\max_{m+1\leq i\leq n}|g_i^{-1}(s_0)|$ is upper bounded by some $M>0$. Then we have
    \be
        \|T(s_0)\|\leq \frac{1}{|f(s_0)|\lambda_1(\Tilde{L})-M}\,.\label{eq_T_bd_zero}
    \ee
    By Lemma \ref{lem_grd_Lap_eig_bd}, we know that $\lambda_1(\Tilde{L})\ra +\infty$ as $\lambda_2(L)\ra \infty$, then
    \ben
        \lim_{\lambda_2(L)\ra +\infty}\frac{1}{|f(s_0)|\lambda_1(\Tilde{L})-M}=0\,.
    \een
    We finishes the proof by taking $\lambda_2(L)\ra +\infty$ on both sides of \eqref{eq_T_bd_zero}
\end{proof}}{}

{
\begin{rem}\label{rem_zero}
    The limit in Theorem \ref{thm_ptw_conv_zero} can still be written as $\lim_{\lambda_2(L)\ra +\infty}\|T(s_0)-\frac{1}{n}\bar{g}(s_0)\one\one^T\|=0$, because $s_0$ is a zero of $\bar{g}(s)$. However, we here emphasize the fact that the system is not coherent  at $s_0$ under normalization because $T(s_0)/\|T(s_0)\|$ does not converge to $\frac{1}{n}\gamma\one\one^T$ for any $\gamma\in\compl$. 
\end{rem}
    
So far, we have shown point-wise convergence of $T(s)$ towards the transfer function $\frac{1}{n}\bar{g}(s)\one\one^T$, from which we assess how network coherence emerges as connectivity increases. In Remark \ref{rem_pole} and \ref{rem_zero} we see that the incoherence measure $\lV T(s_0)-\frac{1}{n}\bar{g}(s_0)\one\one^T\rV$ is insufficient for understanding the asymptotic behavior at zeros or poles of $\bar{g}(s)$, and that the alternative measure $\lV\frac{T(s_0)}{\|T(s_0)\|}-\frac{1}{n}\gamma\one\one^T\rV$ is a good complement for such purpose.\footnote{As $\lambda_2(L)$ increases, for pole of $\bar{g}(s)$, the latter measure converges to $0$ given suitable conditions but not for the former; for zero of $\bar{g}(s)$, the opposite result holds; for generic point of $\bar{g}(s)$, both incoherence measures converge to $0$.} The latter measure, $\lV\frac{T(s_0)}{\|T(s_0)\|}-\frac{1}{n}\gamma\one\one^T\rV$, focuses more on the relative scale of eigenvalues of $T(s)$. In this paper, we mostly use the former, $\lV T(s_0)-\frac{1}{n}\bar{g}(s_0)\one\one^T\rV$, and particularly when presenting the uniform convergence results; because we are interested in connecting these results to the network time-domain response.}
    
\subsection{Convergence Regarding Poles of $f(s)$}\label{ssec:poles_of_f}
As mentioned before, when $s_0$ is a pole of $f(s)$, it is a singularity of $T(s)$. Under certain conditions, one can observe that high-gain in $f(s)$ plays a role similar to $\lambda_2(L)$.
The result uses Lemma \ref{lem_reg_norm_bd} and is stated as follows.
\begin{thm}\label{thm_ptw_singular_f_pole}
    Let $T(s)$ and $\bar{g}(s)$ be defined as in \eqref{eq_T_explict} and \eqref{eq_g_bar}, respectively. Suppose $\lambda_2(L)>0$. If $s_0\in\compl$ is a generic point of $\bar{g}(s)$ and $s_0$ is a pole of $f(s)$, then
    \ben
        \lim_{s\ra s_0} \lV T(s)-\frac{1}{n}\bar{g}(s)\one\one^T\rV=0\,.
    \een
\end{thm}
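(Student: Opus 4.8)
The plan is to reuse the point-wise bound of Lemma \ref{lem_reg_norm_bd}, but with the roles of the two factors in the effective algebraic connectivity $|f(s)|\lambda_2(L)$ interchanged. In Theorem \ref{thm_ptw_conv_reg} the point $s_0$ was held fixed and $\lambda_2(L)$ was driven to infinity; here $\lambda_2(L)>0$ stays fixed while the approach $s\to s_0$ drives $|f(s)|\to+\infty$, precisely because $s_0$ is a pole of $f$. Thus the same estimate \eqref{eq_T_norm_bd} should force the incoherence measure to vanish, provided the constants $M_1,M_2$ can be kept \emph{uniform} as $s$ ranges over a neighborhood of $s_0$, rather than being chosen only at $s_0$.

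First I would fix a small closed disk $\bar{N}$ centered at $s_0$. Because the poles and zeros of rational functions are isolated and $s_0$ is a generic point of $\bar{g}(s)$, I can shrink $\bar{N}$ so that it contains no pole of $f$ besides $s_0$, no zero of any $g_i$, and no pole of $\bar{g}$. (As in the proof of Theorem \ref{thm_ptw_conv_reg}, genericity of $s_0$ for $\bar{g}$ amounts to $g_i(s_0)\neq 0$ for all $i$ together with $\sum_i g_i^{-1}(s_0)\neq 0$; any $g_i$ may still have a pole at $s_0$, in which case $g_i^{-1}$ merely vanishes there, causing no difficulty.) On this shrunken $\bar{N}$ the functions $\bar{g}(s)$ and each $g_i^{-1}(s)$ are continuous, hence the continuous maps $|\bar{g}(s)|$ and $\max_i|g_i^{-1}(s)|$ attain finite maxima on the compact set $\bar{N}$. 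I therefore set $M_1:=\max_{s\in\bar{N}}|\bar{g}(s)|$ and $M_2:=\max_{s\in\bar{N}}\max_i|g_i^{-1}(s)|$, which serve as uniform bounds valid simultaneously for every $s\in\bar{N}$.

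Next, for any $s\in\bar{N}\setminus\{s_0\}$ the point $s$ is not a pole of $f$, and the hypotheses of Lemma \ref{lem_reg_norm_bd} hold with the constants $M_1,M_2$ just defined. Hence, as soon as $|f(s)|\lambda_2(L)>M_2+M_1M_2^2$, the bound \eqref{eq_T_norm_bd} applies verbatim,
\[
    \lV T(s)-\frac{1}{n}\bar{g}(s)\one\one^T\rV\leq \frac{(M_1M_2+1)^2}{|f(s)|\lambda_2(L)-M_2-M_1M_2^2}\,.
\]
Since $\lambda_2(L)>0$ is fixed and $|f(s)|\to+\infty$ as $s\to s_0$ (a pole of $f$), the threshold condition is eventually met for $s$ close enough to $s_0$, and the denominator tends to $+\infty$ regardless of the direction of approach; the right-hand side therefore tends to $0$, which yields the claimed limit.

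The one step demanding care is exactly the uniformity of $M_1$ and $M_2$: Lemma \ref{lem_reg_norm_bd} is stated at a single frequency, whereas here the frequency itself moves toward $s_0$, so one cannot simply invoke the constants evaluated at $s_0$. The remedy is the compactness-and-continuity argument above, which upgrades the pointwise bounds to neighborhood-uniform ones. I expect this to be the only genuinely nontrivial ingredient, the remainder being a direct limit computation driven by the blow-up of $|f(s)|$.
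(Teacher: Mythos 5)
Your proposal is correct and follows essentially the same route as the paper: obtain neighborhood-uniform bounds $M_1,M_2$ on $|\bar{g}(s)|$ and $\max_i|g_i^{-1}(s)|$ from continuity near the generic point $s_0$, apply the bound \eqref{eq_T_norm_bd} of Lemma \ref{lem_reg_norm_bd} at each $s\neq s_0$ in that neighborhood, and let the blow-up of $|f(s)|$ drive the right-hand side to zero. The compactness refinement you add (taking maxima over a closed disk) is a harmless strengthening of the paper's appeal to local boundedness.
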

\begin{proof}
    Since $s_0$ is neither a zero nor a pole of $\bar{g}(s)$, $\exists \delta>0$ such that $\forall s\in U(s_0,\delta)=\{s:|s-s_0|<\delta\}$, we have $|\bar{g}^{-1}(s)|\leq M_1$ and $\max_{1\leq i\leq n}|g_i^{-1}(s)|\leq M_2$ for some $M_1,M_2>0$.
    
    By Lemma \ref{lem_reg_norm_bd}, $\forall s\in U(s_0,\delta)$, the following holds
    \ben
        \lV T(s)-\frac{1}{n}\bar{g}(s)\one\one^T\rV\leq 
        \frac{\lp M_1M_2+1\rp^2}{|f(s)|\lambda_2(L)-M_2-M_1M_2^2}\,.
    \een
    Taking $s\ra s_0$ on both side, notice that $\lim_{s\ra s_0}|f(s)|=+\infty$, the limit of right-hand side is 0. 
\end{proof}
In other words, at pole of $f(s)$, the network effect is infinitely amplified. The effective algebraic connectivity $|f(s)|\lambda_2(L)$ grows unbounded as $s$ approaching the pole of $f(s)$. As a result, the frequency response of $T(s)$ is exactly the one of $\frac{1}{n}\bar{g}(s)\one\one^T$. Hence network coherence naturally arises around poles of $f(s)$.

\section{Uniform Coherence}\label{sec:unifm_conv}

We now leverage the point-wise convergence results of Section \ref{sec:ptw_conv} to characterize conditions for uniform convergence. This will allow us to connect our analysis with time domain implications, as discussed in \ref{ssec:goal}.
    
We start by showing uniform convergence of $T(s)$ over compact regions that do not contain any zero or pole of $\bar{g}(s)$. While the uniform convergence does not hold over regions containing such zeros or poles in general, we prove that in some special cases, the uniform convergence around zeros of $\bar{g}(s)$ does hold. Finally, we provide a sufficient condition for uniform convergence of $T(s)$ on the right-half plane, which implies the system converges in $\mathcal{H}_\infty$ norm. 

\subsection{Uniform Convergence Around Generic Points of $\bar g(s)$}
Again, similarly to the point-wise convergence, we discuss uniform convergence of $T(s)$ over set $S$ that satisfies the following assumption
\begin{asmp}
    $S\subset\compl$ satisfies $\sup_{s\in S}|f(s)|<\infty$ and $\inf_{s\in S}|f(s)|>0$.\label{asmp_unifm}
\end{asmp}
Such an assumption guarantees all points in the closure of $S$ are generic points of $f(s)$. This property prevents any sequence of points in $S$ that asymptotically eliminates or amplifies the network effect on the boundary of $S$. In subsequent sections, we denote $F_h:=\sup_{s\in S}|f(s)|$ and $F_l:=\inf_{s\in S}|f(s)|>0$.  
        
Recall that in Section \ref{sec:ptw_conv}, point-wise convergence is proved by choosing $M_1,M_2>0$  such that the conditions in Lemma \ref{lem_reg_norm_bd} are satisfied at a particular point $s_0$. Then, finding universal $M_1,M_2>0$ that work for every $s_0$ in a set $S\subset\compl$ suffices to show uniform convergence over $S$. Such a process is straightforward if $S$ is compact: 
\begin{thm}\label{thm_unifm_conv_reg_compact}
    Let $T(s)$ and $\bar{g}(s)$ be defined as in \eqref{eq_T_explict} and \eqref{eq_g_bar}, respectively. Then given a compact set $S\subset \compl$, if $S$ satisfies Assumption \ref{asmp_unifm} and does not contain any zero or pole of $\bar{g}(s)$,  we have
    \ben
        \lim_{\lambda_2(L)\ra +\infty}\sup_{s\in S}\lV T(s)-\frac{1}{n}\bar{g}(s)\one\one^T\rV=0\,.
    \een
\end{thm}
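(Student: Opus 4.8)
The plan is to reduce the uniform statement to a single application of the pointwise bound of Lemma \ref{lem_reg_norm_bd}, by extracting constants $M_1,M_2>0$ that are valid simultaneously for every $s\in S$. The crucial observation is that, once $M_1$ and $M_2$ are fixed, the right-hand side of \eqref{eq_T_norm_bd} depends on the point $s$ only through $|f(s)|$, and Assumption \ref{asmp_unifm} bounds this quantity below by $F_l>0$ uniformly on $S$. Thus a uniform choice of $M_1,M_2$ together with $F_l$ converts the pointwise decay into uniform decay.

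First I would construct the two universal constants. Since $S$ contains no pole of $\bar{g}(s)$ and $\bar{g}(s)$ is rational, $|\bar{g}(s)|$ is continuous on the compact set $S$, hence attains a finite maximum, and I set $M_1:=\sup_{s\in S}|\bar{g}(s)|<\infty$. For $M_2$, recall from the proof of Theorem \ref{thm_ptw_conv_reg} that $s$ is a zero of $\bar{g}(s)$ if and only if it is a zero of some $g_i(s)$. Since $S$ contains no zero of $\bar{g}(s)$, no $g_i(s)$ vanishes on $S$, so each $g_i^{-1}(s)$ is continuous on $S$; by compactness I may set $M_2:=\sup_{s\in S}\max_{1\le i\le n}|g_i^{-1}(s)|<\infty$.

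With these constants in hand, Assumption \ref{asmp_unifm} guarantees every $s\in S$ is a generic point of $f(s)$, so Lemma \ref{lem_reg_norm_bd} applies at each $s\in S$ with the same $M_1,M_2$. For any $\lambda_2(L)$ large enough that $F_l\lambda_2(L)>M_2+M_1M_2^2$, the hypothesis of the lemma holds at every $s\in S$ because $|f(s)|\ge F_l$; and since the right-hand side of \eqref{eq_T_norm_bd} is decreasing in $|f(s)|$, I obtain
\be
    \sup_{s\in S}\lV T(s)-\frac{1}{n}\bar{g}(s)\one\one^T\rV\le \frac{\lp M_1M_2+1\rp^2}{F_l\lambda_2(L)-M_2-M_1M_2^2}\,.
\ee
Letting $\lambda_2(L)\ra+\infty$ sends the right-hand side to $0$, which proves the claim.

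The only real obstacle is establishing the finiteness of the two suprema $M_1,M_2$; everything afterwards is a direct substitution. This finiteness is precisely where the three hypotheses combine: compactness of $S$ makes continuous functions bounded, the absence of poles of $\bar{g}(s)$ keeps $|\bar{g}|$ continuous on $S$, and the absence of zeros of $\bar{g}(s)$ (equivalently, of any $g_i$) keeps each $|g_i^{-1}|$ continuous on $S$. Assumption \ref{asmp_unifm} then plays the complementary role of keeping $|f(s)|$ uniformly bounded away from $0$, which is exactly what is needed to pass from the $s$-dependent denominator to the single denominator $F_l\lambda_2(L)-M_2-M_1M_2^2$.
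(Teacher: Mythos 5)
Your proof is correct and follows essentially the same route as the paper's: extract uniform constants $M_1,M_2$ from compactness and the absence of poles/zeros of $\bar{g}(s)$ on $S$, then apply Lemma \ref{lem_reg_norm_bd} with $|f(s)|\ge F_l$ to get a single vanishing upper bound. No gaps; the argument matches the paper's proof step for step.
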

\begin{proof}
    On the one hand, since $S$ does not contain any pole of $\bar{g}(s)$, $\bar{g}(s)$ is continuous on the compact set $S$, and hence bounded~\cite[Theorem 4.15]{Rudin1964}. On the other hand, because $S$ does not contain any zero of $\bar{g}(s)$, every $g_i^{-1}(s)$ must be continuous on $S$, and hence bounded as well. It follwos that $\max_{1\leq i\leq n}|g_i^{-1}(s)|$ is bounded on $S$, and the conditions of Lemma \ref{lem_reg_norm_bd} are satisfied for all $s\in S$ with a uniform choice of $M_1$ and $M_2$. By \eqref{eq_T_norm_bd}, we have
    \ben
        \sup_{s\in S}\lV T(s)-\frac{1}{n}\bar{g}(s)\one\one^T\rV\leq 
        \frac{\lp M_1M_2+1\rp^2}{F_l\lambda_2(L)-M_2-M_1M_2^2}\,,
    \een
    where $F_{l}=\inf_{s\in S}|f(s)|$.
    We finish the proof by taking $\lambda_2(L)\ra +\infty$ on both sides.   
\end{proof}
As we already discussed in Remark \ref{rem_pole}, if $S$ contains poles of $\bar{g}(s)$, $\sup_{s\in S}\lV T(s)-\frac{1}{n}\bar{g}(s)\one\one^T\rV$ is not a good incoherence measure as it is undefined. The rest of the section mainly discusses the uniform convergence result around zeros of $\bar{g}(s)$.
\subsection{Uniform Convergence Around Zeros of $\bar{g}(s)$}
We first define the notion of \emph{Nodal Multiplicity} of a point in complex plane w.r.t. a given network.
\begin{dfn}
    Given $\{g_i(s),i\in[n]\}$, the Nodal Multiplicity of $s_0\in\compl$ is defined as
    \ben
        \mathcal{N}(s_0):=|\{i\in[n]:g_i(s_0)=0\}|\,,
    \een
    where $|\cdot|$ denotes the set cardinality.
\end{dfn}
By definition, any zero of $\bar{g}(s)$ must have positive nodal multiplicity. Our finding is that zeros with nodal multiplicity exactly $1$ have a special property, which is shown in the following Lemma.
\begin{lem}\label{lem_unifm_conv_zero_neighbor}
    Let $T(s),\bar{g}(s)$ be defined as in \eqref{eq_T_explict} and \eqref{eq_g_bar}, respectively. If $s_0\in\compl$ is a zero of $\bar{g}(s)$ with nodal multiplicity $\mathcal{N}(s_0)=1$, and we assume that $\exists \delta_0$ such that Assumption \ref{asmp_unifm} holds for $U(s_0,\delta_0):=\{s\in\compl: |s-s_0|<\delta_0\}$. Then $\forall \epsilon>0$, $\exists\delta<\delta_0,\lambda>0$ such that whenever $L$ satisfies $\lambda_2(L)\geq \lambda$, we have
    \ben
        \sup_{s\in U(s_0,\delta)}\lV T(s)-\frac{1}{n}\bar{g}(s)\one\one^T\rV<\epsilon\,.
    \een
\end{lem}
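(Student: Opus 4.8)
The plan is to bound the two pieces of $\|T(s)-\frac1n\bar g(s)\one\one^T\|$ separately by the triangle inequality: the target term, whose norm is exactly $|\bar g(s)|$ (since $\|\one\one^T\|=n$), and $\|T(s)\|$ itself. Because $\mathcal{N}(s_0)=1$, I relabel nodes so that $g_1(s_0)=0$ while $g_i(s_0)\neq0$ for $i\geq2$. As $g_1$ is rational with a zero at $s_0$ and each $g_i^{-1}$, $i\geq2$, is continuous and nonzero there, I can fix $\delta<\delta_0$ small enough that on $U(s_0,\delta)$ one has $|\bar g(s)|<\epsilon/3$ (note $\bar g(s_0)=0$, since $g_1^{-1}$ contributes a pole to $\bar g^{-1}$), $|g_1(s)|$ as small as desired, and $\max_{i\geq2}|g_i^{-1}(s)|\leq M$ for some constant $M$; I also keep the disk clear of other zeros and poles of the relevant rational functions, so Assumption \ref{asmp_unifm} supplies $F_l\leq|f(s)|\leq F_h$ throughout. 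It then remains to show $\sup_{s\in U(s_0,\delta)}\|T(s)\|<2\epsilon/3$ for all large $\lambda_2(L)$, which refines the pointwise argument of Theorem \ref{thm_ptw_conv_zero} into a uniform one.

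To estimate $\|T(s)\|$ I partition every matrix according to the single vanishing node, writing $G=\mathrm{diag}(g_1,\tilde G)$ and $L=\big[\begin{smallmatrix}L_{11}&\ell_1^T\\\ell_1&\tilde L\end{smallmatrix}\big]$ with $\tilde L$ the grounded Laplacian obtained by deleting the first row and column, and apply the block-inversion formula \eqref{eq_blk_mat_inv} to $P:=I_n+GfL$, so that $T=P^{-1}G$. Setting $W:=(\tilde G^{-1}+f\tilde L)^{-1}$, Lemmas \ref{lem_bd_norm_mat_inv} and \ref{lem_grd_Lap_eig_bd} (with $m=1$) give $\|W\|\leq\beta$, where $\beta:=\big(F_l\lambda_1(\tilde L)-M\big)^{-1}\leq\big(F_l\lambda_2(L)/n-M\big)^{-1}\to0$. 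Carrying out the block algebra, the three blocks $T_{11},T_{12},T_{21}$ are each proportional to $g_1$, while $T_{22}=W$ plus a term proportional to $g_1$; the goal is the clean uniform bound $\|T(s)\|\leq\beta+C|g_1(s)|$ with $C$ independent of $L$.

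The main obstacle is that the coupling vector $\ell_1$ (the off-diagonal first-row entries of $L$) is \emph{not} uniformly bounded: its norm may grow with $\lambda_2(L)$, so the naive estimates of $\|W\ell_1\|$, of $\ell_1^TW\ell_1$, and of the scalar Schur complement $S=1+g_1\,f\big(L_{11}-f\ell_1^TW\ell_1\big)$ all blow up, and the prefactor $|g_1|$ alone does not rescue them. The resolution, and precisely where $\mathcal{N}(s_0)=1$ is essential (it makes the vanishing block scalar, hence the singularity rank one), is the Laplacian row-sum identity $\ell_1=-\tilde L\one$. Together with $W\tilde L=f^{-1}(I-W\tilde G^{-1})$ this gives $W\ell_1=-f^{-1}\one+f^{-1}W\tilde G^{-1}\one$, so $\|W\ell_1\|=\|\ell_1^TW\|\leq F_l^{-1}\sqrt{n-1}\,(1+\beta M)$ is bounded uniformly in $L$; a further cancellation (using $\ell_1^T\one=-L_{11}$) collapses $f(L_{11}-f\ell_1^TW\ell_1)$ to the bounded quantity $-f\,\ell_1^TW\tilde G^{-1}\one$, so $S\to1$ and $|S^{-1}|\leq2$ on a small enough disk. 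These facts convert the schematic block estimates into genuine inequalities and yield $\|T(s)\|\leq\beta+C|g_1(s)|$ uniformly over the disk.

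Finally I would assemble the pieces. Having chosen $\delta$ so that $C|g_1(s)|<\epsilon/3$ and $|\bar g(s)|<\epsilon/3$ on $U(s_0,\delta)$, I pick $\lambda$ large enough that $\big(F_l\lambda/n-M\big)^{-1}<\epsilon/3$; then for every $L$ with $\lambda_2(L)\geq\lambda$ and every $s\in U(s_0,\delta)$,
$$\Big\|T(s)-\tfrac1n\bar g(s)\one\one^T\Big\|\leq\|T(s)\|+|\bar g(s)|\leq\beta+C|g_1(s)|+|\bar g(s)|<\epsilon,$$
which is the claim. (At $s=s_0$ itself the bound holds by continuity, consistently with $\|T(s_0)\|\leq\beta$ from \eqref{eq_T_zero}.)
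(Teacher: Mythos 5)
Your proposal is correct and follows essentially the same route as the paper's proof: the triangle-inequality split into $\|T(s)\|+|\bar g(s)|$, the block decomposition isolating the single vanishing node with the grounded Laplacian $\tilde L$ and Lemma \ref{lem_grd_Lap_eig_bd}, and—crucially—the Laplacian row-sum identities $\tilde L^{-1}\ell_1=-\one$ and $\ell_1^T\one=-L_{11}$ to tame the unbounded coupling vector and collapse the Schur complement to $1+g_1\cdot(\text{bounded})$. The only cosmetic difference is that the paper reaches the bounded form of $L_{11}-f\ell_1^TW\ell_1$ via the Woodbury identity rather than your direct cancellation; the resulting estimates are equivalent.
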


The proof is shown in Appendix \ref{app_proof_lem_unifm_conv_zero_neighbor}. Notice that for given $\epsilon>0$, the $\epsilon$ bound is valid for any $\lambda_2(L)\geq \lambda$, therefore we can prove uniform convergence over compact regions that only contain zeros of $\bar{g}(s)$ with nodal multiplicity $1$.
\begin{thm}[Uniform convergence around points with $\mathcal N(s_0)=1$]\label{thm_unifm_conv_compact}
    Let $T(s),\bar{g}(s)$ be defined as in \eqref{eq_T_explict} and \eqref{eq_g_bar}, respectively. For a compact set $S\subset \compl$ satisfying Assumption \ref{asmp_unifm}, if $S$ does not contain any pole of $\bar{g}(s)$, and $\mathcal{N}(s)\leq 1,\forall s\in S$, then we have
    \ben
        \lim_{\lambda_2(L)\ra +\infty}\sup_{s\in S}\lV T(s)-\frac{1}{n}\bar{g}(s)\one\one^T\rV=0\,.
    \een
\end{thm}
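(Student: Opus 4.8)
The plan is to combine the interior estimate of Theorem \ref{thm_unifm_conv_reg_compact}, which controls $\|T(s)-\frac1n\bar g(s)\one\one^T\|$ uniformly on compact sets free of zeros and poles of $\bar g(s)$, with the local estimate of Lemma \ref{lem_unifm_conv_zero_neighbor} near each zero, glued together by a finite covering argument.

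First I would locate the zeros of $\bar g(s)$ inside $S$. As in the proof of Theorem \ref{thm_ptw_conv_zero}, every zero of $\bar g(s)$ is a zero of at least one $g_i(s)$; since each $g_i(s)$ is a rational transfer function (Assumption 1), the $g_i(s)$ have only finitely many zeros, so $\bar g(s)$ has only finitely many zeros in all of $\compl$, say $s_1,\dots,s_K\in S$. Each $s_k$ has $\mathcal{N}(s_k)\ge 1$, and the hypothesis $\mathcal{N}(s)\le 1$ on $S$ then forces $\mathcal{N}(s_k)=1$. Moreover, because $S$ satisfies Assumption \ref{asmp_unifm}, every $s_k$ is a generic point of $f(s)$, so by continuity of $f$ there is a radius $\delta_0^{(k)}>0$ on which Assumption \ref{asmp_unifm} holds for $U(s_k,\delta_0^{(k)})$. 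Hence each $s_k$ satisfies all hypotheses of Lemma \ref{lem_unifm_conv_zero_neighbor}.

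Next, fix $\epsilon>0$. For each $k$, Lemma \ref{lem_unifm_conv_zero_neighbor} supplies $\delta_k<\delta_0^{(k)}$ and $\lambda_k>0$ with $\sup_{s\in U(s_k,\delta_k)}\|T(s)-\frac1n\bar g(s)\one\one^T\|<\epsilon$ whenever $\lambda_2(L)\ge\lambda_k$. I then remove these open disks and set $S'':=S\setminus\bigcup_{k=1}^K U(s_k,\delta_k)$. Being a closed subset of the compact set $S$, $S''$ is compact; by construction it contains no zero of $\bar g(s)$, and since $S$ contains no pole of $\bar g(s)$, neither does $S''$. Therefore Theorem \ref{thm_unifm_conv_reg_compact} applies to $S''$, yielding $\lambda_0>0$ with $\sup_{s\in S''}\|T(s)-\frac1n\bar g(s)\one\one^T\|<\epsilon$ whenever $\lambda_2(L)\ge\lambda_0$.

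Finally I would stitch the two estimates. Setting $\lambda:=\max\{\lambda_0,\lambda_1,\dots,\lambda_K\}$, for $\lambda_2(L)\ge\lambda$ every point of $S$ lies in $S''$ or in some $U(s_k,\delta_k)$, so $\sup_{s\in S}\|T(s)-\frac1n\bar g(s)\one\one^T\|<\epsilon$; as $\epsilon>0$ was arbitrary, the supremum tends to $0$. The covering itself is routine precisely because the zeros are finite in number, so a single threshold $\lambda$ works uniformly. I expect the only delicate points to be verifying the hypotheses of Lemma \ref{lem_unifm_conv_zero_neighbor} at each zero—namely the reduction $\mathcal{N}(s_k)=1$ and the local validity of Assumption \ref{asmp_unifm}—and confirming that $S''$ is compact and inherits freedom from zeros and poles of $\bar g(s)$; the inclusion $\delta_k<\delta_0^{(k)}$ guaranteed by the lemma keeps each disk inside the region where Assumption \ref{asmp_unifm} holds, which removes the main risk of the argument.
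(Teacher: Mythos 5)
Your proposal is correct and follows essentially the same route as the paper: apply Lemma \ref{lem_unifm_conv_zero_neighbor} on small disks around the finitely many zeros of $\bar{g}(s)$ in $S$ (each necessarily of nodal multiplicity one), apply Theorem \ref{thm_unifm_conv_reg_compact} on the compact remainder, and take the maximum of the finitely many connectivity thresholds. Your extra care in checking that Assumption \ref{asmp_unifm} extends to a full neighborhood of each zero, and that the zeros are finite in number, only makes explicit details the paper leaves implicit.
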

\begin{proof}
    Firstly, let $\{s_k,1\leq k\leq m\}$ be the set of all the zeros of $\bar{g}(s)$ within $S$. Then $\mathcal{N}(s_k)=1,\ 1\leq k\leq m$, and by Lemma \ref{lem_unifm_conv_zero_neighbor}$, \forall \epsilon>0$ and every $1\leq k\leq m$, $\exists\, \delta_{s_k},\lambda_{s_k}>0$ such that $\forall L$ satisfying $\lambda_2(L)\geq \lambda_{s_k}$, the following holds:
    \ben
        \sup_{s\in U(s_k,\delta_{s_k})}\lV T(s)-\frac{1}{n}\bar{g}(s)\one\one^T\rV<\epsilon\,.
    \een
    Let $\hat{S} :=S\setminus{\lp\bigcup_{k=1}^mU(s_k,\delta_{s_k})\rp}$, then we know that $\hat{S}$ is a compact set that does not contain any pole or zero of $\bar{g}$. By Theorem \ref{thm_unifm_conv_reg_compact}, $\exists\hat{\lambda}$ such that
    \ben
        \sup_{s\in\hat{S}}\lV T(s)-\frac{1}{n}\bar{g}(s)\one\one^T\rV<\epsilon\,.
    \een
    Let $\lambda =\max\lb\hat{\lambda},\lambda_{s_1},\cdots,\lambda_{s_m}\rb$, then $\forall L$ satisfying $\lambda_2(L)\geq \lambda$, we have:
    \begin{align*}
        &\;\sup_{s\in S}\lV T(s)-\frac{1}{n}\bar{g}(s)\one\one^T\rV\\
        = &\; \max\lb\sup_{s\in\hat{S}}\lV T(s)-\frac{1}{n}\bar{g}(s)\one\one^T\rV\right. ,\\
        &\;\quad \left.\sup_{s\in \bigcup_{k=1}^mU(s_k,\delta_{s_k})}\lV T(s)-\frac{1}{n}\bar{g}(s)\one\one^T\rV\rb<\epsilon\,,
    \end{align*}
    which proves the limit.
\end{proof}
    
For zeros with nodal multiplicity strictly larger than $1$, the analysis is rather complicated. We first look once again at the homogeneous node dynamics setting of Section \ref{sec:prem} to provide some insight. 
  
\begin{example}
    Consider again a homogeneous network with node dynamics $g(s)$ and $f(s)=1$, where the transfer matrix is given by
    \ben
        T(s) = \frac{1}{n}g(s)\one\one^T+V_\perp\dg\lb \frac{1}{g^{-1}(s)+\lambda_i(L)}\rb_{i=2}^n V_\perp^T\,.
    \een
    The poles of $T(s)$ include 1) the poles of $g(s)$, and 2) any point $s_0$ that satisfies $g^{-1}(s_0)+\lambda_i(L)=0$ for a particular $i$. Notice that if $\lambda_i(L)$ is large, every solution to $g^{-1}(s_0)+\lambda_i(L)=0$ is close to one of the zeros of $g(s)$. As we increase $\lambda_2(L)$, which effectively increases every $\lambda_i(L),2\leq i\leq n$, one can check that at most $n-1$ poles asymptotically approach each zero of $g(s)$, provided that $\lambda_i(L)$ are distinct. As a result, uniform convergence around any zero of $g(s)$ cannot be obtained due to the presence of poles of $T(s)$ close to them.
\end{example}
    
Such observation also seems to hold in general for networks with heterogeneous node dynamics $g_i(s)$. That is, if a zero of $\bar{g}(s)$ is a zero with nodal multiplicity strictly larger than $1$, then we expect it to ``attract" poles of $T(s)$. But it is difficult to formally prove it since we cannot exactly locate the poles of $T(s)$ in the absence of homogeneity.\footnote{We can still exactly locate the poles of $T(s)$ when proportionality is assumed, i.e. $g_i(s)=f_ig(s),i\in[n]$ for some $f_i>0$ and rational transfer function $g(s)$. Such a case can be regarded as the homogeneous case by considering a scaled version of $L$.} Surprisingly, there are certain cases where we can still quantify the effect of those poles of $T(s)$ approaching a zero of $\bar g(s)$. This essentially disproves the uniform convergence for such cases.
\begin{thm}[Uniform Convergence Failure]\label{thm_unifm_conv_fail}
    Let $T(s),\bar{g}(s)$ be defined as in \eqref{eq_T_explict} and \eqref{eq_g_bar}, respectively. Let $f(s)=1$. Suppose $z\in\mathbb{R}$ is a real zero of all $g_i(s),i\in [n]$ with multiplicity 1, i.e. $\forall i\in [n]$, $g_i(z)=0, \lim_{s\ra z}\frac{g_i(s)}{s-z}\neq 0$. Then for any set $S$ containing $z$ in its interior, $\exists \lambda,M>0$ such that, for all Laplacian matrices $L$ satisfying $\lambda_2(L)\geq \lambda$, we have
    \ben
        \sup_{s\in S}\lV T(s)-\frac{1}{n}\bar{g}(s)\one\one^T\rV\geq M\,.
    \een
\end{thm}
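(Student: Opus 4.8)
The plan is to show that, for every Laplacian with $\lambda_2(L)$ large enough, there is a single point $s^*\in S$ close to $z$ at which $\|T(s^*)-\tfrac1n\bar g(s^*)\one\one^T\|$ exceeds a fixed constant. Since $z$ is interior to $S$ this leaves room to place $s^*$ arbitrarily near $z$, and a lower bound at one point immediately lower-bounds the supremum. The mechanism is that the common simple zero forces a pole of $T(s)$ to migrate toward $z$ as connectivity grows, so $\|T\|$ stays large near $z$ while the rank-one target $\tfrac1n\bar g\one\one^T$ vanishes there.

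First I would use the multiplicity-one hypothesis to factor the common zero. Writing $g_i(s)=(s-z)h_i(s)$ with each $h_i$ rational, analytic, and nonzero at $z$, putting $w:=s-z$ and $D(s):=\dg\{1/h_i(s)\}$, the identity $T(s)=(G^{-1}(s)+L)^{-1}$ (valid since $f\equiv 1$) becomes the exact expression $T(s)=w\,(D(s)+wL)^{-1}$. Here $D$ is analytic near $z$ with $D(z)=D_0:=\dg\{1/h_i(z)\}$ invertible, so $\|D(s)-D_0\|\le C|w|$ on a disk $U(z,r_0)$; moreover $\bar g(s)=w\big(\tfrac1n\sum_i 1/h_i(s)\big)^{-1}$ is continuous at $z$ with $\bar g(z)=0$.

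Next I would locate a near-singularity of $D_0+wL$. Setting $A:=D_0^{-1}=\dg\{h_i(z)\}$, the matrix $AL$ annihilates $\one$ (because $L\one=0$) and, since $A$ is invertible, has exactly $n-1$ nonzero eigenvalues $\mu_2,\dots,\mu_n$. The crucial step is the spectral estimate $|\mu_k|\ge c\,\lambda_2(L)$ with $c>0$ depending only on the node dynamics: from $Lx=\mu D_0x$ with $\mu\neq0$ one gets $\mu=x^{*}Lx/(x^{*}D_0x)$, whence $|\mu|\ge \lambda_2(L)\,\|x_\perp\|^2/(\|D_0\|\,\|x\|^2)$ with $x_\perp$ the component of $x$ orthogonal to $\one$; projecting the equation onto $\one$ gives $\one^{T}D_0x=0$, and because $\sum_i 1/h_i(z)\neq0$ (as $z$ is a simple zero of $\bar g$) this controls the $\one$-component of $x$, yielding $\|x\|\le C'\|x_\perp\|$ and hence the estimate. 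Fixing such a $\mu$ and setting $s^*:=z-1/\mu$, we get $\det(D_0+w^*L)=\det D_0\cdot\det(I+w^*AL)=0$, so $\sigma_1(D_0+w^*L)=0$, while $|w^*|=1/|\mu|\le 1/(c\,\lambda_2(L))$ ensures $s^*\in U(z,r_0)\subset S$ once $\lambda_2(L)\ge\lambda$.

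Finally I would convert the near-singularity into the bound. By Weyl's inequality \eqref{eq_weyl_ineq_sv}, $\sigma_1(D(s^*)+w^*L)\le \sigma_1(D_0+w^*L)+\|D(s^*)-D_0\|\le C|w^*|$, hence $\|(D(s^*)+w^*L)^{-1}\|\ge 1/(C|w^*|)$ and $\|T(s^*)\|=|w^*|\,\|(D(s^*)+w^*L)^{-1}\|\ge 1/C$ (and if $D(s^*)+w^*L$ happens to be singular, $s^*$ is a genuine pole of $T$ and one instead uses a nearby point). Since $\|\tfrac1n\bar g(s^*)\one\one^T\|=|\bar g(s^*)|\to0$ as $s^*\to z$, the triangle inequality gives $\|T(s^*)-\tfrac1n\bar g(s^*)\one\one^T\|\ge \tfrac1C-|\bar g(s^*)|\ge \tfrac{1}{2C}=:M$ for $\lambda_2(L)$ large, and taking the supremum over $S$ finishes the argument. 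The main obstacle is the spectral estimate $|\mu_k|\ge c\,\lambda_2(L)$ for the non-symmetric, possibly indefinite pencil $(L,D_0)$: the Rayleigh-quotient bound must be paired with control of the eigenvector's $\one$-component, which is exactly where the real, simple-zero hypotheses (guaranteeing $D_0$ real and invertible and $\sum_i 1/h_i(z)\neq0$) enter.
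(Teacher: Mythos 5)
Your strategy is essentially the paper's: factor the common simple zero as $g_i(s)=(s-z)h_i(s)$, exhibit a point $s^*=z-1/\mu$ at which the resolvent factor is (nearly) singular because $\mu$ is an eigenvalue of a weighted Laplacian whose modulus grows like $\lambda_2(L)$, control the perturbation $D(s^*)-D_0$ via Weyl's inequality \eqref{eq_weyl_ineq_sv}, and subtract the vanishing target $|\bar g(s^*)|$. The paper symmetrizes first, working with $L_H=H_0^{1/2}LH_0^{1/2}$ for $H_0=\dg\{h_i(z)\}$ and taking $\mu=\lambda_n(L_H)$, whereas you use the non-symmetric product $D_0^{-1}L$; your identity $T(s)=w\,(D(s)+wL)^{-1}$ is a somewhat cleaner route to the lower bound than the paper's chain of singular-value inequalities.

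The one step that does not hold as written is the assertion that $D_0^{-1}L$ ``has exactly $n-1$ nonzero eigenvalues'' because $D_0^{-1}$ is invertible. Invertibility gives rank $n-1$, not $n-1$ nonzero eigenvalues: for $D_0^{-1}=\dg\{1,-1\}$ and $L$ the Laplacian of a single edge, $D_0^{-1}L$ is nilpotent and has no nonzero eigenvalue at all, so there is no $\mu$ to choose; your Rayleigh-quotient estimate only bounds nonzero eigenvalues \emph{if they exist}. The fix is the condition you already invoke, $\sum_i h_i^{-1}(z)\ne 0$: by the matrix-tree theorem the coefficient of $t$ in $\det(tD_0-L)$ equals $(-1)^{n-1}\tau\sum_i h_i^{-1}(z)$ with $\tau>0$ the weighted spanning-tree count, so $t=0$ is a simple root and $n-1$ nonzero eigenvalues do exist. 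Be aware that $\sum_i h_i^{-1}(z)\ne0$ is not literally among the theorem's hypotheses (it is equivalent to $z$ actually being a zero of $\bar g$, which both your argument and the paper's need in order to make $|\bar g|$ small near $z$); the paper's proof makes the same implicit assumption---in fact a stronger one, since $L_H=H_0^{1/2}LH_0^{1/2}$ being real and positive semidefinite requires all $h_i(z)>0$---so this is a shared caveat rather than a defect specific to your argument.
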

\ifthenelse{\boolean{archive}}{The proof is shown in Appendix \ref{app_proof_thm_unifm_conv_fail}}{The proof considers the first-order Taylor approximation of $G(s)$ around the zero $z$. Due to the space constraint, we refer interested readers to~\cite{min2021a} for the proof}. Although Theorem \ref{thm_unifm_conv_fail} only disproves the uniform convergence around one particular type of zero of $\bar{g}(s)$, namely, such zero must have nodal multiplicity $n$ and it must be a real zero with multiplicity 1 for all $g_i(s)$, we believe a similar result holds for any zero that is ``shared" by multiple $g_i(s)$. However, a complete proof is left for future research. 
    
We now provide another point of view of this phenomenon. Suppose in a network of size 2, $z_1$ is exclusively zero of $g_1(s)$ and $z_2$ exclusively zero of $g_2(s)$. If $z_1$ and $z_2$ are close enough, there must be $p$ a pole of $\bar{g}(s)$ in the small neighborhood of $z_1$ or $z_2$. To be more clear, see the following example.
\begin{example}
    Let $g_1(s)=\frac{s+a}{s^2},g_2(s)=\frac{s+a+\epsilon}{s^2}$, then $z_1=-a$ and $z_2=-a-\epsilon$ are the zeros respectively. The coherent dynamics is given by
    \ben
        \bar{g}(s)=\frac{2}{g_1^{-1}(s)+g_2^{-1}(s)}=\frac{(s+a)(s+a+\epsilon)}{2s^2(s+a+\epsilon/2)}\,.
    \een
    $\bar{g}(s)$ has a pole $p=-a-\epsilon/2$ that is in both $\epsilon/2$-neighborhoods of $z_1$ and $z_2$.
\end{example}
By Theorem \ref{thm_ptw_conv_pole}, we know that $p$ is asymptotically a pole of $T(s)$, in other words, there is a pole of $T(s)$ approaching $p$, as the network connectivity increases. Moreover, $z_1$ and $z_2$ being close enough suggests that $p$ is close to $z_1$ and $z_2$, as we see in the example. Consequently, two zeros $z_1,z_2$ being close introduces a pole of $T(s)$ asymptotically approaching a small neighborhood of $z_1,z_2$. Consider the limit case where the two zeros collapse into a shared zero of $g_1(s),g_2(s)$, we should expect a pole of $T(s)$ approaching this shared zero.
    
A similar argument can be made for $m$ zeros of different nodes being close to each other, introducing $m-1$ poles of $\bar{g}(s)$ in the small neighborhood that asymptotically attract poles of $T(s)$. This is by no means a rigorous proof of how uniform convergence fails around a zero ``shared" by multiple $g_i(s)$, but rather a discussion providing  intuition behind such behavior.
    
At this point, we have proved uniform convergence of $T(s)$ on a compact set $S$ that does not include 1) zeros of $\bar{g}(s)$ with Nodal Multiplicity larger than $1$, or 2) poles of $\bar{g}(s)$.
    
In particular, we find that zero with Nodal Multiplicity larger than $1$, i.e. it is "shared" by multiple $g_i(s)$, attracts pole of $T(s)$ as network connectivity increases, which suggests that uniform convergence of $T(s)$ fails around such point. Although we only provide the proof for special cases as in Theorem \ref{thm_unifm_conv_fail}, we conjecture such a statement is true in general and we left more careful analysis for future research. 
    
\subsection{Uniform Convergence on Right-Half Complex Plane}\label{ssec:rhp}
Aside from uniform convergence on compact sets, uniform convergence over the closed right-half plane $\{s:Re(s)\geq 0\}$ is of great interest as well. If we were to establish uniform convergence over the right-half plane for a certain $T(s)$, then given $\bar{g}(s)$ to be stable, the convergence in $\mathcal{H}_\infty$-norm of $T(s)$ towards $\frac{1}{n}\bar{g}(s)\one\one^T$ could be guaranteed, i.e., $T(s)$ converges to $\frac{1}{n}\bar{g}(s)\one\one^T$ as a system. One trivial consequence is that we can infer the stability of $T(s)$ with a large enough $\lambda_2(L)$ by the stability of $\bar{g}(s)$. Furthermore, given any $\mathcal{L}_2$ input signal, we can make the $\mathcal{L}_2$ difference between output responses of $T(s)$ and $\frac{1}{n}\bar{g}(s)\one\one^T$ arbitrarily small by increasing the network connectivity. 
    
Unfortunately, for most networks, we encounter with the same issue we have seen when dealing with zeros of $\bar{g}(s)$: When $g_i(s)$ is strictly proper, $g_i(s)\ra 0$ as $|s|\ra +\infty$, thus, $\infty$ can be viewed as a zero of $g_i(s)$ by regarding $g_i(s)$ as functions defined on extended complex plane $\compl \cup \{\infty\}$. Then for networks that include more than one node whose transfer functions are strictly proper, there will be poles of $T(s)$ approaching $\{\infty\}$ as $\lambda_2(L)$ increases. Notice that those poles could approach $\{\infty\}$ either from the left-half or right-half plane. Apparently, the uniform convergence on the right-half plane will not hold if the latter happens, but even when the former happens, we still need to quantify the effect of such poles because they are approaching the boundary of our set $\{s:Re(s)\geq 0\}$. A similar argument can be made for any set of the form $\{s:Re(s)=\sigma\}=\{\sigma+j\omega:\omega\in[-\infty,+\infty]\}$, which we mentioned in \ref{ssec:goal}.
    
Although proving (or disproving) uniform convergence on the right-half plane for general networks is quite challenging, it is much more straightforward for networks consist of only non-strictly proper nodes, as shown in the following theorem:  
\begin{thm}[Sufficient condition for uniform convergence on right-half plane]\label{thm_unifm_conv_right_plane}
    Let $T(s),\bar{g}(s)$ be defined as in \eqref{eq_T_explict} and \eqref{eq_g_bar}, respectively. Suppose $g_i(s), i\in [n]$ are not strictly proper, $\bar{g}(s)$ is stable, and $\mathcal{N}(s)\leq 1, \forall Re(s)\geq 0$, then we have 
    \ben
        \lim_{\lambda_2(L)\ra +\infty}\sup_{Re(s)\geq 0}\lV T(s)-\frac{1}{n}\bar{g}(s)\one\one^T\rV=0\,.
    \een
\end{thm}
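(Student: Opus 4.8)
The plan is to reduce the right-half-plane statement to the compact-set result of Theorem \ref{thm_unifm_conv_compact} by handling the behavior near $s=\infty$ separately. The key observation is that the hypothesis that each $g_i(s)$ is \emph{not} strictly proper prevents $\infty$ from being a ``zero at infinity'' of any node: as $|s|\to\infty$, each $g_i(s)$ tends to a nonzero finite limit $d_i:=\lim_{|s|\to\infty}g_i(s)$ (its feedthrough term), so $g_i^{-1}(s)$ stays bounded and bounded away from $0$ for $|s|$ large. This is precisely what was missing in the strictly-proper discussion and is what keeps the poles of $T(s)$ from escaping to $\infty$.

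First I would split the closed right-half plane as $\{Re(s)\geq 0\} = S_R \cup S_R^c$, where $S_R=\{s:Re(s)\geq 0,\ |s|\leq R\}$ is a compact set and $S_R^c=\{s:Re(s)\geq 0,\ |s|> R\}$ is its unbounded complement, for a radius $R$ to be chosen. On the compact piece $S_R$, for $R$ large the set satisfies Assumption \ref{asmp_unifm} (since $f(s)$ is rational and, after choosing $R$ beyond all its zeros and poles in the right-half plane, $|f|$ is bounded above and below on $S_R$), contains no pole of $\bar g(s)$ because $\bar g$ is assumed stable, and satisfies $\mathcal N(s)\leq 1$ by hypothesis; hence Theorem \ref{thm_unifm_conv_compact} applies directly and gives $\sup_{s\in S_R}\lV T(s)-\frac1n\bar g(s)\one\one^T\rV\to 0$ as $\lambda_2(L)\to\infty$.

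The second step is the tail estimate on $S_R^c$, and this is where the non-strict-properness is used quantitatively. For $|s|$ large, $\max_i|g_i^{-1}(s)|$ is uniformly bounded by some $M_2$ and $|\bar g(s)|$ by some $M_1$, uniformly over the whole tail, so I would invoke Lemma \ref{lem_reg_norm_bd} with these \emph{global} constants $M_1,M_2$. The bound \eqref{eq_T_norm_bd} then reads $\lV T(s)-\frac1n\bar g(s)\one\one^T\rV\leq (M_1M_2+1)^2/(|f(s)|\lambda_2(L)-M_2-M_1M_2^2)$, valid uniformly on $S_R^c$ provided $\inf_{s\in S_R^c}|f(s)|=:F_l'>0$; since $f$ is rational and non-vanishing at infinity under the standing assumptions, this infimum is positive for $R$ large. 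Taking the supremum over $S_R^c$ and letting $\lambda_2(L)\to\infty$ drives this tail bound to $0$ as well. Combining the two regions, the supremum over $\{Re(s)\geq 0\}$ is the maximum of the two suprema, each going to $0$, which yields the claim.

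The main obstacle I anticipate is the tail step, specifically verifying that $\inf_{s\in S_R^c}|f(s)|>0$ and that $\bar g(s)$ together with all $g_i^{-1}(s)$ are genuinely bounded at infinity rather than merely pointwise finite. The subtlety is that ``not strictly proper'' controls $g_i$ at $\infty$ but one must also ensure no finite right-half-plane pole of $f(s)$ or accumulation of the $g_i$ behavior spoils the uniform constants on the \emph{unbounded} region; choosing $R$ to exceed the modulus of every relevant pole and zero of $f(s)$ and $\bar g(s)$ in $\{Re(s)\geq 0\}$ is what makes the uniform constants exist. Handling the interplay of the two regions at the interface $|s|=R$ (ensuring the constants $M_1,M_2$ chosen for the tail are compatible, and that $\bar g$'s stability rules out any right-half-plane pole of $\bar g$ that could sit on $S_R^c$) is the delicate bookkeeping, but conceptually it reduces cleanly to the two already-established ingredients, Theorem \ref{thm_unifm_conv_compact} and Lemma \ref{lem_reg_norm_bd}.
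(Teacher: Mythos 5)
Your proposal matches the paper's own proof essentially step for step: the same decomposition of the closed right-half plane into a compact sector $\{Re(s)\geq 0,\ |s|\leq R\}$ handled by Theorem \ref{thm_unifm_conv_compact} and an unbounded tail handled by Lemma \ref{lem_reg_norm_bd}, with the uniform constants $M_1,M_2$ on the tail supplied exactly as you say by non-strict-properness of the $g_i$ and stability of $\bar g$. Your added care about verifying $\inf_{|s|>R}|f(s)|>0$ is a detail the paper's proof passes over silently, but it does not change the argument.
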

\begin{proof}
    Given $R>0$, we define the following sets:
    \begin{align*}
        &S_1:=\{s\in \compl: Re(s)\geq 0, |s|\leq R\}\,,\\
        &S_2:=\{s\in \compl: Re(s)\geq 0, |s|> R\}\,.
    \end{align*}
    Apparently, $S_1\bigcup S_2=\{s\in \compl: Re(s)\geq 0\}$. Then we can show uniform convergence on right-half plane by proving uniform
    convergence on $S_1,S_2$ respectively:
    
    Firstly, because all $g_i(s)$ are not strictly proper, each $g_i(s)$ converges to some non-zero value as $|s|\ra +\infty$. Then we can choose $R$ large enough so that $\max_{1\leq i\leq n}|g_i^{-1}(s)|\leq M_2, \forall s\in S_2$ for some $M_2>0$. Moreover, $|\bar{g}(s)|<M_1, \forall s\in S_2$ for some $M_1>0$ because $\bar{g}(s)$ is stable. Then the conditions in Lemma \ref{lem_reg_norm_bd} are satisfied, we have
    \be
        \sup_{s\in S_2}\lV T(s)-\frac{1}{n}\bar{g}(s)\one\one^T\rV\leq\frac{\lp M_1M_2+1\rp^2}{F_l\lambda_2(L)-M_2-M_1M_2^2}\,,\label{eq_T_norm_bd_sup_s2}
    \ee
    where $F_l=\inf_{s\in S}|f(s)|$. Taking $\lambda_2(L)\ra +\infty$ on both side of \eqref{eq_T_norm_bd_sup_s2}, we have the uniform convergence on $S_2$.
    
    Secondly, notice that $S_1$ is compact, contains no pole of $\bar{g}(s)$ and has $\mathcal{N}(s)\leq 1,\forall s\in S_1$, the uniform convergence is shown by Theorem \ref{thm_unifm_conv_compact}.
\end{proof}
{\ifthenelse{\boolean{color}}{\color{blue}}{}
\subsection{Connection to Time Domain Response}\label{ssec:to_time}
As discussed in Section \ref{ssec:goal}, we are interested in the uniform convergence of network transfer matrix $T(s)$ because uniform convergence result on the line $\{\sigma+j\omega: \omega\in[-\infty,+\infty]\}$ would allow us to show coherence in time-domain response through the inverse Laplace transform~\cite[Theorem 3.20]{Dullerud2013}
$$
    f(t)=\mathcal{L}^{-1}\{F(s)\}(t)=\frac{1}{2\pi j}\lim_{\omega\ra\infty}\int^{\sigma+j\omega}_{\sigma-j\omega}e^{st}F(s)ds\,.
$$
However, we have seen in Section \ref{ssec:rhp} establishing such uniform convergence is challenging when $g_i(s)$ are strictly proper. Nonetheless, if we assume the input signal decays sufficiently fast in high frequency range, we can prove the following:
\begin{thm}\label{thm_to_time}
    Given $\epsilon>0$ and a real input signal vector with its Laplace transform $U(s)$. Suppose for some $\gamma>0$, $\sigma>0$, $\omega_0>0$, we have 
    \begin{enumerate}
        \item $\sup_{Re(s)>\sigma}\|U(s)\|$ is finite;
        \item $$
         \lim_{\omega\ra \infty}\int_{\sigma+j\omega_0}^{\sigma+j\omega} \|U(s)\|ds\leq \frac{2\pi\epsilon}{6e^\sigma\gamma}\,,
        $$
        \item $\|\bar{g}(s)\|_{\mathcal{H}_\infty}\leq \gamma$;
        \item
        $\|T(s)\|_{\mathcal{H}_\infty}\leq \gamma$, for any Laplacian matrix $L$;
    \end{enumerate}
    Let $y_i(t)$ be the response of $i$-th node when the network input is $U(s)$, and let $\bar{y}(t)$ be the response of $\bar{g}(s)$ to $\frac{\one^T}{n}U(s)$. Then there exists $\lambda=\mathcal{O}(\frac{e^\sigma\gamma}{\epsilon})$, such that If $\lambda_2(L)\geq \lambda$, we have
    $$
        \sup_{t>0}|y_i(t)-\bar{y}(t)|\leq \epsilon\,.
    $$
\end{thm}
\ifthenelse{\boolean{archive}}{The proof is shown in Appendix \ref{app_proof_to_time}}{The proof uses the inverse formula to compute $y_i(t)-\bar{g}(s)$ and breaks the integral on $\{\sigma+j\omega: \omega\in[-\infty,+\infty]\}$ into the integral on low-frequency range $\{\sigma+j\omega: \omega\in[-w_0,+w_0]\}$ and one on high-frequency range. The former can be made arbitrarily small by Theorem \ref{thm_unifm_conv_compact}, and the latter is small from our assumption. We refer interested readers to~\cite{min2021a} for the proof}. The non-asymptotic rate $\mathcal{O}(\frac{e^\sigma\gamma}{\epsilon})$ has hidden constant that implicitly depends on $g_i(s)$,$f(s)$,$U(s)$ and the choice of $w_0$. It is yet independent of $L$. 

Theorem \ref{thm_to_time} made several assumptions: The first one makes sure we can integrate on $\{\sigma+j\omega: \omega\in[-\infty,+\infty]\}$. The second condition requires the input signal decays sufficiently fast in high-frequency range. The third assumption relies on the stability of $\bar{g}(s)$ and can be verified easily. The last assumption is generally hard to verify, but it holds when the network satisfies additional properties. To present the result, we first define the following
\begin{dfn}
    A rational transfer function $g(s)$ is positive real (PR) if
    $$
        Re(g(s))\geq 0, \forall Re(s)>0\,.
    $$
    A rational transfer function $g(s)$ is output strictly passive (OSP) if
    $$
        Re(g(s))\geq \epsilon |g(s)|^2, \forall Re(s)>0\,,
    $$
    for some $\epsilon>0$.
\end{dfn}
With these definition, we have
\begin{thm}\label{thm_passive_to_stable}
    Suppose all $g_i(s),i=1,\cdots,n$ are OSP, and $f(s)$ is PR. There exists $\gamma>0$, such that given any positive semidefinite matrix $L$, we have
    $$
        \|T(s)\|_{\mathcal{H}_\infty}\leq \gamma\,.
    $$
\end{thm}
\ifthenelse{\boolean{archive}}{The proof is shown in Appendix \ref{app_proof_to_time}}{ We refer interested readers to~\cite{min2021a} for the proof}. This theorem shows the stability of the network when $g_i(s)$ are OSP and $f(s)$ is PR, regardless of the network connectivity. The coherence in time-domain response for such network can be understood by Theorem \ref{thm_to_time}.
}

\section{Coherence and Dynamics Concentration in Large-scale Networks}\label{sec:dymC}

Until now we looked into convergence results of $T(s)$ for networks with fixed size $n$. However, one could easily see that such coherence does not depend on the network size $n$. In particular, the right-hand side of \eqref{eq_T_norm_bd} only depends on $n$ via $\lambda_2(L)$ as long as the bounds regarding $g_i(s)$, i.e. $M_1$ and $M_2$ do not scale with respect to $n$. This implies that coherence can emerge as the network size increases. This is the topic of this section.
    
More interestingly, in a stochastic setting where all $g_i(s)$ are unknown transfer functions independently drawn from some distribution, their harmonic mean eventually converges in probability to a deterministic transfer function as the network size increases. Consequently, a large-scale stochastic network concentrates to deterministic a system. We term this phenomenon \emph{dynamics concentration}.
\subsection{Coherence in Large-scale Networks}
To start with, we revise the problem settings to account for variable network size: Let $\{g_i(s), i\in \mathbb{N}_+\}$ be a sequence of transfer functions, and $\{L_n, n\in\mathbb{N}_+\}$ be a sequence of real symmetric Laplacian matrices such that $L_n$ has order $n$, particularly, let $L_1=0$. Then we define a sequence of transfer matrix $T_n(s)$ as
\be
    T_n(s) = \lp I_n+G_n(s)L_n\rp^{-1} G_n(s)\,,\label{eq_T_n_explicit}
\ee
where $G_n(s)= \dg\{g_1(s),\cdots,g_n(s)\}$. This is exactly the same transfer matrix  shown in Fig.\ref{blk_p_n} for a  network of size $n$. We can then define the coherent dynamics for every $T_n(s)$ as
\be
    \bar{g}_n(s) = \lp\frac{1}{n}\sum_{i=1}^ng_i^{-1}(s)\rp^{-1}\,.\label{eq_g_n_bar}
\ee
    
For certain family $\{L_n,n\in\mathbb{N}_+\}$ of large-scale networks, the network algebraic connectivity $\lambda_2(L_n)$ increases as $n$ grows. For example, when $L_n$ is the Laplacian of a complete graph of size $n$ with all edge weights being $1$, we have $\lambda_2(L_n)=n$. As a result, network coherence naturally emerges as the network size grows. Recall that to prove the convergence of $T_n(s)$ to $\frac{1}{n}\bar{g}_n(s)\one\one^T$ for fixed $n$, we essentially seek for $M_1,M_2>0$, such that $|\bar{g}_n(s)|\leq M_1$ and $\max_{1\leq i\leq n}|g_i^{-1}(s)|\leq M_2$ for $s$ in a certain set. If it is possible to find a universal $M_1,M_2>0$ for all $n$, then the convergence results should be extended to arbitrarily large networks, provided that network connectivity increases as $n$ grows. To state this condition formally, we need the notion of uniform boundedness for a family of functions.
\begin{dfn}
    Let $\{g_i(s), i\in I\}$ be a family of complex functions indexed by $I$. Given $S\subset \compl$, $\{g_i(s), i\in I\}$ is uniformly bounded on $S$ if
    \ben
        \exists M>0\quad s.t.\quad |g_i(s)|\leq M,\quad \forall i\in I,\ \forall s\in S\,. 
    \een
\end{dfn}
Now we are ready to show uniform convergence of $T_n(s)$:  
\begin{thm}\label{thm_unifm_conv_reg_dymC}
    Let $T_n(s)$ and $\bar{g}_n(s)$ be defined as in \eqref{eq_T_n_explicit} and \eqref{eq_g_n_bar}, respectively. Suppose $\lambda_2(L_n)\ra +\infty$ as $n\ra \infty$. If both $\{g_i^{-1}(s),i\in\mathbb{N}_+\}$ and $\{\bar{g}_n(s),n\in\mathbb{N}_+\}$ are uniformly bounded on a set $S\subset \compl$. then we have
    \ben
        \lim_{n\ra \infty}\sup_{s\in S}\lV T_n(s)-\frac{1}{n}\bar{g}_n(s)\one\one^T\rV=0\,.
    \een
\end{thm}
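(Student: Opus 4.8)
The plan is to recognize that Theorem \ref{thm_unifm_conv_reg_dymC} is essentially a uniform-in-$n$ restatement of Lemma \ref{lem_reg_norm_bd}, specialized to $f(s)\equiv 1$. Indeed, the bound \eqref{eq_T_norm_bd} supplied by that lemma depends on the network size only through $\lambda_2(L_n)$; everything else is controlled by the two scalars $M_1$ and $M_2$. The entire content of the theorem is therefore to produce a single pair $M_1,M_2$ that serve simultaneously for every index $i$, every $s\in S$, and every $n\in\mathbb{N}_+$, which is exactly what the two uniform boundedness hypotheses provide.

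First I would extract the constants. Uniform boundedness of $\{\bar{g}_n(s),n\in\mathbb{N}_+\}$ on $S$ gives some $M_1>0$ with $|\bar{g}_n(s)|\leq M_1$ for all $n$ and all $s\in S$; uniform boundedness of $\{g_i^{-1}(s),i\in\mathbb{N}_+\}$ on $S$ gives some $M_2>0$ with $|g_i^{-1}(s)|\leq M_2$ for all $i$ and all $s\in S$. In particular $\max_{1\leq i\leq n}|g_i^{-1}(s)|\leq M_2$ holds for every $n$ and every $s\in S$, so the hypotheses of Lemma \ref{lem_reg_norm_bd} are met pointwise at each $s\in S$ with this common pair $(M_1,M_2)$ and with $f\equiv 1$ (which has no poles, so the ``not a pole of $f$'' requirement is vacuous).

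Next, since $\lambda_2(L_n)\ra +\infty$, I would fix $N$ so that $\lambda_2(L_n)>M_2+M_1M_2^2$ for all $n\geq N$; this is precisely the admissibility condition under which \eqref{eq_T_norm_bd} is valid. Applying that inequality at each $s\in S$ with $|f(s)|=1$, and noting that its right-hand side no longer depends on $s$, I take the supremum to obtain, for all $n\geq N$,
\[
    \sup_{s\in S}\lV T_n(s)-\frac{1}{n}\bar{g}_n(s)\one\one^T\rV\leq \frac{\lp M_1M_2+1\rp^2}{\lambda_2(L_n)-M_2-M_1M_2^2}\,.
\]
Letting $n\ra \infty$ and using $\lambda_2(L_n)\ra +\infty$ drives the right-hand side to zero, which finishes the argument.

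There is no genuine analytical obstacle here; the subtlety worth flagging is conceptual. In the fixed-size uniform result (Theorem \ref{thm_unifm_conv_reg_compact}) the constants $M_1,M_2$ were obtained from compactness of $S$ together with continuity of $\bar{g}$ and the $g_i^{-1}$. Here $S$ need not be compact and the family of node dynamics is infinite, so those tools are unavailable; the uniform boundedness hypothesis is exactly the replacement that guarantees $n$-independence of the constants, and hence that the single pointwise estimate of Lemma \ref{lem_reg_norm_bd} upgrades to uniform convergence across the whole sequence. This also matches the observation made at the start of Section \ref{sec:dymC}, that coherence is insensitive to $n$ except through $\lambda_2(L_n)$.
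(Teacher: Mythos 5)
Your proof is correct and follows essentially the same route as the paper: extract a single pair $M_1,M_2$ from the two uniform boundedness hypotheses, apply Lemma \ref{lem_reg_norm_bd} to get an $s$-independent bound, and let $n\to\infty$ using $\lambda_2(L_n)\to+\infty$. Your observation that $f\equiv 1$ in the definition \eqref{eq_T_n_explicit} is accurate (the paper's proof retains an $F_l$ factor, which equals $1$ here), and the rest matches the paper's argument step for step.
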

\begin{proof}
    Since both $\{g_i^{-1}(s),i\in\mathbb{N}_+\}$ and $\{\bar{g}_n(s),n\in\mathbb{N}_+\}$ are uniformly bounded on $S$, $\exists M_1,M_2>0$ s.t. $|\bar{g}_n(s)|\leq M_1$ and $\max_{1\leq i\leq n}|g_i^{-1}(s)|\leq M_2$ for every $n\in \mathbb{N}_+$ and $s\in S$. By Lemma \ref{lem_reg_norm_bd}, $\forall n\in\mathbb{N}_+$,
    \be
        \sup_{s\in S}\lV T_n(s)-\frac{1}{n}\bar{g}_n(s)\one\one^T\rV\leq 
        \frac{\lp M_1M_2+1\rp^2}{F_l\lambda_2(L_n)-M_2-M_1M_2^2}\,,\label{eq_T_norm_bd_dymC}
    \ee
    where $F_l=\inf_{s\in S}|f(s)|$. We already assumed that $\lambda_2(L_n)\ra +\infty$ as $n\ra +\infty$, therefore the proof is finished by taking $n\ra +\infty$ on both sides of \eqref{eq_T_norm_bd_dymC}.
\end{proof}
\begin{rem}
    Similarly to Theorem \ref{thm_unifm_conv_reg_compact}, uniform convergence is achieved on a set away from  zeros or poles of $\bar{g}(s)$. The uniform boundedness condition is preventing any point in the closure of $S$ from asymptotically becoming a zero of any $g_i^{-1}(s)$ or a pole of $\bar{g}(s)$ as $n$ increases.
\end{rem}
\subsection{Dynamics Concentration in Large-scale Networks}
Now we consider the cases where the node dynamics are unknown (stochastic). For simplicity, we constraint our analysis to the setting where the node dynamics are independently sampled from the same random rational transfer function with all or part of the coefficients are random variables, i.e. the nodal transfer functions are of the form
\begin{equation}\label{eq_random_tf}
    g_i(s) \sim \frac{b_ms^m+\dots b_1s+b_0}{a_ls^l+\dots a_1s+a_0}\,,
\end{equation}
for some $m,l>0$, where $b_0,\cdots,b_m$, $a_0,\cdots, a_l$ are random variables. 

To formalize the setting, we firstly define the random transfer function to be sampled. Let $\Omega=\mathbb{R}^d$ be the sample space, $\mathcal{F}$ the Borel $\sigma$-field of $\Omega$, and $\prob$ a probability measure on $\Omega$. A sample $w\in\Omega$ thus represents a $d$-dimensional vector of coefficients. We then define a random rational transfer function $g(s,w)$ on $(\Omega,\mathcal{F},\mathbb{P})$ such that all or part of the coefficients of $g(s,w)$ are random variables. Then for any $w_0\in \Omega$, $g(s,w_0)$ is a rational transfer function.
    
Now consider the probability space $(\Omega^\infty,\mathcal{F}^\infty,\mathbb{P}^\infty)$. Every $\mathbf{w}\in \Omega^\infty$ give an instance of samples drawn from our random transfer function: $$g_i(s,w_i):=g(s,w_i),i\in\mathbb{N}_+\,,$$ where $w_i$ is the $i$-th element of $\mathbf{w}$. By construction, $g_i(s,w_i), i\in\mathbb{N}_+$ are i.i.d. random transfer functions. Moreover, for every $s_0\in\compl$, $g_i(s_0,w_i),i\in\mathbb{N}_+$ are i.i.d. random complex variables taking values in the extended complex plane (presumably taking value $\infty$).

Now given $\{L_n, n\in\mathbb{N}_+\}$ a sequence of $n\by n$ real symmetric Laplacian matrices, consider the random network of size $n$ whose nodes are associated with the dynamics $g_i(s,w_i),i=1,2,\cdots,n$ and coupled through $L_n$. The transfer matrix of such a network is given by
\be\label{eq_T_stoch}
    T_n(s,\mathbf{w})=(I_n+G_n(s,\mathbf{w})L_n)^{-1}G_n(s,\mathbf{w})\,,
\ee
where $G_n(s,\mathbf{w})=\dg\{g_1(s,w_1),\cdots,g_n(s,w_n)\}$. 
    
Then under this setting, the coherent dynamics of the network is given by
\be\label{eq_g_bar_stoch}
    \bar{g}(s,\mathbf{w}) = \lp \frac{1}{n}\sum_{i=1}^ng_i^{-1}(s,w_i)\rp^{-1}\,.
\ee
    
Now given a compact set $S\subset \compl$ of interest, and assuming suitable conditions on the distribution of $g(s,w)$, we expect that the random coherent dynamics $\bar{g}(s,\mathbf{w})$ would converge uniformly in probability to its expectation
\be\label{eq_expc_g}
    \hat{g}(s)=\lp\expc g^{-1}(s,w))\rp^{-1}:=\lp \int_{\Omega}g^{-1}(s,w)d\prob(w)\rp^{-1}\,,       
\ee
for all $s\in S$, as $n\ra \infty$. The following Lemma provides a sufficient condition for this to hold. 
\begin{lem}\label{lem_unifm_prob_conv_compact}
    Consider the probability space $(\Omega^\infty,\mathcal{F}^\infty,\prob^\infty)$. Let $\bar{g}_n(s,\mathbf{w})$ and $\hat{g}(s)$ be defined as in \eqref{eq_g_bar_stoch} and \eqref{eq_expc_g}, respectively, and given a compact set $S\subset \compl$, let the following conditions hold:
    \begin{enumerate}
        \item $g^{-1}(s,w)$ is uniformly bounded on $S\times \Omega$;
        \item $\{\bar{g}_n(s,\mathbf{w}),n\in\mathbb{N}_+\}$ are uniformly bounded on $S\times \Omega^\infty$;
        \item $\exists L>0$ s.t. $|g_1^{-1}(s_1,w)-g_1^{-1}(s_2,w)|\leq L|s_1-s_2|$, $\forall w\in\Omega,\forall s_1,s_2\in S$;
        \item $\hat{g}(s)$ is uniformly continuous.
    \end{enumerate}
    Then, $\forall \epsilon>0$, we have
    \ben
        \lim_{n\ra \infty}\prob\lp\sup_{s\in S}\lV \frac{1}{n}\bar{g}_n(s,\mathbf{w})\one\one^T-\frac{1}{n}\hat{g}(s)\one\one^T\rV\geq \epsilon\rp=0\,.
    \een
\end{lem}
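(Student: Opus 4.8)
The plan is to reduce the matrix statement to a scalar uniform law of large numbers and then promote pointwise convergence to uniform convergence by a finite-net (equicontinuity) argument. First I would simplify the spectral norm. Since $\frac{1}{n}\one\one^T$ is an orthogonal projection with $\lV \frac{1}{n}\one\one^T\rV = 1$, the difference in the statement is a complex scalar times this projection, so
\ben
\lV \tfrac{1}{n}\bar{g}_n(s,\mathbf{w})\one\one^T-\tfrac{1}{n}\hat{g}(s)\one\one^T\rV = |\bar{g}_n(s,\mathbf{w})-\hat{g}(s)|\,.
\een
Thus the claim is equivalent to showing $\sup_{s\in S}|\bar{g}_n(s,\mathbf{w})-\hat{g}(s)|\ra 0$ in probability.

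Because the harmonic mean is not itself an empirical average, I would pass to reciprocals. Writing $h_n(s,\mathbf{w})=\bar{g}_n^{-1}(s,\mathbf{w})=\frac{1}{n}\sum_{i=1}^n g_i^{-1}(s,w_i)$ and $\hat{h}(s)=\hat{g}^{-1}(s)=\expc g^{-1}(s,w)$, the quantity $h_n$ is a genuine empirical mean of i.i.d.\ summands. From the algebraic identity $\bar{g}_n-\hat{g}=\bar{g}_n\hat{g}\,(\hat{h}-h_n)$, together with the uniform bound on $\bar{g}_n$ from condition (2) and the boundedness of $\hat{g}$ (which follows from condition (4) and compactness of $S$), there is a constant $C>0$ with $|\bar{g}_n(s,\mathbf{w})-\hat{g}(s)|\le C\,|h_n(s,\mathbf{w})-\hat{h}(s)|$ for all $s\in S$. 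Hence it suffices to establish the uniform law of large numbers $\sup_{s\in S}|h_n(s,\mathbf{w})-\hat{h}(s)|\ra 0$ in probability.

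For this uniform law I would combine three ingredients. (i) \emph{Pointwise convergence:} for fixed $s$, condition (1) makes $\{g_i^{-1}(s,w_i)\}$ i.i.d.\ and uniformly bounded with mean $\hat{h}(s)$, so the weak law of large numbers gives $h_n(s,\mathbf{w})\ra \hat{h}(s)$ in probability. (ii) \emph{Equicontinuity:} condition (3) yields $|h_n(s_1,\mathbf{w})-h_n(s_2,\mathbf{w})|\le \frac{1}{n}\sum_i|g_i^{-1}(s_1,w_i)-g_i^{-1}(s_2,w_i)|\le L|s_1-s_2|$, so $h_n$ is $L$-Lipschitz uniformly in $\mathbf{w}$ and $n$; moving the Lipschitz bound under the expectation shows $\hat{h}$ is $L$-Lipschitz as well. (iii) \emph{Compactness:} for $\delta>0$ take a finite $\delta$-net $\{s_1,\dots,s_N\}$ of $S$; choosing for each $s$ a net point $s_j$ with $|s-s_j|<\delta$ and applying the triangle inequality with (ii) gives
\ben
\sup_{s\in S}|h_n(s,\mathbf{w})-\hat{h}(s)|\le \max_{1\le j\le N}|h_n(s_j,\mathbf{w})-\hat{h}(s_j)|+2L\delta\,.
\een

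Finally, given $\epsilon>0$ I would fix $\delta$ with $2L\delta<\epsilon/2$, reducing the event in question to $\max_{j}|h_n(s_j,\mathbf{w})-\hat{h}(s_j)|\ge \epsilon/2$; a union bound over the finitely many net points followed by the pointwise weak law from (i) at each $s_j$ sends this probability to $0$ as $n\ra\infty$. The main obstacle is exactly the promotion of pointwise convergence to uniform convergence over the continuum $S$: the pointwise weak law is routine, but controlling the supremum requires the (deterministic) equicontinuity supplied by the Lipschitz hypothesis (3), which is what lets us dominate the supremum by a finite maximum plus a controllable error $2L\delta$. Conditions (2) and (4) play only the supporting role of keeping the factor $\bar{g}_n\hat{g}$ bounded so that uniform control of $h_n$ transfers back to $\bar{g}_n$.
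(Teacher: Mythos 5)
Your proof is correct, and it reaches the same conclusion by a route that is close in spirit but genuinely more self-contained than the paper's. Both arguments reduce the matrix norm to the scalar $|\bar{g}_n(s,\mathbf{w})-\hat{g}(s)|$, pass to the reciprocals $\bar{g}_n^{-1}=\frac{1}{n}\sum_i g_i^{-1}$ so that a law of large numbers applies, and then upgrade pointwise to uniform convergence using Lipschitz control in $s$. The difference is in how the upgrade is done: the paper establishes pointwise convergence via Chebyshev, proves a stochastic-equicontinuity bound for $\bar{g}_n$ itself, and then invokes Corollary 2.2 of Newey (1991) as a black box to conclude uniform convergence; you instead keep the whole uniform argument at the level of the empirical mean $h_n=\bar{g}_n^{-1}$, prove it is $L$-Lipschitz uniformly in $n$ and $\mathbf{w}$ (correctly retaining the $1/n$ factor, which the paper's displayed equicontinuity bound actually loses, yielding a spurious Lipschitz constant $nM_1^2L$), and finish with an explicit finite $\delta$-net plus union bound. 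Your transfer back to $\bar{g}_n$ via $|\bar{g}_n-\hat{g}|=|\bar{g}_n\hat{g}|\,|h_n-\hat{h}|\le C|h_n-\hat{h}|$, using condition (2) and the boundedness of $\hat{g}$ on the compact $S$ from condition (4), is also cleaner than the paper's detour through the inequality $|\hat{g}^{-1}(s)|\ge \frac{1}{M_1}-|\bar{g}_n^{-1}(s,\mathbf{w})-\hat{g}^{-1}(s)|$. What your version buys is an elementary, fully explicit argument with no external reference; what the paper's version buys is a template that extends to weaker (merely stochastic, non-deterministic) equicontinuity hypotheses, which your deterministic net argument would not cover. One cosmetic remark: your step (i) only needs Chebyshev on the bounded i.i.d.\ complex summands (applied to real and imaginary parts), which is exactly what the paper does, so nothing is missing there.
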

\ifthenelse{\boolean{archive}}{The proof is shown in  Appendix \ref{app_proof_lem_unifm_prob_conv_compact}.}{The proof follows the standard procedure~\cite{Newey1991} for showing the uniform stochastic convergence of a random function, for which we refer interested readers to~\cite{min2021a}.} This lemma suggests that our coherent dynamics $\bar{g}_n(s,\mathbf{w})$, as $n$ increases, converges uniformly on $S$ to its expected version $\hat{g}(s)$. Then provided that the coherence is obtained as the network size grows, we would expect that the random transfer matrix $T_n(s, \mathbf{w})$ to concentrate to a deterministic one $\frac{1}{n}\hat{g}(s)\one\one^T$, as the following theorem shows.
\begin{thm}\label{thm_unifm_prob_conv_compact}
    Given probability space $(\Omega^\infty,\mathcal{F}^\infty,\prob^\infty)$. Let $T_n(s,\mathbf{w})$ and $\hat{g}(s)$ be defined as in \eqref{eq_T_stoch} and \eqref{eq_expc_g}, respectively. Suppose $\lambda_2(L_n)\ra +\infty$ as $n\ra +\infty$. Given a compact set $S\subset \compl$, if all the conditions in Lemma \ref{lem_unifm_prob_conv_compact} hold, then $\forall \epsilon>0$, we have
    \ben
        \lim_{n\ra \infty}\prob\lp\sup_{s\in S}\lV T_n(s,\mathbf{w})-\frac{1}{n}\hat{g}(s)\one\one^T\rV\geq \epsilon\rp=0\,.
    \een
\end{thm}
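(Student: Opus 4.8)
The plan is to decompose the quantity of interest via the triangle inequality into a \emph{deterministic coherence gap} and a \emph{stochastic concentration gap}, and to control each separately. Concretely, for every realization $\mathbf{w}\in\Omega^\infty$ and every $s\in S$ I would write
\ben
\lV T_n(s,\mathbf{w})-\tfrac{1}{n}\hat{g}(s)\one\one^T\rV\leq \lV T_n(s,\mathbf{w})-\tfrac{1}{n}\bar{g}_n(s,\mathbf{w})\one\one^T\rV+\lV \tfrac{1}{n}\bar{g}_n(s,\mathbf{w})\one\one^T-\tfrac{1}{n}\hat{g}(s)\one\one^T\rV\,,
\een
and then take the supremum over $s\in S$, so that the left-hand supremum is bounded by the sum of the two suprema on the right. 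The first supremum is exactly the (now $\mathbf{w}$-indexed) coherence error already studied for deterministic networks in Theorem \ref{thm_unifm_conv_reg_dymC}, while the second is the object controlled by Lemma \ref{lem_unifm_prob_conv_compact}.

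For the first term, the key observation is that conditions 1 and 2 of Lemma \ref{lem_unifm_prob_conv_compact} furnish bounds that are \emph{uniform over the sample space}: there exist $M_1,M_2>0$, independent of both $\mathbf{w}$ and $n$, with $|g_i^{-1}(s,w_i)|\leq M_2$ and $|\bar{g}_n(s,\mathbf{w})|\leq M_1$ for all $s\in S$, all $n$, and all $\mathbf{w}$. Hence, for each fixed $\mathbf{w}$, the deterministic network with node dynamics $g_i(s,w_i)$ satisfies the hypotheses of Theorem \ref{thm_unifm_conv_reg_dymC} with these same $M_1,M_2$, and the bound \eqref{eq_T_norm_bd_dymC} yields
\ben
\sup_{s\in S}\lV T_n(s,\mathbf{w})-\tfrac{1}{n}\bar{g}_n(s,\mathbf{w})\one\one^T\rV\leq \frac{\lp M_1M_2+1\rp^2}{F_l\lambda_2(L_n)-M_2-M_1M_2^2}=:c_n\,,
\een
where the right-hand side $c_n$ is deterministic (it does not depend on $\mathbf{w}$) and, because $\lambda_2(L_n)\ra+\infty$, satisfies $c_n\ra 0$.

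Finally, I would combine the two estimates. Given $\epsilon>0$, choose $N$ so that $c_n<\epsilon/2$ for all $n\geq N$. Then for $n\geq N$ the first supremum is at most $\epsilon/2$ surely, so the event $\{\sup_{s\in S}\lV T_n(s,\mathbf{w})-\tfrac{1}{n}\hat{g}(s)\one\one^T\rV\geq\epsilon\}$ is contained in the event $\{\sup_{s\in S}\lV \tfrac{1}{n}\bar{g}_n(s,\mathbf{w})\one\one^T-\tfrac{1}{n}\hat{g}(s)\one\one^T\rV\geq\epsilon/2\}$; consequently
\ben
\prob\lp\sup_{s\in S}\lV T_n(s,\mathbf{w})-\tfrac{1}{n}\hat{g}(s)\one\one^T\rV\geq\epsilon\rp\leq \prob\lp\sup_{s\in S}\lV \tfrac{1}{n}\bar{g}_n(s,\mathbf{w})\one\one^T-\tfrac{1}{n}\hat{g}(s)\one\one^T\rV\geq\tfrac{\epsilon}{2}\rp\,,
\een
and the right-hand side tends to $0$ as $n\ra\infty$ by Lemma \ref{lem_unifm_prob_conv_compact}, which proves the claim. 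The one point requiring care --- and the only place where the argument is more than bookkeeping --- is this first step: I must verify that the deterministic coherence bound is \emph{uniform in $\mathbf{w}$}, which is precisely why the uniform boundedness in conditions 1 and 2 is stated over $\Omega$ and $\Omega^\infty$ rather than merely almost surely. Everything else reduces to the triangle inequality together with the already-established deterministic (Theorem \ref{thm_unifm_conv_reg_dymC}) and probabilistic (Lemma \ref{lem_unifm_prob_conv_compact}) convergence results.
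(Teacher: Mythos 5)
Your proposal is correct and follows essentially the same route as the paper: a triangle-inequality split into the coherence gap and the concentration gap, with the former bounded deterministically (uniformly in $\mathbf{w}$, via the uniform boundedness in conditions 1 and 2 feeding into the bound of Lemma \ref{lem_reg_norm_bd}) and driven to zero by $\lambda_2(L_n)\ra+\infty$, and the latter handled by Lemma \ref{lem_unifm_prob_conv_compact}. The only cosmetic difference is that the paper keeps the first term as a probability that becomes exactly zero for large $n$, whereas you absorb it surely into the event for the second term; both are valid and equivalent.
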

\ifthenelse{\boolean{archive}}{
\begin{proof}
    Firstly, notice that
    \begin{align*}
        &\;\prob\lp\sup_{s\in S}\lV T_n(s,\mathbf{w})-\frac{1}{n}\hat{g}(s)\one\one^T\rV\geq \epsilon\rp\\
        \leq &\; \prob\lp\sup_{s\in S}\lV T_n(s,\mathbf{w})-\frac{1}{n}\bar{g}_n(s)\one\one^T\rV+\right.\\
        &\; \quad\quad\qquad\left.\sup_{s\in S}\lV \frac{1}{n}\bar{g}_n(s,\mathbf{w})\one\one^T-\frac{1}{n}\hat{g}(s)\one\one^T\rV\geq \epsilon\rp\\
        \leq &\; \prob\lp\sup_{s\in S}\lV T_n(s,\mathbf{w})-\frac{1}{n}\bar{g}_n(s,\mathbf{w})\one\one^T\rV\geq \frac{\epsilon}{2}\rp+\\
        &\;\quad\quad\qquad\prob\lp\sup_{s\in S}\lV\frac{1}{n}\bar{g}_n(s,\mathbf{w})\one\one^T-\frac{1}{n}\hat{g}(s)\one\one^T\rV\geq \frac{\epsilon}{2}\rp\,.
    \end{align*}
    The second term converges to $0$ as $n\ra +\infty$ by Lemma \ref{lem_unifm_prob_conv_compact}. For the first term, we show below that it becomes exactly $0$ for large enough $n$. Still, we assume $\{\bar{g}_n(s,\mathbf{w})\}$ and $\{g_i^{-1}(s,\mathbf{w})\}$ are uniformly bounded on $S$ by $M_1,M_2>0$ respectively. By Lemma \ref{lem_reg_norm_bd}, choosing large enough $n$ s.t.
    \begin{align*}
        &\;\prob\lp\sup_{s\in S}\lV T_n(s,\mathbf{w})-\frac{1}{n}\bar{g}_n(s,\mathbf{w})\one\one^T\rV\geq \frac{\epsilon}{2}\rp\\
        \leq &\; \prob\lp \frac{\lp M_1M_2+1\rp^2}{F_l\lambda_2(L_n)-M_2-M_1M_2^2}\geq \frac{\epsilon}{2}\rp\,,
    \end{align*}
    then we can choose even larger $n$ such that the probability on the right-hand side is $0$ because $\lambda_2(L_n)\ra +\infty$ as $n\ra \infty$.
\end{proof}}{We refer the readers to~\cite{min2021a} for the proof.}

\begin{rem}
    Lemma \ref{lem_unifm_prob_conv_compact} requires $g^{-1}(s,w)$ to be uniformly bounded on $S\times \Omega$. That is, for $s_0\in S$, $g^{-1}(s_0,w)$ is a bounded complex random variable. In \cite{min2019cdc}, a weaker condition, that $g^{-1}(s_0,w)$ is a sub-Gaussian complex random variable, is considered. This allows to show that point-wise convergence in probability can be achieved whenever $\lambda_2(L_n)$ grows polynomially in $n$. 
\end{rem}
    
In summary, because the coherent dynamics of the tightly-connected network is given by the harmonic mean of all node dynamics $g_i(s)$, it concentrates to its harmonic expectation $\hat g(s)$ as the network size grows. As a result, in practice, the coherent behavior of large-scale tightly-connected networks depends on the empirical distribution of $g_i(s)$, i.e. a collective effect of all node dynamics rather than every individual node dynamics. For example, two different realizations of large-scale network with dynamics $T_n(s,\mathbf{w})$ exhibit similar coherent behavior with high probability, in spite of the possible substantial differences in individual node dynamics.

\section{Application: Aggregate Dynamics of Synchronous Generator Networks}\label{sec:exmples}
In this section, 
we apply our analysis to investigate coherence in power networks. For coherent generator groups, we find that $\frac{1}{n}\hat{g}(s)$ generalized typical aggregate generator models which are often used for model reduction in power networks~\cite{Chow2013}. Moreover, we show that heterogeneity in generator dynamics usually leads to high-order aggregate dynamics, making it challenging to find a reasonably low-order approximation. 
Consider the transfer matrix of power generator networks~\cite{Paganini2019tac} linearized around its steady-state point, given by the following block diagram:
\begin{figure}[ht]
    \centering
    \includegraphics[height=2.5cm]{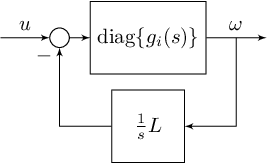}
    \caption{Block Diagram of Linearized Power Networks}\label{blk_power}
\end{figure}
This is exactly the block structure shown in Fig. \ref{blk_p_n} with $f(s)=\frac{1}{s}$. Here, the network output, i.e., the frequency deviation of each generator, is denoted by $\omega$. Generally, the $g_i(s)$ are modeled as strictly positive real transfer functions and we assume $L$ is connected. 
    
\subsection{Coherence Analysis}
We utilize the convergence results from previous sections to characterize the coherent behavior of such networks. We still denote the coherent dynamics as $\bar{g}(s)$ defined in \eqref{eq_g_bar}.
    
Firstly, Notice that $s=0$ is a pole of $f(s)=\frac{1}{s}$, then by Theorem  \ref{thm_ptw_singular_f_pole}, $T(0)$ is exactly $\frac{1}{n}\bar{g}(0)\one\one^T$, which suggests that in steady state, frequency outputs are the same for all generators. Moreover, another consequence of Theorem \ref{thm_ptw_singular_f_pole} is as follows.
\begin{clm}
    Given fixed $L$, $\forall\epsilon>0$, $\exists \eta_c>0$ such that
    \ben
        \sup_{\eta\in[-\eta_c,\eta_c]}\lV T(j\eta)-\frac{1}{n}\bar{g}(j\eta)\one\one^T\rV<\epsilon\,.
    \een
\end{clm}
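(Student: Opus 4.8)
The plan is to recognize this claim as the specialization of Theorem \ref{thm_ptw_singular_f_pole} to the imaginary axis, exploiting that $f(s)=\frac{1}{s}$ has its pole precisely at $s=0$. First I would verify that $s_0=0$ is a generic point of $\bar{g}(s)$. Since the generator transfer functions $g_i(s)$ are strictly positive real, each $g_i(0)$ is finite and nonzero, so each $g_i^{-1}(0)$ is finite and nonzero, and therefore $\bar{g}^{-1}(0)=\frac{1}{n}\sum_{i=1}^n g_i^{-1}(0)$ is finite and nonzero. This makes $\bar{g}(0)$ finite and nonzero, so $s_0=0$ is neither a pole nor a zero of $\bar{g}(s)$, and the hypotheses of Theorem \ref{thm_ptw_singular_f_pole} are met at $s_0=0$.

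Next, because $0$ is a generic point of $\bar{g}(s)$, by continuity there exist $\delta_0>0$ and constants $M_1,M_2>0$ such that $|\bar{g}(s)|\leq M_1$ and $\max_{1\leq i\leq n}|g_i^{-1}(s)|\leq M_2$ for all $s$ in the closed disk $\{s:|s|\leq \delta_0\}$. Restricting to $s=j\eta$ with $0<|\eta|\leq \delta_0$ and applying Lemma \ref{lem_reg_norm_bd}, I would obtain
\[
\lV T(j\eta)-\tfrac{1}{n}\bar{g}(j\eta)\one\one^T\rV \leq \frac{(M_1M_2+1)^2}{\lambda_2(L)/|\eta|-M_2-M_1M_2^2}\,,
\]
using that $|f(j\eta)|=1/|\eta|$.

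The key observation is that this bound is monotonically increasing in $|\eta|$: as $|\eta|$ shrinks, the effective algebraic connectivity $|f(j\eta)|\lambda_2(L)=\lambda_2(L)/|\eta|$ grows, raising the denominator and lowering the fraction. Consequently the supremum of the bound over the punctured interval $0<|\eta|\leq \eta_c$ is attained at $|\eta|=\eta_c$, provided $\eta_c$ is small enough to keep the denominator positive. The endpoint $\eta=0$ is handled separately: by the consequence of Theorem \ref{thm_ptw_singular_f_pole} already recorded above, $T(0)=\frac{1}{n}\bar{g}(0)\one\one^T$ exactly, so the incoherence measure vanishes there. Combining the two, given any $\epsilon>0$ I would choose $\eta_c<\delta_0$ small enough that the scalar $(M_1M_2+1)^2/(\lambda_2(L)/\eta_c-M_2-M_1M_2^2)$ is below $\epsilon$, which is possible because this quantity tends to $0$ as $\eta_c\to 0$; this yields the desired bound on $\sup_{\eta\in[-\eta_c,\eta_c]}\lV T(j\eta)-\frac{1}{n}\bar{g}(j\eta)\one\one^T\rV$.

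The only genuinely delicate point is the genericity check at $s_0=0$, which relies on the positive-real structure of the generator models to guarantee that $\bar{g}(0)$ is simultaneously finite and nonzero; everything else is a direct transcription of the pointwise bound of Lemma \ref{lem_reg_norm_bd} into a uniform statement over the interval, made possible by the monotonicity of that bound in $|\eta|$ together with the exact coherence already established at $\eta=0$.
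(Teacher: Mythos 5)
Your proposal is correct and follows essentially the same route as the paper: the paper's proof is simply ``direct application of Theorem \ref{thm_ptw_singular_f_pole} according to the definition of the limit,'' and your argument unpacks exactly that---verifying the genericity of $s_0=0$ for $\bar g(s)$ via strict positive realness and then rerunning the bound of Lemma \ref{lem_reg_norm_bd} with $|f(j\eta)|=1/|\eta|\to\infty$. The extra detail you supply (the genericity check at $0$, the monotonicity of the bound in $|\eta|$, and the separate treatment of $\eta=0$) is a faithful expansion of what the paper leaves implicit rather than a different method.
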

\begin{proof}
    This is the direct application of Theorem \ref{thm_ptw_singular_f_pole} according to the definition of the limit.
\end{proof}
The claim suggests that the network is naturally coherent in the low-frequency range. In other words, for any fixed network $L$, there is a low-frequency range such that the network responds coherently to disturbances in that frequency range. Additionally, we know that such frequency range can be arbitrarily wide, given sufficiently large network connectivity, suggested by the following claim.
\begin{clm}
    $\forall \eta_{c}>0$, we have
    \ben
        \lim_{\lambda_2(L)\ra \infty}\sup_{\eta\in[-\eta_c,\eta_c]}\lV T(j\eta)-\frac{1}{n}\bar{g}(j\eta)\one\one^T\rV=0\,.
    \een
\end{clm}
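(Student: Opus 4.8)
The plan is to reduce the claim to the point-wise estimate of Lemma \ref{lem_reg_norm_bd}, applied uniformly over the compact segment $S=\{j\eta:\eta\in[-\eta_c,\eta_c]\}$, while handling the single frequency $\eta=0$ by hand. The first thing I would note is that since $f(s)=\frac{1}{s}$ we have $|f(j\eta)|=1/|\eta|$, so $\sup_{s\in S}|f(s)|=+\infty$, the supremum being approached as $\eta\to 0$. Consequently Assumption \ref{asmp_unifm} \emph{fails} on $S$ and Theorem \ref{thm_unifm_conv_reg_compact} cannot be invoked directly. The key observation is that this failure is benign: a pole of $f$ only \emph{increases} the effective algebraic connectivity $|f(j\eta)|\lambda_2(L)$, so the right-hand side of \eqref{eq_T_norm_bd} is smallest exactly where $|f|$ is largest. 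The genuine worst case over $S$ therefore occurs at the endpoints $\eta=\pm\eta_c$, where $|f(j\eta)|=1/\eta_c$ is minimal, and not near the singularity.

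Next I would secure uniform constants $M_1,M_2>0$ valid for all $\eta\in[-\eta_c,\eta_c]$. Because the $g_i(s)$ are strictly positive real, each $g_i(j\eta)$ has strictly positive real part and hence never vanishes on the imaginary axis; thus every $g_i^{-1}(s)$ is continuous on the compact set $S$, giving $\max_{1\le i\le n}|g_i^{-1}(j\eta)|\le M_2<\infty$. Moreover, each reciprocal $g_i^{-1}(j\eta)$ also has strictly positive real part, so $\sum_i g_i^{-1}(j\eta)$ has strictly positive real part and never vanishes; hence $\bar g(s)$ has no pole on $S$, is continuous there, and $|\bar g(j\eta)|\le M_1<\infty$. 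These constants are independent of $\eta$.

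Finally I would assemble the estimate. For every $\eta$ with $0<|\eta|\le\eta_c$, the point $j\eta$ is not a pole of $f$, the hypotheses of Lemma \ref{lem_reg_norm_bd} hold with the above $M_1,M_2$, and since $|f(j\eta)|=1/|\eta|\ge 1/\eta_c$ while the bound in \eqref{eq_T_norm_bd} is decreasing in $|f(j\eta)|$, we obtain, once $\lambda_2(L)>\eta_c(M_2+M_1M_2^2)$,
\[
\lV T(j\eta)-\tfrac{1}{n}\bar g(j\eta)\one\one^T\rV\le \frac{\lp M_1M_2+1\rp^2}{\frac{1}{\eta_c}\lambda_2(L)-M_2-M_1M_2^2}\,.
\]
At the remaining point $\eta=0$, a pole of $f$, Lemma \ref{lem_reg_norm_bd} does not apply, but Theorem \ref{thm_ptw_singular_f_pole} (equivalently, a direct limit in the eigenbasis form \eqref{eq_T_eigenform}) gives $T(0)=\frac{1}{n}\bar g(0)\one\one^T$, so the incoherence measure is exactly $0$ there and does not raise the supremum. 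Taking the supremum over the whole segment and then letting $\lambda_2(L)\to+\infty$ sends the right-hand side to $0$, which proves the claim. The main obstacle is conceptual rather than computational: one must excise the $\eta=0$ singularity of $f$ from any appeal to Assumption \ref{asmp_unifm}, yet recognize it as harmless, since it corresponds to maximal rather than minimal effective connectivity and is already settled point-wise by Theorem \ref{thm_ptw_singular_f_pole}.
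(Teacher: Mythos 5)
Your proof is correct, and it is worth noting that it does \emph{not} follow the paper's stated route: the paper's proof is a one-line appeal to Theorem \ref{thm_unifm_conv_compact}, whereas you go back to Lemma \ref{lem_reg_norm_bd} directly. The difference matters. The set $S=\{j\eta:\eta\in[-\eta_c,\eta_c]\}$ contains $s=0$, which is a pole of $f(s)=1/s$, so $\sup_{s\in S}|f(s)|=\infty$ and Assumption \ref{asmp_unifm} fails; Theorem \ref{thm_unifm_conv_compact} therefore does not apply verbatim, and the paper's ``direct application'' quietly glosses over exactly the point you isolate. Your fix --- observing that the bound \eqref{eq_T_norm_bd} is monotone decreasing in the effective connectivity $|f(j\eta)|\lambda_2(L)$, so that $|f(j\eta)|=1/|\eta|\ge 1/\eta_c$ yields the uniform estimate with worst case at the endpoints, and then settling $\eta=0$ separately via Theorem \ref{thm_ptw_singular_f_pole} (consistent with the paper's own assertion that $T(0)=\frac{1}{n}\bar g(0)\one\one^T$) --- is the right way to make the claim rigorous. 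Both arguments use strict positive realness identically, to guarantee that no $g_i$ vanishes and $\bar g$ has no pole on the imaginary axis, hence uniform $M_1,M_2$ on the compact segment. What your route buys is a proof that actually satisfies the hypotheses it invokes; what the paper's route buys is brevity, at the cost of a hypothesis mismatch that one could alternatively repair by excising a small interval around $0$ (covered by the preceding Claim) and applying Theorem \ref{thm_unifm_conv_compact} only to the remainder. One small caveat: $T(0)$ is strictly a removable singularity of the rational matrix $T(s)$ rather than a point where the defining formula evaluates, so ``exactly $0$'' at $\eta=0$ should be read as the value of the continuous extension --- but this is the same convention the paper adopts.
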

\begin{proof}
    Provided that $g_i(s)$ are strictly positive real~\cite{khalil2002nonlinear}, this is a direct application of Theorem \ref{thm_unifm_conv_reg_compact}.
\end{proof}
Loosely speaking, the generator network is coherent for certain low-frequency range and the width of such frequency range increases when the network is more connected ($\lambda_2(L)$ increases).
    
Furthermore, notice that $\infty$ can be regarded as a zero of $f(s)$, around which the network effect diminishes, i.e. for sufficiently large $\eta$, the effective algebraic connectivity $|f(j\eta)\lambda_2(L)|=\frac{\lambda_2(L)}{|\eta|}$ can be arbitrarily small. Hence given any fixed $L$, there is a high-frequency range such that the network does not exhibit coherence under disturbances within such range. 

\begin{figure}[ht]
    \centering
    \includegraphics[width=3.49in]{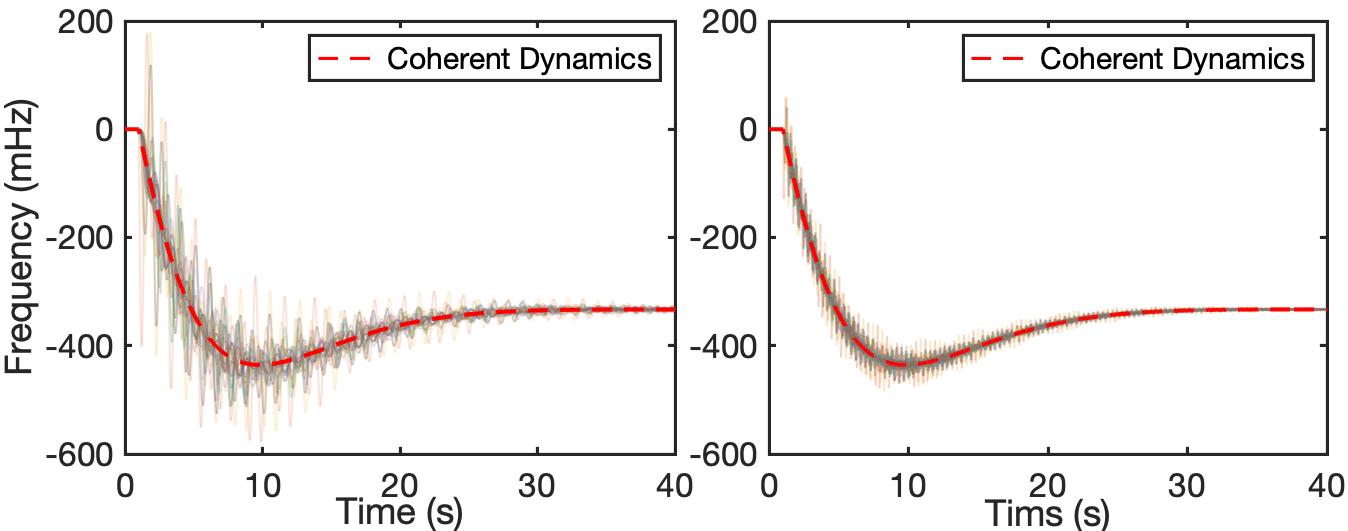}
    \caption{Step responses of Icelandic Grid without (Left) and with (Right) connectivity $\lambda_2(L)$ scaled up. The responses of coherent dynamics $\bar{g}(s)$ are shown in red dashed lines.\label{fig_coherence}}
\end{figure}
    
\begin{figure}[ht]
    \centering
    \includegraphics[width=3.49in]{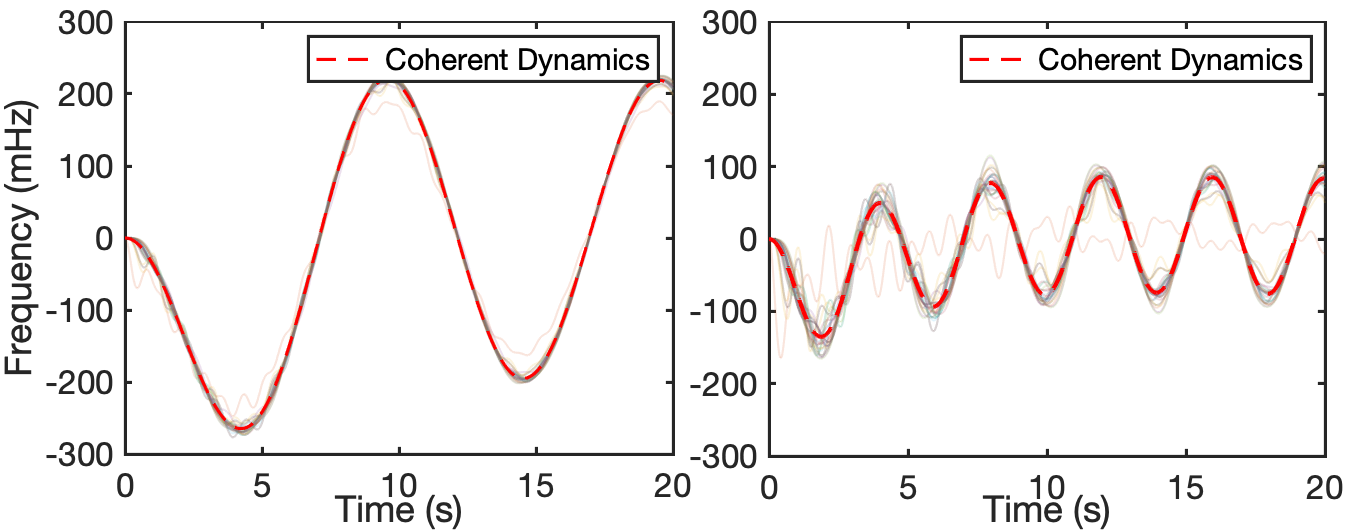}
    \caption{Responses of Icelandic Grid under sinusoidal disturbances of frequency $w_\mathrm{low}=0.1\mathrm{ rad/s}$ (Left) and $w_\mathrm{high}=0.25\mathrm{ rad/s}$ (Right). The responses of coherent dynamics $\bar{g}(s)$ are shown in red dashed lines.\label{fig_freq_test}}
\end{figure}
We verify our analysis with simulations on the Icelandic power grid~\cite{iceland}. As shown in Fig.~\ref{fig_coherence}, the network step response is more coherent, i.e. response of every single node (generator) is getting closer to the one of the coherent dynamics $\bar{g}(s)$, when the network connectivity is scaled up. For the plot on the right, the connectivity $\lambda_2(L)$ is scaled up to 10 times the original. Then Fig.~\ref{fig_freq_test} shows such coherence is also frequency-dependent, where the generators respond less coherently under disturbances of  higher frequency. 
    
The discussions and simulations above suggest that the coherent dynamics $\bar g(s)$ characterize well the overall response/behavior of generators. This leads to a general methodology to analyze the aggregate dynamics of such networks, that we describe next.  
\subsection{Aggregate Dynamics of Generator Networks}
    
Let $$g_\mathrm{aggr}(s):=\frac{1}{n}\bar{g}(s)=\lp \sum_{i=1}^ng_i^{-1}(s)\rp\,.$$ Our analysis suggests that the transfer function $T(s)$ representing a network of  generators is close $g_\mathrm{aggr}(s)\one\one^T$ within the low-frequency range, for sufficiently high network connectivity $\lambda_2(L)$. We can also view $g_\mathrm{aggr}(s)$ as the aggregate generator dynamics, in the sense that it takes the sum of disturbances $\one^Tu=\sum_{i=1}^nu_i$ as its input, and its output represents the coherent response of all generators.
    
Such a notion of aggregate dynamics is important in modeling large-scale power networks\cite{Chow2013}. Generally speaking, one seeks to find an aggregate dynamic model for a group of generators using the same structure (transfer function) as individual generator dynamics, i.e. when generator dynamics are modeled as $g_i(s)=g(s;\theta_i)$, where $\theta_i$ is a vector of parameters representing physical properties of each generator, existing works~\cite{Germond1978,Guggilam2018,Apostolopoulou2016} propose methods to find aggregate dynamics of the form $g(s;\theta_\mathrm{aggr})$ for certain structures of $g(s;\theta)$. Our $g_\mathrm{aggr}(s)$ justifies their choices of $\theta_\mathrm{aggr}$, as shown in the following example.
    
\begin{example}
    For generators given by the swing model $g_i(s) =\frac{1}{m_is+d_i}\,,$
    where $m_i,d_i$ are the inertia and damping of generator $i$, respectively. The aggregate dynamics are
    \be
        g_\mathrm{aggr}(s) = \frac{1}{m_\mathrm{aggr}s+d_\mathrm{aggr}}\,,\label{eq_aggr_dym_sw}
    \ee
    where $m_\mathrm{aggr} = \sum_{i=1}^nm_i$ and $d_\mathrm{aggr} = \sum_{i=1}^nd_i$.
\end{example}
Here the parameters are $\theta=\{m,d\}$. The aggregate model given by \eqref{eq_aggr_dym_sw} is consistent with the existing approach of choosing inertia $m$ and damping $d$ as the respective sums over all the coherent generators. 
    
However, as we show in the next example, when one considers more involved models,  it is challenging to find parameters $\theta_\mathrm{aggr}$ that accurately fit the aggregate dynamics.
\begin{example}
    For generators given by the swing model with turbine droop
    \be \label{eq_single_generator}
        g_i(s) = \frac{1}{m_is+d_i+\frac{r_i^{-1}}{\tau_is+1}}\,,
    \ee
    where $r_i^{-1}$ and  $\tau_i$ are the droop coefficient and turbine time constant of generator $i$, respectively. The aggregate dynamics are given by 
    \be
        g_\mathrm{aggr}(s) = \frac{1}{m_\mathrm{aggr}s+d_\mathrm{aggr}+\sum_{i=1}^n\frac{r_i^{-1}}{\tau_is+1}}\,.\label{eq_aggr_dym_sw_tb}
    \ee
\end{example}
Here the parameters are $\theta=\{m,d,r^{-1},\tau\}$. This example illustrates, in particular, the difficulty in aggregating generators with heterogeneous turbine time constants. When all generators have the same turbine time constant $\tau_i=\tau$, then $g_\mathrm{aggr}(s)$ in \eqref{eq_aggr_dym_sw_tb} reduces to the typical effective machine model
\ben
    g_\mathrm{aggr}(s) = \frac{1}{m_\mathrm{aggr}s+d_\mathrm{aggr}+\frac{r_\mathrm{aggr}^{-1}}{\tau s+1}}\,,
\een
where $r_\mathrm{aggr}^{-1}=\sum_{i=1}^nr_i^{-1}$, i.e. the aggregation model is still obtained by choosing parameters $\{m,d, r^{-1}\}$ as the respective sums of their individual values. 
    
If the $\tau_i$  are heterogeneous, then $g_\mathrm{aggr}(s)$ is a high-order transfer function and cannot be accurately represented by a single generator model. The aggregation of generators essentially asks for a low-order approximation of $g_\mathrm{aggr}(s)$. Our analysis reveals the fundamental limitation of using conventional approaches seeking aggregate dynamics with the same structure of individual generators. Furthermore, by characterizing the aggregate dynamics in the explicit form $g_\mathrm{aggr}(s)$, one can develop more accurate low-order approximation~\cite{min2020lcss}. Lastly, we emphasize that our analysis does not depend on a specific model of generator dynamics $g_i(s)$, hence it provides a general methodology to aggregate coherent generator networks.

\section{Conclusions}\label{sec:conclusion}    
This paper provides various convergence results of the transfer matrix $T(s)$ for a tightly-connected network. The analysis leads to useful characterizations of coordinated behavior and justifies the relation between network coherence and network effective algebraic connectivity. Our results suggest that network coherence is also a frequency-dependent phenomenon, which is numerically illustrated in generator networks. Lastly, concentration results for large-scale tightly-connected networks are presented, revealing the exclusive role of the statistical distribution of node dynamics in determining the coherent dynamics of such networks.  

As we already see from the numerical example of synchronous generator networks, when the network is coherent, the entire network can be represented by its aggregate dynamics, because the network has all non-zero eigenvalue being sufficiently large to make the dynamics associated with these eigenvalues negligible. One could extend such observation to less coherent networks which has large eigenvalues except for a few small eigenvalues. We can utilize the same frequency domain analysis to these networks, but instead of only keeping the coherent dynamics, we retain the dynamic associated with all the small eigenvalues (These dynamics are generally coupled due to the heterogeneity in node dynamics). In this way, we can accurately capture the important modes in the network dynamics while significantly reduce the complexity of our model. 

Furthermore, for large-scale networks with multiple coherent groups, or communities in a more general sense, one could model the inter-community interactions by replacing the dynamics of each community with its coherent one, or more generally, a reduced one. Although clustering, i.e. finding communities, for homogeneous networks can be efficiently done by various methods such as spectral clustering~\cite{bach2004learning}~\cite{Belkin2001}, it is still open for research to find multiple coherent groups in heterogeneous dynamical networks.

\appendix
\setcounter{equation}{0}
\def\theequation{\thesubsection.\arabic{equation}}
\ifthenelse{\boolean{archive}}{
\subsection{Proof of Lemma \ref{lem_grd_Lap_eig_bd}}
\label{app_proof_lem_grd_Lap_eig_bd}
\setcounter{equation}{0}
\def\theequation{\thesubsection.\arabic{equation}}
\begin{proof}
To avoid confusion, here we let $\one_n$ denote all one vector $\one$ of length $n$.

Without loss of generality, assume $\mathcal{I}=\{1,\cdots,m\}$. Then $\Tilde{L}$ is the principal submatrix of $L$ by removing first $m$ rows and columns.

By \cite[3.5]{fiedler73}, the matrix
\ben
    L-\lambda_2(L)\lp I-\frac{1}{n}\one_n\one_n^T\rp\,,
\een
is positive semi-definite. Now let $v=\bmt 
0\\ \Tilde{v}_1 
\emt $ where $\Tilde{v}_1$ is the unit eigenvector correspoding to $\lambda_1(\Tilde{L})$, we have
\begin{align*}
    &\;v^T\lp L-\lambda_2(L)\lp I-\frac{1}{n}\one_n\one_n^T\rp\rp v\\
    =&\;\tilde{v}_1^T\tilde{L}\tilde{v}_1-\lambda_2(L)\lp 1-\frac{1}{n}(\one_{n-m}^T\Tilde{v}_1)^2\rp\\
    =&\;\lambda_1(\Tilde{L})-\lambda_2(L)\lp 1-\frac{1}{n}(\one_{n-m}^T\Tilde{v}_1)^2\rp\geq0\,.
\end{align*}
We also have $\one_{n-m}^T\Tilde{v}_1\leq \|\one_{n-m}\|\|\Tilde{v}_1\|=\sqrt{n-m}$, then the desired lower bound on $\lambda_1(\Tilde{L})$ is obtained:
\ben
    \lambda_1(\Tilde{L})\geq \lambda_2(L)\lp 1-\frac{1}{n}(\one_{n-m}^T\Tilde{v}_1)^2\rp\geq \frac{m}{n}\lambda_2(L)\,.
\een
\end{proof}}{}

{
\subsection{Convergence of Normalized Transfer Function at Poles of $\bar{g}(s)$}\label{app_lim_dir_pole}

We present here the formal statement regarding the Remark \ref{rem_pole} on the convergence of normalized transfer function at poles of $\bar{g}(s)$.
\begin{prop}
    Let $T(s)$ and $\bar{g}(s)$ be defined as in \eqref{eq_T_explict} and \eqref{eq_g_bar}, respectively. If $s_0\in\compl$ is a pole of $\bar{g}(s)$ and a generic point of $f(s)$, and additionally, we assume $\|\tilde{\Lambda}/\lambda_2(L)-\Lambda_{\mathrm{lim}}\|\ra 0$ as $\lambda_2(L)\ra +\infty$, then
    $$
        \lim_{\lambda_2(L)\ra+\infty} \lV \frac{T(s_0)}{\|T(s_0)\|}-\frac{1}{n}\gamma(\Lambda_\mathrm{lim})\one\one^T\rV= 0\,,
    $$
    where $\gamma(\Lambda_\mathrm{lim} )=\frac{h_{12}^T\Lambda^{-1}_\mathrm{lim}h_{12}}{|h_{12}^T\Lambda^{-1}_\mathrm{lim}h_{12}|}$.
\end{prop}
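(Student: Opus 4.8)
The plan is to recycle the block-inverse decomposition already obtained in the proof of Theorem \ref{thm_ptw_conv_pole} and to track the single dominant rank-one term after normalization. Writing $H=V^T\dg\{g_i^{-1}(s_0)\}V+f(s_0)\Lambda$ in $1+(n-1)$ block form with vanishing $(1,1)$ entry (since $s_0$ is a pole of $\bar g(s)$), I would recall that
\ben
    H^{-1}=a\,ww^T+\bmt 0&0\\0&H_{22}^{-1}\emt,\qquad w=\bmt 1\\-H_{22}^{-1}h_{12}\emt,\qquad a=-\frac{1}{h_{12}^TH_{22}^{-1}h_{12}},
\een
where $h_{12}$ is the off-diagonal block (denoted $h_{21}$ in that proof) and I use the symmetry of $H_{22}$ to identify the row block with $w^T$. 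Since $T(s_0)=VH^{-1}V^T$ with $V$ real orthogonal, $\|T(s_0)\|=\|H^{-1}\|$ and conjugation by $V$ preserves the spectral norm, so the entire problem reduces to analyzing $H^{-1}/\|H^{-1}\|$ and then applying $V$.

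Next I would pin down the geometry of the dominant term. From the bound $\|H_{22}^{-1}\|\le(|f(s_0)|\lambda_2(L)-M)^{-1}\ra 0$ already established in Theorem \ref{thm_ptw_conv_pole}, together with $\|h_{12}\|\le M$, I get $\|H_{22}^{-1}h_{12}\|\ra 0$, hence $w\ra e_1$ and $\|w\|\ra 1$. Because $|a|\ra\infty$ while the correction block $H_{22}^{-1}$ stays bounded, the rank-one term dominates, giving $\|H^{-1}\|=|a|\,\|w\|^2(1+o(1))$ and
\ben
    \lV \frac{H^{-1}}{\|H^{-1}\|}-\frac{a}{|a|}e_1e_1^T\rV\ra 0,
\een
by continuity of $w\mapsto ww^T/\|w\|^2$ at $w=e_1$ and since the normalized contribution of the $H_{22}^{-1}$ block is $O(\|H_{22}^{-1}\|/|a|)\ra 0$. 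Conjugating by $V$ and using that the first column of $V$ is the fixed vector $\one/\sqrt n$ regardless of $L$ turns $e_1e_1^T$ into $\frac1n\one\one^T$, so it remains only to show that the scalar phase $a/|a|$ converges to $\gamma(\Lambda_{\mathrm{lim}})$.

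The remaining, and most delicate, step is identifying this limiting phase, and here the extra hypothesis $\|\tilde\Lambda/\lambda_2(L)-\Lambda_{\mathrm{lim}}\|\ra 0$ enters. Since $H_{22}=f(s_0)\tilde\Lambda+V_\perp^T\dg\{g_i^{-1}(s_0)\}V_\perp$ with the second summand bounded, rescaling yields $\lambda_2(L)H_{22}^{-1}\ra f(s_0)^{-1}\Lambda_{\mathrm{lim}}^{-1}$, whence $\lambda_2(L)\,h_{12}^TH_{22}^{-1}h_{12}\ra f(s_0)^{-1}h_{12}^T\Lambda_{\mathrm{lim}}^{-1}h_{12}$; taking the phase of $a$ then exhibits $a/|a|$ as a unit-magnitude number governed by the quadratic form $h_{12}^T\Lambda_{\mathrm{lim}}^{-1}h_{12}$, which is the content of the claimed $\gamma(\Lambda_{\mathrm{lim}})$ (the fixed phase of $-f(s_0)^{-1}$ being absorbed as a constant). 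The main obstacle is precisely this phase bookkeeping: one must ensure $\Lambda_{\mathrm{lim}}^{-1}$ is well defined and that $h_{12}^T\Lambda_{\mathrm{lim}}^{-1}h_{12}\neq 0$, so that $a/|a|$ genuinely converges. The normalization $(\Lambda_{\mathrm{lim}})_{11}=1$, forced by $\lambda_2(L)/\lambda_2(L)=1$, guarantees $\Lambda_{\mathrm{lim}}^{-1}$ has a nonzero leading entry and makes this genericity requirement natural.
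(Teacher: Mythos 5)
Your proposal follows essentially the same route as the paper's proof: the same block inversion $H^{-1}=a\,ww^T+\mathrm{diag}(0,H_{22}^{-1})$ inherited from Theorem \ref{thm_ptw_conv_pole}, the same three ingredients (dominant rank-one direction, norm ratio $|a|/\|T(s_0)\|\to 1$, and the phase $a/|a|$), and the same rescaling $\lambda_2(L)H_{22}^{-1}\to f(s_0)^{-1}\Lambda_{\mathrm{lim}}^{-1}$ to identify the limiting phase. The only looseness---treating the phase of $-f(s_0)^{-1}$ and the conjugation arising from inverting $h_{12}^TH_{22}^{-1}h_{12}$ as absorbable constants, and needing $h_{12}^T\Lambda_{\mathrm{lim}}^{-1}h_{12}\neq 0$---is shared with, and in fact flagged more explicitly than in, the paper's own argument.
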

Here $\gamma(\Lambda_\mathrm{lim} )$ is well-defined since $\Lambda_\mathrm{lim}$ must be positive definite, otherwise it contradicts the fact that $\tilde{\Lambda}/\lambda_2(L)-I\succeq 0$. \ifthenelse{\boolean{archive}}{
\begin{proof}
    Since $$\frac{T(s_0)}{\|T(s_0)\|}=\frac{T(s_0)}{a}\cdot\frac{|a|}{\|T(s_0)\|}\cdot\frac{a}{|a|}\,,$$
    where $a$ is defined as in \eqref{eq_H_inv_pole}.
    The desired result follows once we show that $\lV\frac{T(s_0)}{a}-\frac{1}{n}\one\one^T\rV\ra 0$, $\frac{|a|}{\|T(s_0)\|}\ra 1$ and $\frac{a}{|a|}\ra \gamma(\Lambda_\mathrm{lim})$. Recall that $T(s_0)=VH^{-1}V^T$ and 
    $$H^{-1} =
        a\bmt 1\\
        -H_{22}^{-1}h_{21}\emt\bmt1 &-h^T_{21}H_{22}^{-1}\emt+\bmt 0&0\\
        0 & H_{22}^{-1}\emt\,.$$
    For the first limit, we have
    \begin{align*}
        &\;\lV\frac{T(s_0)}{a}-\frac{1}{n}\one\one^T\rV\\
        =&\; \lV \frac{H^{-1}}{a}-e_1e_1^T\rV\\
        =&\; \lV\bmt 0\\
        -H_{22}^{-1}h_{21}\emt\bmt 0 &-h^T_{21}H_{22}^{-1}\emt+\bmt 0&0\\
        0 & H_{22}^{-1}\emt/a\rV\\
        =&\; \|h^T_{21}H^{-1}_{22}\|^2+\|h_{21}^TH_{22}^{-1}h_{21}H^{-1}_{22}\|\\
        \leq&\; 2\|h_{21}\|^2\|H^{-1}_{22}\|^2\leq \frac{2M^2}{(|f(s_0)|\lambda_2(L)-M)^2}\ra 0\,.
    \end{align*}
    The second limit comes from 
    \begin{align*}
       &\; \|T(s_0)\|/|a|\\
       =&\; \lV a\bmt 1\\
        -H_{22}^{-1}h_{21}\emt\bmt 1 &-h^T_{21}H_{22}^{-1}\emt+\bmt 0&0\\
        0 & H_{22}^{-1}\emt\rV/|a|\\
        \leq &\; \lV \bmt 1\\
        -H_{22}^{-1}h_{21}\emt\bmt 1 &-h^T_{21}H_{22}^{-1}\emt\rV +\|H^{-1}_{22}\|/|a|\\
        \leq &\; 1+\|H_{22}^{-1}h_{21}\|^2+\|H_{22}^{-1}\|^2\|h_{21}\|^2\,,
    \end{align*}
    and 
    \begin{align*}
       &\; \|T(s_0)\|/|a|\\
       \geq &\; \lV \bmt 1\\
        -H_{22}^{-1}h_{21}\emt\bmt 1 &-h^T_{21}H_{22}^{-1}\emt\rV -\|H^{-1}_{22}\|/|a|\\
        \geq &\; 1+\|H_{22}^{-1}h_{21}\|^2-\|H_{22}^{-1}\|^2\|h_{21}\|^2\,,
    \end{align*}
    along with the fact that both the upper and the lower bound converge to $1$. To see the limit, notice that 
    \begin{align*}
        \lv\|H_{22}^{-1}h_{21}\|^2+\|H_{22}^{-1}\|^2\|h_{21}\|^2\rv\leq&\; 2\|H_{22}^{-1}\|^2\|h_{21}\|^2\\
        \lv\|H_{22}^{-1}h_{21}\|^2-\|H_{22}^{-1}\|^2\|h_{21}\|^2\rv
        \leq&\; \|H_{22}^{-1}h_{21}\|^2+\|H_{22}^{-1}\|^2\|h_{21}\|^2\\
        \leq&\; 2\|H_{22}^{-1}\|^2\|h_{21}\|^2\,,
    \end{align*}
    and
    \ben
        2\|H_{22}^{-1}\|^2\|h_{21}\|^2\leq \frac{2M^2}{(|f(s_0)|\lambda_2(L)-M)^2}\ra 0\,.
    \een
    The last limit is obtained from
    \begin{align*}
        \frac{a}{|a|}&=\;\frac{h_{12}^TH_{22}^{-1}h_{12}}{|h_{12}^TH_{22}^{-1}h_{12}|}\\
        &=\; \gamma(\Lambda_\mathrm{lim})\frac{h_{12}^TH_{22}^{-1}h_{12}}{f^{-1}(s_0)\lambda_2^{-1}(L)h_{12}^T\Lambda_\mathrm{lim}^{-1}h_{12}}\frac{|f^{-1}(s_0)\lambda_2^{-1}(L)h_{12}^T\Lambda_\mathrm{lim}^{-1}h_{12}|}{|h_{12}^TH_{22}^{-1}h_{12}|}\,,
    \end{align*}
    and the fact that $\frac{h_{12}^TH_{22}^{-1}h_{12}}{f^{-1}(s_0)\lambda_2^{-1}(L)h_{12}^T\Lambda_\mathrm{lim}^{-1}h_{12}}\ra 1$.
    
To show the limit of the ratio in the end, notice that
\begin{align*}
    &\;\lv \frac{h_{12}^TH_{22}^{-1}h_{12}}{f^{-1}(s_0)\lambda_2^{-1}(L)h_{12}^T\Lambda_\mathrm{lim}^{-1}h_{12}}-1\rv\\
    =&\;\lv \frac{h_{12}^T(V_{\perp}^T\tilde{G}(s_0)V_\perp+f(s_0)\tilde{\Lambda})^{-1}h_{12}}{f^{-1}(s_0)\lambda_2^{-1}(L)h_{12}^T\Lambda_\mathrm{lim}^{-1}h_{12}}-1\rv\\
    =&\; \lv \frac{h_{12}^T\lp \frac{V_{\perp}^T\tilde{G}(s_0)V_\perp}{f(s_0)\lambda_2(L)}+\frac{\tilde{\Lambda}}{\lambda_2(L)}\rp^{-1}h_{12}}{h_{12}^T\Lambda_\mathrm{lim}^{-1}h_{12}}-1\rv\\
    =&\;\lv \frac{h_{12}^T\lhp\lp \frac{V_{\perp}^T\tilde{G}(s_0)V_\perp}{f(s_0)\lambda_2(L)}+\frac{\tilde{\Lambda}}{\lambda_2(L)}\rp^{-1}-\Lambda_\mathrm{lim}^{-1}\rhp h_{12}}{h_{12}^T\Lambda_\mathrm{lim}^{-1}h_{12}}\rv\\
    \leq &\;\frac{\|h_{12}\|^2}{|h_12^T\Lambda_\mathrm{lim}h_{12}|}\lV\lp \frac{V_{\perp}^T\tilde{G}(s_0)V_\perp}{f(s_0)\lambda_2(L)}+\frac{\tilde{\Lambda}}{\lambda_2(L)}\rp^{-1}-\Lambda_\mathrm{lim}^{-1}\rV\\
    \leq&\; \frac{\|h_{12}\|^2}{|h_12^T\Lambda_\mathrm{lim}h_{12}|}\frac{\|\Lambda_\mathrm{lim}^{-1}\|^2\|\Delta\|}{1-\|\Lambda_\mathrm{lim}^{-1}\|\|\Delta\|}\,,
\end{align*}
where $$\Delta=\frac{V_{\perp}^T\tilde{G}(s_0)V_\perp}{f(s_0)\lambda_2(L)}+\frac{\tilde{\Lambda}}{\lambda_2(L)}-\Lambda_\mathrm{lim}\,.$$
Here we use the fact that for some invertible square matrices $A,B$, we have
\begin{align*}
    \|(A+B)^{-1}-A^{-1}\|&=\;\|(A+B)^{-1}(A+B-A)A^{-1}\|\\
    &\leq\; \|(A+B)^{-1}\|\|B\|\|A^{-1}\|\\
    (\text{Lemma \ref{lem_bd_norm_mat_inv}})&\leq\;\frac{\|A^{-1}\|}{1-\|A^{-1}\|\|B\|}\|B\|\|A^{-1}\|\\
    &=\;\frac{\|A^{-1}\|^2\|B\|}{1-\|A^{-1}\|\|B\|}\,,
\end{align*}
provided that $1-\|A^{-1}\|\|B\|>0$. To finish the proof, we see that
\begin{align*}
    \|\Delta\|&\leq\;\lV\frac{V_{\perp}^T\tilde{G}(s_0)V_\perp}{f(s_0)\lambda_2(L)}\rV+\lV\frac{\tilde{\Lambda}}{\lambda_2(L)}-\Lambda_\mathrm{lim} \rV\\
    &\leq\; \frac{M}{f(s_0)\lambda_2(L)}+\lV\frac{\tilde{\Lambda}}{\lambda_2(L)}-\Lambda_\mathrm{lim} \rV\ra 0\,.
\end{align*}
This shows that
\begin{align*}
    &\;\lv \frac{h_{12}^TH_{22}^{-1}h_{12}}{f^{-1}(s_0)\lambda_2^{-1}(L)h_{12}^T\Lambda_\mathrm{lim}^{-1}h_{12}}-1\rv\\\leq&\; \frac{\|h_{12}\|^2}{|h_12^T\Lambda_\mathrm{lim}h_{12}|}\frac{\|\Lambda_\mathrm{lim}^{-1}\|^2\|\Delta\|}{1-\|\Lambda_\mathrm{lim}^{-1}\|\|\Delta\|}\ra 0\,,
\end{align*}
which is the desired limit.
\end{proof}}{Since $$\frac{T(s_0)}{\|T(s_0)\|}=\frac{T(s_0)}{a}\cdot\frac{|a|}{\|T(s_0)\|}\cdot\frac{a}{|a|}\,,$$ 
Due to the space constraints, we refer to~\cite{min2021a} for the complete proof.}
}
\ifthenelse{\boolean{archive}}{
\subsection{Proof of Theorem \ref{thm_unifm_conv_fail}}
\label{app_proof_thm_unifm_conv_fail}
\setcounter{equation}{0}
\def\theequation{\thesubsection.\arabic{equation}}

\begin{proof}
    Let $U(s_0,R):=\{s\in\compl: |s_0-s|<R\}$ be the open ball centered at $s_0$ with radius $R$.
    
    Notice that $z$ is a zero of $\bar{g}(s)$ and it is in the interior of $S$, then given $M>0$, $\exists R_1>0$, s.t. $U(z,R_1)\subset S$, and $\sup_{s\in U(z,R_1)}|\bar{g}(s)|\leq M$. 
    
    Suppose the above is true for some $M>0$, then
    \begin{align*}
        &\;\sup_{s\in S}\lV T(s)-\frac{1}{n}\bar{g}(s)\one\one^T\rV\\ 
        \geq&\;\sup_{s\in U(z,R_1)}\lV T(s)-\frac{1}{n}\bar{g}(s)\one\one^T\rV\\
        \geq&\; \sup_{s\in U(z,R_1)}\lp\|T(s)\|-\sup_{s'\in U(z,R_1)}\lV \frac{1}{n}\bar{g}(s')\one\one^T\rV\rp\\
        =& \; \sup_{s\in U(z,R_1)}\lp\|T(s)\|-\sup_{s'\in U(z,R_1)}|\bar{g}(s')|\rp\\
        \geq& \; \sup_{s\in U(z,R_1)}\|T(s)\|-M\,.
    \end{align*}
    Apparently we only need to find such $M$ and also show that
    \ben
        \exists s^*\in U(z,R_1),\ \text{s.t.} \ \sup_{s\in U(z,R_1)}\|T(s)\|\geq \|T(s^*)\|\geq 2M\,.
    \een
    By the assumption, every $g_i(s)$, $i=1,\cdots,n$ can be written as
    \ben
        g_i(s) = (s-z)h_i(s)\,.
    \een
    Here, $h_i(s)$ is rational and we denote
    \ben
    h_i(z)=h_{i0}\quad i=1,\cdots,n\,,
    \een
    where $h_{i0}\in \mathbb{R}$ and $h_{i0}\neq 0$, $i=1,\cdots,n$. We let $h_\mathrm{max}=\max_{i\in[n]}|h_{i0}|$ and $h_\mathrm{min}=\min_{i\in[n]}|h_{i0}|>0$.
    
    Since for $i\in[n]$, $h_i(s)$ is rational and $z$ is not a pole of $h_i(s)$, $\exists R_2>0$ s.t. $h_i(s)$,$i=1,\cdots,n$ are holomorphic on $U(z,R_2)$, i.e. every $h_i(s)$ has expansion
    \ben
        h_i(s)=h_{i0}+\sum_{k=1}^\infty\frac{h_i^{(k)}(z)}{k!}(s-z)^k:=h_{i0}+r_i(s)\,,
    \een
    where $h_i^{(k)}(\cdot)$ is the $k$-th derivative of $h_i(\cdot)$. For every $i\in [n]$, expand $r_i(s)$ in Taylor series, then using Cauchy's estimation formula~\cite{freitag1977complex}, we have $\forall s\in U(z,\frac{R_2}{2})$,
    \ben
        |r_i(s)|\leq \sum_{k=1}^\infty \frac{M_{h_i}}{R_2^k}|s-z|^k=\frac{M_{h_i}\frac{|s-z|}{R_2}}{1-\frac{|s-z|}{R_2}}\leq \frac{2M_{h_i}}{R_2}|s-z|\,,
    \een
    where $M_{h_i}=\max_{|s-z|=R_2}|h_i(s)|$.
    
    For $s\in U(z,\frac{R_2}{2})$, $\dg\{g_i(s)\}$ can be expanded as
    \ben
        \dg\{g_i(s)\}=(s-z)\dg\{h_i(s)\}=(s-z)(H_0+R(s))\,,
    \een
    where $H_0=\dg\{h_{i0}\}, R(s)=\dg\{r_i(s)\}$. And we also have
    \ben
        \|R(s)\|\leq\max_{i\in[n]}|r_i(s)|\leq  M_h|s-z|\,,
    \een
    where $M_h=\max_{i\in[n]}\frac{2M_{h_i}}{R_2}$.
    
    Now for $T(s)$, we start from
    \begin{align}
        &\;\|T(s)\|\nonumber\\
        =&\;\|(I+\dg\{g_i(s)\}L)^{-1}\dg\{g_i(s)\}\|\nonumber\\
        =&\;|s-z|\|(I+\dg\{g_i(s)\}L)^{-1}\dg\{h_i(s)\}\|\nonumber\\
        \eqref{eq_sv_prod_lb} 
        \geq&\; |s-z|\|(I+\dg\{g_i(s)\}L)^{-1}\|\sigma_1(\dg\{h_i(s)\})\nonumber\\
        =&\; |s-z|\frac{\sigma_1(\dg\{h_i(s)\})}{\sigma_1(I+\dg\{g_i(s)\}L)}\,.
        \label{eq_thm_unifm_conv_fail_1}
    \end{align}
    For $s\in U(z,\frac{R_2}{2})$, the numerator can be lower bounded by
    \begin{align}
        \sigma_1(\dg\{h_i(s)\})&=\;\sigma_1(H_0+R(s))\nonumber\\
      \eqref{eq_weyl_ineq_sv}  &\geq\; h_\mathrm{min}-\|R(s)\|\nonumber\\
      &\geq \; h_\mathrm{min}-M_h|s-z|\,.\label{eq_thm_unifm_conv_fail_2}
    \end{align}
    Then, let 
    \ben L_H=H_0^{1/2}LH_0^{1/2}\,,\een
    which is semi-positive definite. the denominator can be upper bounded by
    \begin{align}
        &\;\sigma_1(I+\dg\{g_i(s)\}L)\nonumber\\
        =&\; \sigma_1(I+(s-z)H_0L+(s-z)R(s)L)\nonumber\\
        =&\;\sigma_1(H_0^{1/2}[ I+(s-z)L_H\nonumber\\
        &\; \quad\quad +(s-z)H_0^{-1/2}R(s)H_0^{-1/2}L_H] H_0^{-1/2})\nonumber\\
        &\;\eqref{eq_sv_prod_ub}\nonumber\\
        \leq&\; \frac{h_\mathrm{max}^{1/2}}{h_\mathrm{min}^{1/2}} \sigma_1\lp I+(s-z)L_H+(s-z)H_0^{-1/2}R(s)H_0^{-1/2}L_H\rp\nonumber\\
        &\; \eqref{eq_weyl_ineq_sv}\nonumber\\
        \leq&\; \frac{h_\mathrm{max}^{1/2}}{h_\mathrm{min}^{1/2}}\lp\sigma_1\lp I+(s-z)L_H\rp+|s-z|\|L_H\|\frac{\|R(s)\|}{h_\mathrm{min}}\rp\,.\label{eq_thm_unifm_conv_fail_3}
    \end{align}
    Suppose $\lambda_n(L_H)=\|L_H\|\geq \max\{\frac{1}{R_1},\frac{2}{R_2}\}$, then for $s^*=z-\frac{1}{\lambda_n(L_H)}\in U(z,\min\{R_1,\frac{R_2}{2}\})$,  we have
    \begin{enumerate}
        \item $I+(s^*-z)L_H=I-\frac{L_H}{\lambda_n(L_H)}$ is symmetric and has eigenvalue 0. Then $\sigma_1(I+(s^*-z)L_H)=0$;
        \item $|s^*-z|\|L_H\|=1$.
    \end{enumerate}
    From \eqref{eq_thm_unifm_conv_fail_3} and \eqref{eq_weyl_ineq_sv}, we have
    \begin{align}
        &\;\sigma_1(I+\dg\{g_i(s^*)\}L)\nonumber\\
        \leq &\;\frac{h_\mathrm{max}^{1/2}}{h_\mathrm{min}^{1/2}}\lp\sigma_1\lp I+(s^*-z)L_H\rp+|s^*-z|\|L_H\|\frac{\|R(s^*)\|}{h_\mathrm{min}}\rp\nonumber\\
        \leq &\; |s^*-z|\|L_H\|\frac{h^{1/2}_\mathrm{max}}{h^{3/2}_\mathrm{min}} \|R(s^*)\|\nonumber\\
        \leq &\;\frac{h^{1/2}_\mathrm{max}}{h^{3/2}_\mathrm{min}}\|R(s^*)\|\leq\frac{h^{1/2}_\mathrm{max}}{h^{3/2}_\mathrm{min}}M_h|s^*-z|\,.\label{eq_thm_unifm_conv_fail_4}
    \end{align}
        
    Combing \eqref{eq_thm_unifm_conv_fail_1}\eqref{eq_thm_unifm_conv_fail_2}\eqref{eq_thm_unifm_conv_fail_4}, let $\lambda_n(L_H)$ large enough such that $\lambda_n(L_H)\geq\max\{\frac{2M_h}{h_\mathrm{min}},\frac{1}{R_1},\frac{2}{R_2}\}$ , we have
    \begin{align*}
        \|T(s^*)\|&\geq\; |s^*-z|\frac{\sigma_1(\dg\{h_i(s^*)\})}{\sigma_1(I+\dg\{g_i(s^*)\}L)}\\
        &\geq \;\frac{h_\mathrm{min}-\frac{M_h}{\lambda_n(L_H)}}{\frac{h^{1/2}_\mathrm{max}}{h^{3/2}_\mathrm{min}}M_h}\geq \frac{h_\mathrm{min}^{5/2}}{2h^{1/2}_\mathrm{max}M_h}\,.
    \end{align*}
    Since $$\lambda_n(L_H)\geq \lambda_n(L)h_\mathrm{min}\geq \lambda_2(L)h_\mathrm{min}\,,$$ by \eqref{eq_sv_prod_lb}, we can have $\lambda_n(L_H)$ arbitrarily large by increasing $\lambda_2(L)$. 
    
    In summary, we can pick $R_2$ for the expansion of $h_i(s)$ to exist, then pick $M=\frac{h^{5/2}_{min}}{4h^{1/2}_\mathrm{max}M_h}$, which also determines $R_1$, and let $$\lambda=\frac{1}{h_\mathrm{min}}\max\lb\frac{2M_h}{h_\mathrm{min}},\frac{1}{R_1},\frac{2}{R_2}\rb\,.$$ Then we conclude that $\forall L$ s.t. $\lambda_2(L)\geq \lambda$, 
    \ben
        \exists s^*=z-\frac{1}{\lambda_n(L_H)}\in U(z,R_1)\subset S,\ \text{s.t.} \ \|T(s^*)\|\geq 2M\,,
    \een
    which implies
    \ben
        \sup_{s\in S}\lV T(s)-\frac{1}{n}\bar{g}(s)\one\one^T\rV\geq \sup_{s\in U(z,R_1)}\|T(s)\|-M\geq M\,.
    \een
\end{proof}
}{}

\subsection{Proof of Lemma \ref{lem_unifm_conv_zero_neighbor}}
\label{app_proof_lem_unifm_conv_zero_neighbor}
\setcounter{equation}{0}
\def\theequation{\thesubsection.\arabic{equation}}

\begin{proof}
    
    We denote $U(s_0,R):=\{s\in\compl: |s_0-s|<R\}$ the open ball centered at $s_0$ with radius $R$.
    
    By assumptions, we have $F_l=\inf_{s\in U(s_0,\delta_0)}f(s)>0$ and $F_h=\sup_{s\in U(s_0,\delta_0)}f(s)<\infty$. Without loss of generality, we assume $F_l=1$. For $s\in U(s_0,\delta_0)$, we have
    \be
        1\leq |f(s)|\leq F_h\label{eq_f_bd_zero}\,.
    \ee
    
    Consider any set $S$, we have
    \ben
        \sup_{s\in S}\lV T(s)-\frac{1}{n}\bar{g}(s)\one\one^T\rV\leq \sup_{s\in S}\|T(s)\|+\sup_{s\in S}|\bar{g}(s)|\,.
    \een
    For the second term, since $\bar{g}(s)$ is rational and $s_0$ is a zero of $\bar{g}(s)$, by continuity of $\bar{g}(s)$, $\exists \delta_1>0$ such that \be \sup_{s\in U(s_0,\delta_1)}|\bar{g}(s)|\leq \frac{\epsilon}{2}\,.\label{eq_bar_g_bd_zero}\ee
    
    Now we bound the first term. To start with, notice that $\mathcal{N}(s_0)=1$. Without loss of generality, assume $g_1(s_0)=0$ and $g_i(s_0)\neq 0, 2\leq i\leq n$. Then again by continuity of $g_i^{-1}(s),2\leq i\leq n$, $\exists \delta_2,M>0$ such that 
    \be
        \sup_{s\in U(s_0,\delta_2)}\max_{2\leq i\leq n}|g_i^{-1}(s)|\leq M\,.\label{eq_tilde_G_inv_bd_zero}
    \ee
    On the other hand, since $s_0$ is a zero of $g_1(s)$, one can pick $\delta_3>0$ such that
    \be
        \sup_{s\in U(s_0,\delta_3)}|g_1(s)|\leq \frac{\epsilon}{36n+2\epsilon nMF_h^2}\,.\label{eq_g1_bd_zero}
    \ee
    We let $\delta = \min\{\delta_0,\delta_1,\delta_2,\delta_3\}$ and we would like to bound $\sup_{s\in U(s_0,\delta)}\|T(s)\|$ by $\frac{\epsilon}{2}$ under sufficiently large $\lambda_2(L)$. This would imply, together with \eqref{eq_bar_g_bd_zero}, that
    \ben
        \sup_{s\in U(s_0,\delta)}\lV T(s)-\frac{1}{n}\bar{g}(s)\one\one^T\rV\leq \epsilon\,.
    \een
    The remaining of this proof is to bound $\sup_{s\in U(s_0,\delta)}\|T(s)\|$ by $\frac{\epsilon}{2}$.
    
    We write $L$ in block form as $\bmt l_{11}&L^T_{21}\\
    L_{21}&\Tilde{L}\emt$, by separating its first row and column from the rest. Here $\Tilde{L}$ is a grounded Laplacian of $L$, and $\tilde{L}$ is invertible as long as $\lambda_2(L)\neq 0$ according to Lemma \ref{lem_grd_Lap_eig_bd}.
    
    Since $L\one_n=0$, we have $L_{21}+\Tilde{L}\one_{n-1}=0$, which gives
    \begin{subequations}\label{eq_lap_bd_zero}
        \begin{equation}
            \Tilde{L}^{-1}L_{21}=-\one_{n-1}\label{eq_lap_bd_zero_1}\,,
        \end{equation}
        \begin{equation}
            L_{21}^T\Tilde{L}^{-1}L_{21}=-L_{21}^T\one_{n-1}=l_{11}\label{eq_lap_bd_zero_2}\,.
        \end{equation}
    \end{subequations}
    We use these equalities later.
    
    We define $\Tilde{G}(s)=\dg\{g_2(s),\cdots,g_n(s)\}$ and $\Tilde{T}(s)=(I_{n-1}+\Tilde{G}(s)f(s)\Tilde{L})^{-1}\Tilde{G}(s)$. Through some computation, we can write $T(s)$ in the following block form
    \ben
        \bmt
        A(s)&-A(s)f(s)L^T_{21}\Tilde{T}(s)\\
        -A(s)f(s)\Tilde{T}(s)L_{21}&\Tilde{T}(s)+A(s)f^2(s)\Tilde{T}(s)L_{21}L^T_{21}\Tilde{T}(s)
        \emt\,,
    \een
    where $$A(s)=\frac{g_1(s)}{1+g_1(s)\lp f(s)l_{11}-f^2(s)L^T_{21}\Tilde{T}(s)L_{21}\rp}\,.$$.
    
    Then an upper bound of $\|T(s)\|$ is given by
    \begin{align}
        &\;\|T(s)\|\nonumber\\
        \leq &\; \lV A(s)\bmt 1&f(s)L^T_{21}\Tilde{T}(s)\\
        f(s)\Tilde{T}(s)L_{21}&f^2(s)\Tilde{T}(s)L_{21}L^T_{21}\Tilde{T}(s)\emt\rV\nonumber\\
        &\;\qquad +\lV \bmt 0&0\\
        0& \Tilde{T}(s)\emt\rV\nonumber\\
        =&\; |A(s)|\lV \bmt1\\f(s)\Tilde{T}(s)L_{21}\emt\bmt1\\f(s)\Tilde{T}(s)L_{21}\emt^T\rV+\|\Tilde{T}(s)\|\nonumber\\
        \leq &\; |A(s)|\lp 1+\|f(s)\Tilde{T}(s)L_{21}\|\rp^2+\|\Tilde{T}(s)\|\,.\label{eq_T_bd_split_zero}
    \end{align}
    Now we bound every term in \eqref{eq_T_bd_split_zero} for $s\in U(s_0,\delta)$ through following steps:
    
    \noindent
    1)\ We firstly bound $|A(s)|$. By Woodbury matrix identity~\cite[0.7.4]{Horn:2012:MA:2422911}, we have
    \begin{align}
        f(s)\Tilde{T}(s)&=\;f(s)\lp I_{n-1}+\Tilde{G}(s)f(s)\Tilde{L}\rp^{-1}\Tilde{G}(s)\nonumber\\
        &=\;\lp\Tilde{L}+f^{-1}(s)\Tilde{G}^{-1}(s)\rp^{-1}\nonumber\\
        &=\; \Tilde{L}^{-1}-\Tilde{L}^{-1}\lp f^{-1}(s)\Tilde{G}(s)+\Tilde{L}^{-1}\rp^{-1}\Tilde{L}^{-1}\,.\label{eq_kv_bd_zero}
    \end{align}
    By \eqref{eq_kv_bd_zero} and \eqref{eq_lap_bd_zero}, we have
    \begin{align}
        &\; l_{11}-f(s)L_{21}^T\Tilde{T}(s)L_{21}\nonumber\\
        =&\; l_{11}-L_{21}^T\lp\Tilde{L}^{-1}-\Tilde{L}^{-1}\lp f^{-1}(s)\Tilde{G}(s)+\Tilde{L}^{-1}\rp^{-1}\Tilde{L}^{-1}\rp L_{21}\nonumber\\
        =&\; (l_{11}-L_{21}^T\Tilde{L}^{-1}L_{21})\nonumber\\
        &\;\quad +L^T_{21}\Tilde{L}^{-1}\lp f^{-1}(s)\Tilde{G}(s)+\Tilde{L}^{-1}\rp^{-1}\Tilde{L}^{-1}L_{21}\nonumber\\
        &\;\eqref{eq_lap_bd_zero_2}\nonumber\\
        =&\; L^T_{21}\Tilde{L}^{-1}\lp f^{-1}(s)\Tilde{G}(s)+\Tilde{L}^{-1}\rp^{-1}\Tilde{L}^{-1}L_{21}\nonumber\\
        &\;\eqref{eq_lap_bd_zero_1}\nonumber\\
        =&\; \one_{n-1}^T\lp f^{-1}(s)\Tilde{G}(s)+\Tilde{L}^{-1}\rp^{-1}\one_{n-1}\,.\label{eq_tf_red_bd_zero}
    \end{align}
    When $\lambda_1(\Tilde{L})\geq 2MF_h$, the following holds:
    \begin{align}
        &\;\lv\one_{n-1}^T\lp f^{-1}(s)\Tilde{G}(s)+\Tilde{L}^{-1}\rp^{-1}\one_{n-1}\rv\nonumber\\
        \leq&\;(n-1)\lV\lp f^{-1}(s)\Tilde{G}(s)+\Tilde{L}^{-1}\rp^{-1}\rV\nonumber\\
        \text{(Lemma \ref{lem_bd_norm_mat_inv})}\leq&\; \frac{n-1}{\sigma_1(\Tilde{G}(s))|f(s)|^{-1}-\|\Tilde{L}^{-1}\|}\nonumber\\
        \eqref{eq_f_bd_zero}\eqref{eq_tilde_G_inv_bd_zero}\leq &\; \frac{n-1}{1/\lp MF_h\rp-1/\lambda_1(\Tilde{L})}\leq 2nMF_h\label{eq_tf_red_bd_zero_part}\,,
    \end{align}
    which when combined with \eqref{eq_g1_bd_zero} gives the following bound on $|A(s)|$:
    \begin{align}
        &\;|A(s)|\nonumber\\
        =&\;\frac{|g_1(s)|}{\lv1+g_1(s)f(s)\lp l_{11}-L^T_{21}f(s)\Tilde{T}(s)L_{21}\rp\rv}\nonumber\\
        &\;\eqref{eq_tf_red_bd_zero}\nonumber\\
        =&\; \frac{|g_1(s)|}{\lv 1+g_1(s)f(s)\one_{n-1}^T\lp f^{-1}(s)\Tilde{G}(s)+\Tilde{L}^{-1}\rp ^{-1}\one_{n-1}\rv}\nonumber\\
        \leq& \; \frac{|g_1(s)|}{1-|g_1(s)||f(s)|\lv\one_{n-1}^T\lp f^{-1}(s)\Tilde{G}(s)+\Tilde{L}^{-1}\rp^{-1}\one_{n-1}\rv}\nonumber\\
        &\;\eqref{eq_g1_bd_zero}\eqref{eq_tf_red_bd_zero_part}\eqref{eq_f_bd_zero}\nonumber\\
        \leq& \; \frac{\frac{\epsilon}{36n+2n\epsilon MF_h^2}}{1-\frac{2\epsilon nMF_h^2}{36n+2n\epsilon MF_h^2}}=\frac{\epsilon}{36 n}\,.\label{eq_T1_bd_zero}
    \end{align}
    
    \noindent
    2) For $\|f(s)\Tilde{T}(s)L_{21}\|$, when $\lambda_1(\Tilde{L})\geq 2M$, we have:
    \begin{align}
        \lV \Tilde{T}(s)L_{21}\rV&=\; \lV\lp\Tilde{G}^{-1}(s)+f(s)\Tilde{L}\rp^{-1}L_{21}\rV\nonumber\\
        &=\; \lV\lp f(s)I+\Tilde{L}^{-1}\Tilde{G}^{-1}(s)\rp^{-1}\Tilde{L}^{-1}L_{21}\rV\nonumber\\
        \eqref{eq_lap_bd_zero_1}&\leq\; \sqrt{n-1}\lV\lp f(s)I+\Tilde{L}^{-1}\Tilde{G}^{-1}(s)\rp^{-1}\rV\nonumber\\
        \text{(Lemma \ref{lem_bd_norm_mat_inv})}&\leq\; \sqrt{n}\lp |f(s)|-\|\Tilde{L}^{-1}\|\|\Tilde{G}^{-1}(s)\|\rp^{-1}\nonumber\\
    \eqref{eq_f_bd_zero} \eqref{eq_tilde_G_inv_bd_zero}   &\leq\; \sqrt{n}\lp 1-M/\lambda_1(\Tilde{L})\rp^{-1}\leq 2\sqrt{n}\leq 3\sqrt{n}-1\,. \label{eq_T2_bd_zero}
    \end{align}
    
    \noindent
    3) Lastly, for $\|\Tilde{T}(s)\|$, when $\lambda_1(\Tilde{L})\geq  M+\frac{4}{\epsilon}$, we have
    \begin{align}
        \|\Tilde{T}(s)\|&=\; \lV\lp\Tilde{G}^{-1}(s)+f(s)\Tilde{L}\rp^{-1}\rV\nonumber\\
        \text{(Lemma \ref{lem_bd_norm_mat_inv})}&\leq\; \lp|f(s)|\lambda_1(\Tilde{L})-\|\Tilde{G}^{-1}(s)\|\rp^{-1}\nonumber\\
        \eqref{eq_f_bd_zero}\eqref{eq_tilde_G_inv_bd_zero}&\leq\;\lp \lambda_1(\Tilde{L})-M\rp^{-1}\leq \frac{\epsilon}{4}\,.\label{eq_T3_bd_zero}
    \end{align}
    
    Apply three bounds obtained from \eqref{eq_T1_bd_zero}\eqref{eq_T2_bd_zero}\eqref{eq_T3_bd_zero} to \eqref{eq_T_bd_split_zero}, we have:
    \ben
        \sup_{s\in U(s_0,\delta)}\|T(s)\|\leq \frac{\epsilon}{36n}\cdot 9n+\frac{\epsilon}{4}=\frac{\epsilon}{2}\,,
    \een
    which holds when $\lambda_1(\Tilde{L})\geq \max\{2MF_h,M+\frac{4}{\epsilon}\}$. According to Lemma \ref{lem_grd_Lap_eig_bd}, We can guarantee it by letting $\lambda_2(L)\geq \lambda=n\max\{2MF_h,M+\frac{4}{\epsilon}\}$. We therefore found the desired $\delta$ and $\lambda$, which finish the proof.
\end{proof}

\ifthenelse{\boolean{archive}}{\subsection{Proof of Lemma \ref{lem_unifm_prob_conv_compact}}
\label{app_proof_lem_unifm_prob_conv_compact}
\setcounter{equation}{0}
\def\theequation{\thesubsection.\arabic{equation}}

\begin{proof}
    It suffices to show that $\forall\epsilon>0$,
    \be\label{eq_lem_unifm_prob_conv_1}
        \lim_{n\ra +\infty}\prob\lp\sup_{s\in S}| \bar{g}_n(s,\mathbf{w})-\hat{g}(s)|\geq\epsilon\rp=0\,,
    \ee
    since $| \bar{g}_n(s,\mathbf{w})-\hat{g}(s)|=\lV \frac{1}{n}\bar{g}_n(s,\mathbf{w})\one\one^T-\frac{1}{n}\hat{g}(s)\one\one^T\rV$.
    
    By the assumptions,  $\{\bar{g}_n(s,\mathbf{w}),n\in\mathbb{N}_+,\mathbf{w}\in\Omega^\infty\}$, and  $\{g_i^{-1}(s,w),i\in\mathbb{N}_+,w\in\Omega\}$ are uniformly bounded by $M_1>0$ and $M_2>0$, respectively on $S$. Then, at any $s\in S$, both $Re\lp g_i^{-1}(s,w)\rp$ and $Im\lp g_i^{-1}(s,w)\rp$ are random variables bounded within $[-M_2,M_2]$. We can simply bound their variances by
    \begin{align*}
        \mathrm{Var}\lp Re\lp g_i^{-1}(s,w)\rp\rp\leq (2M_2)^2=4M_2^2\,,\\
        \mathrm{Var}\lp Im\lp g_i^{-1}(s,w)\rp\rp\leq (2M_2)^2=4M_2^2\,.
    \end{align*}
    Then it follows that
    \begin{align*}
        &\;\mathrm{Var}\lp Re\lp\bar{g}_n^{-1}(s,\mathbf{w})\rp\rp\\ =&\;\mathrm{Var}\lp Re\lp n^{-1}\sum_{i=1}^ng_i^{-1}(s,w)\rp\rp\leq 4M_2^2/n\,,\\\intertext{and}
        &\;\mathrm{Var}\lp Im\lp\bar{g}_n^{-1}(s,\mathbf{w})\rp\rp\\ =&\;\mathrm{Var}\lp Im\lp n^{-1}\sum_{i=1}^ng_i^{-1}(s,w)\rp\rp\leq 4M_2^2/n\,.
    \end{align*}
    By definition of $\hat{g}(s)$ in \eqref{eq_g_bar_stoch}, we have $\expc Re\lp\bar{g}_n^{-1}(s,\mathbf{w})\rp = Re\lp\hat{g}(s)\rp$ and $\expc Im\lp\bar{g}_n^{-1}(s,\mathbf{w})\rp = Im\lp\hat{g}(s)\rp$, then by Chebyshev's inequality, for $\epsilon>0$, we have
    \begin{align}
        &\;\prob\lp \lv\bar{g}_n^{-1}(s,\mathbf{w})-\hat{g}^{-1}(s)\rv\geq \epsilon\rp\nonumber\\
        \leq&\;\prob\lp \lv Re\lp\bar{g}_n^{-1}(s,\mathbf{w})\rp-Re\lp\hat{g}^{-1}(s)\rp\rv+\right.\nonumber\\
        &\;\quad\quad\quad \left. \lv Im\lp\bar{g}_n^{-1}(s,\mathbf{w})\rp-Im\lp\hat{g}^{-1}(s)\rp\rv\geq \epsilon\rp\nonumber\\
        \leq&\; \prob\lp \lv Re\lp\bar{g}_n^{-1}(s,\mathbf{w})\rp-Re\lp\hat{g}^{-1}(s)\rp\rv\geq \epsilon/2\rp+\nonumber\\
        &\; \quad\quad\quad \prob\lp \lv Im\lp\bar{g}_n^{-1}(s,\mathbf{w})\rp-Im\lp\hat{g}^{-1}(s)\rp\rv\geq \epsilon/2\rp\nonumber\\
        \leq&\; \frac{4\mathrm{Var}\lp Re\lp\bar{g}_n^{-1}(s,\mathbf{w})\rp\rp}{\epsilon^2}+\frac{4\mathrm{Var}\lp Im\lp\bar{g}_n^{-1}(s,\mathbf{w})\rp\rp}{\epsilon^2}\nonumber\\
        \leq &\; \frac{32M_2^2}{\epsilon^2n}\,.\label{eq_lem_unifm_prob_conv_2}
    \end{align}
    On the other hand, we have
    \begin{align}
        &\;\lv\bar{g}_n(s,\mathbf{w})\rv\leq M_1\nonumber\\
        \Rightarrow &\;\lv\bar{g}_n^{-1}(s,\mathbf{w})\rv\geq \frac{1}{M_1}\nonumber\\
        \Rightarrow &\;\lv\bar{g}_n^{-1}(s,\mathbf{w})-\hat{g}^{-1}(s)+\hat{g}^{-1}(s)\rv\geq \frac{1}{M_1}\nonumber\\
        \Rightarrow &\;\lv\hat{g}^{-1}(s)\rv\geq \frac{1}{M_1}-\lv\bar{g}_n^{-1}(s,\mathbf{w})-\hat{g}^{-1}(s)\rv\,.\label{eq_lem_unifm_prob_conv_3}
    \end{align}
    Then given $\epsilon>0$, $\forall n\in\mathbb{N}_+, \forall s\in S$, the following holds:
    \begin{align*}
        &\; \prob\lp\lv\hat{g}(s)- \bar{g}_n(s,\mathbf{w})\rv\geq \epsilon\rp\\
        =&\; \prob\lp \lv \bar{g}_n(s,\mathbf{w})\hat{g}(s)\lp\bar{g}_n^{-1}(s,\mathbf{w})-\hat{g}^{-1}(s)\rp\rv\geq \epsilon\rp\\
        \leq &\; \prob\lp \lv \bar{g}_n(s,\mathbf{w})\rv\lv\hat{g}(s)\rv\lv\bar{g}_n^{-1}(s,\mathbf{w})-\hat{g}^{-1}(s)\rv\geq \epsilon\rp\\
        \leq &\; \prob\lp M_1\lv\bar{g}_n^{-1}(s,\mathbf{w})-\hat{g}^{-1}(s)\rv\geq \epsilon|\hat{g}^{-1}(s)|\rp\\
        \eqref{eq_lem_unifm_prob_conv_3}\ \leq &\; \prob\lp M_1\lv\bar{g}_n^{-1}(s,\mathbf{w})-\hat{g}^{-1}(s)\rv\geq\right.\\
         &\; \quad\quad\quad \left. \frac{\epsilon}{M_1}-\epsilon \lv\bar{g}_n^{-1}(s,\mathbf{w})-\hat{g}^{-1}(s)\rv\rp\\
        =&\; \prob\lp \lv\bar{g}_n^{-1}(s,\mathbf{w})-\hat{g}^{-1}(s)\rv\geq \frac{\epsilon}{M_1(M_1+\epsilon)}\rp\\
        \eqref{eq_lem_unifm_prob_conv_2}\ \leq &\; \frac{32M_2^2M_1^2(M_1+\epsilon)^2}{\epsilon^2n}\,.
    \end{align*}
    By taking $n\ra +\infty$ on both sides, we prove that $\bar{g}_n(s,\mathbf{w})$ converges point-wise to $\hat{g}(s)$ on $S$.
    
    We now show that $\bar{g}_n(s,\mathbf{w})$ is also stochastic equicontinuous on $S$. For the definition of stochastic equicontinuity, please refer to~\cite{Newey1991}. We already assumed that $\bar{g}_n(s,\mathbf{w})\leq M_1$, $\forall \mathbf{w}\in \Omega^\infty, s\in S$. Then $\forall \mathbf{w}\in\Omega^\infty, \forall s_1,s_2\in S$, we have
    \begin{align*}
        &\;|\bar{g}_n(s_1,\mathbf{w})-\bar{g}_n(s_2,\mathbf{w})|\\
        \leq &\;\lv\bar{g}_n(s_1,\mathbf{w})||\bar{g}_n(s_2,\mathbf{w})||\bar{g}^{-1}_n(s_1,\mathbf{w})-\bar{g}^{-1}_n(s_2,\mathbf{w})\rv\\
        \leq &\; M_1^2\lv \sum_{i=1}^n\lp g_i^{-1}(s_1,w_i)-g_i^{-1}(s_2,w_1)\rp\rv\\
        \leq &\; M_1^2\sum_{i=1}^n\lv g_i^{-1}(s_1,w_i)-g_i^{-1}(s_2,w_i)\rv\leq nM_1^2L|s_1-s_2|\,,
    \end{align*}
    where the last inequality is from our third assumption and also the fact that $g_i^{-1}(s,w)=g_1^{-1}(s,w)$ (identically distributed as random functions). By~\cite[Corollary 2.2]{Newey1991}, the inequality above is sufficient to establish stochatic equicontinuity of $\bar{g}_n(s,\mathbf{w})$ on $S$, and combining point-wise convergence and the fourth assumption that $\hat{g}(s)$ is uniform continuous, we get the uniform convergence of $\bar{g}_n(s,\mathbf{w})$ to $\hat{g}(s)$ on $S$, which gives \eqref{eq_lem_unifm_prob_conv_1}.
\end{proof}
\subsection{Proofs of Theorem \ref{thm_to_time} and \ref{thm_passive_to_stable}}\label{app_proof_to_time}

\begin{proof}[Proof of Theorem \ref{thm_to_time}]
    When the input to the network is $U(s)$, the output response of the $i$-th node is $$
        Y_i(s)=e_i^TT(s)U(s)\,,
    $$
    where $e_i$ is the $i$-th column of the identity matrix $I_n$.
    
    Using Mellin's inverse formula~\cite[Theorem 3.20]{Dullerud2013}, we have
    \begin{align*}
        &\;|y_i(t)-\bar{y}(t)|\\
        =&\;\lv\frac{1}{2\pi j} \lim_{\omega\ra \infty}\int_{\sigma-j\omega}^{\sigma+j\omega} e^{st}\lp e_i^TT(s)U(s)-e_i^T\bar{g}(s)\one\frac{\one^T}{n}U(s)\rp ds\rv\\
        \leq&\; \frac{e^\sigma}{2\pi}\lim_{\omega\ra \infty}\int_{\sigma-j\omega}^{\sigma+j\omega}\lv e_i^TT(s)U(s)-e_i^T\bar{g}(s)\one\frac{\one^T}{n}U(s)\rv ds\\
        \leq &\;\frac{e^\sigma}{2\pi}\lim_{\omega\ra \infty}\int_{\sigma-j\omega}^{\sigma+j\omega} \lV T(s)-\frac{1}{n}\bar{g}(s)\one\one^T\rV \|U(s)\|ds\\
        =&\;\frac{e^\sigma}{2\pi}\lp (A) + (B) + (C)\rp\,,
    \end{align*}
    where
    $$
        (A)=\int_{\sigma-j\omega_0}^{\sigma+j\omega_0} \lV T(s)-\frac{1}{n}\bar{g}(s)\one\one^T\rV \|U(s)\|ds\,,
    $$
    $$
        (B)=\lim_{\omega\ra \infty}\int_{\sigma+j\omega_0}^{\sigma+j\omega} \lV T(s)-\frac{1}{n}\bar{g}(s)\one\one^T\rV \|U(s)\|ds\,,
    $$
    $$
        (C)=\lim_{\omega\ra \infty}\int_{\sigma-j\omega}^{\sigma-j\omega_0} \lV T(s)-\frac{1}{n}\bar{g}(s)\one\one^T\rV \|U(s)\|ds\,,
    $$
    By our assumption,
    \begin{align*}
        (B)&=\;\lim_{\omega\ra \infty}\int_{\sigma+j\omega_0}^{\sigma+j\omega} \lV T(s)-\frac{1}{n}\bar{g}(s)\one\one^T\rV \|U(s)\|ds\\
        &\leq \; \lim_{\omega\ra \infty}\int_{\sigma+j\omega_0}^{\sigma+j\omega} \lp \lV T(s)\rV+\lV\frac{1}{n}\bar{g}(s)\one\one^T\rV \rp\|U(s)\|ds\\
        &\leq\; 2\gamma\lim_{\omega\ra \infty}\int_{\sigma+j\omega_0}^{\sigma+j\omega} \|U(s)\|ds\leq \frac{2\pi\epsilon}{3e^\sigma \gamma}\,.
    \end{align*}
    Similarly, we have $(C)\leq \frac{2\pi\epsilon}{3e^\sigma \gamma}$.
    
    For the remaining term, we have
    \begin{align*}
        (A)&=\;\int_{\sigma-j\omega_0}^{\sigma+j\omega_0} \lV T(s)-\frac{1}{n}\bar{g}(s)\one\one^T\rV \|U(s)\|ds\\
        &\leq\;\sup_{w\in[-w_0,w_0]}\lV T(\sigma+jw)-\frac{1}{n}\bar{g}(\sigma+jw)\one\one^T\rV\\
        &\;\quad\quad \times\int_{\sigma-j\omega_0}^{\sigma+j\omega_0} \|U(s)\|ds
    \end{align*}
    Since $[\sigma-j\omega_0,\sigma+j\omega_0]$ is a compact set that satisfies the assumption in Theorem \ref{thm_unifm_conv_compact}, we have
    $$
        \lim_{\lambda_2(L)\ra\infty}\sup_{w\in[-w_0,w_0]}\lV T(\sigma+jw)-\frac{1}{n}\bar{g}(\sigma+jw)\one\one^T\rV=0\,.
    $$
    Therefore, for sufficienly large $\lambda_2(L)$, we have $$(A)\leq \frac{2\pi\epsilon}{3e^\sigma \gamma}\,.$$.
    
    Combining the upperbounds for $(A),(B),(C)$, we have
    $$
        |y_i(t)-\bar{y}(t)|\leq \epsilon\,.
    $$
    Notice that the choice of $\lambda_2(L)$ does not depends on time $t$, hence this inequality holds for all $t>0$.
\end{proof}\

\begin{proof}[Proof of Theorem \ref{thm_passive_to_stable}]
    For each $g_i(s), i=1,\cdots,n$, we have, by the OSP property,
    $$
        Re(g_i(s))\geq \epsilon_i |g_i(s)|^2,\forall Re(s)>0\,.
    $$
    Let $\epsilon=\min\{\epsilon_i: i=1\cdots,n\}$, we have
    $$
        Re(G(s))\succeq \epsilon  G^*(s)G(s)\,, 
    $$
    or equivalently,
    $$
        \bmt G(s) \\ I\emt^* \bmt -\epsilon I & I\\ I & 0\emt\bmt G(s) \\ I\emt\succeq 0\,.
    $$
    Since $g_i(s)$ are all OSP, then $g_i(s)$ is positive real~\cite{khalil2002nonlinear}. A positive real function that is not zero function has no zero nor pole on the left half plane. Therefore $g_i(s)$ are invertible for all $Re(s)>0$, which ensures that $G(s)$ is invertible for all $Re(s)>0$.
    Then 
    $$
        (G^*(s))^{-1}\bmt G(s) \\ I\emt^* \bmt -\epsilon I & I\\ I & 0\emt\bmt G(s) \\ I\emt G^{-1}(s)\succeq 0\,,
    $$
    which is
    \be
        \bmt I \\ G^{-1}(s)\emt^* \bmt -\epsilon I & I\\ I & 0\emt\bmt I \\ G^{-1}(s)\emt\succeq 0\,.\label{eq_osp_G}
    \ee
    Notice that $$T(s)=(I+G(s)f(s)L)^{-1}G(s)=(G^{-1}(s)+f(s)L)^{-1}\,,$$
    then from \eqref{eq_osp_G} and the fact that $f(s)$ is PR, we have
    \begin{align*}
        &\;\bmt I \\ T^{-1}(s)\emt^* \bmt -\epsilon I & I\\ I & 0\emt\bmt I \\ T^{-1}(s)\emt\\
        =&\;\bmt I \\ G^{-1}(s)+f(s)L\emt^* \bmt -\epsilon I & I\\ I & 0\emt\bmt I \\ G^{-1}(s)+f(s)L\emt\\
        =&\; \bmt I \\ G^{-1}(s)\emt^* \bmt -\epsilon I & I\\ I & 0\emt\bmt I \\ G^{-1}(s)\emt+[f^*(s)+f(s)]L\\
        \succeq&\; \bmt I \\ G^{-1}(s)\emt^* \bmt -\epsilon I & I\\ I & 0\emt\bmt I \\ G^{-1}(s)\emt\succeq 0\,.
    \end{align*}
    Now for sufficiently large $\gamma>0$, we have
    $$
        \bmt -\epsilon I & I\\ I & 0\emt+\bmt \frac{\epsilon}{2} I & 0\\ 0 & -\gamma^2 \frac{\epsilon}{2} I\emt=\bmt -\frac{\epsilon}{2} I & I\\ I & -\gamma^2 \frac{\epsilon}{2} I\emt\preceq 0\,,
    $$
    since its Schur complement $(-\frac{\epsilon}{2}+\frac{2}{\epsilon \gamma^2})I\preceq 0$ for large $\gamma$.
    
    Therefore,
    \begin{align*}
        &\;\bmt I \\ T^{-1}(s)\emt^* \bmt -\frac{\epsilon}{2} I & 0\\ 0 & \gamma^2 \frac{\epsilon}{2} I\emt\bmt I \\ T^{-1}(s)\emt\\
        \succeq&\;\bmt I \\ T^{-1}(s)\emt^* \bmt -\epsilon I & I\\ I & 0\emt\bmt I \\ T^{-1}(s)\emt\succeq 0\,,
    \end{align*}
    which is exactly,
    $$\gamma^2\frac{\epsilon}{2}(T^{-1}(s))^*(T^{-1}(s))\succeq \frac{\epsilon}{2}I\,.$$
    This shows that
    $$
        \sigma_{min}^2(T^{-1}(s))\geq \frac{1}{\gamma^2},\forall Re(s)>0\,,
    $$
    which is equivalent to
    $$
        \|T(s)\|_2\leq \gamma\,, \forall Re(s)>0\,.
    $$
    
\end{proof}
}{}

\bibliographystyle{IEEEtran}
\bibliography{ref.bib}
\ifthenelse{\boolean{bio}}{
\begin{IEEEbiography}
    [{\includegraphics[width=1in,height=1.25in,clip]{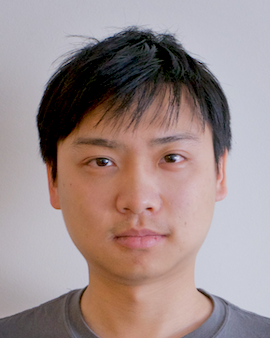}}]{Hancheng Min}
    is currently working toward the Ph.D.
    degree at the Department of Electrical and Computer Engineering, Johns Hopkins University. He received the B.Eng. degree in Electrical Engineering and Automation from Tongji University in 2016, and the M.S. degree in Systems Engineering from University of Pennsylvania in 2018. His research interests include analysis and control for large-scale networks, reinforcement learning and deep learning theory.
\end{IEEEbiography}
\begin{IEEEbiography}
    [{\includegraphics[width=1in,height=1.25in,trim={{2.7cm} 0 {2.7cm} 0},clip]{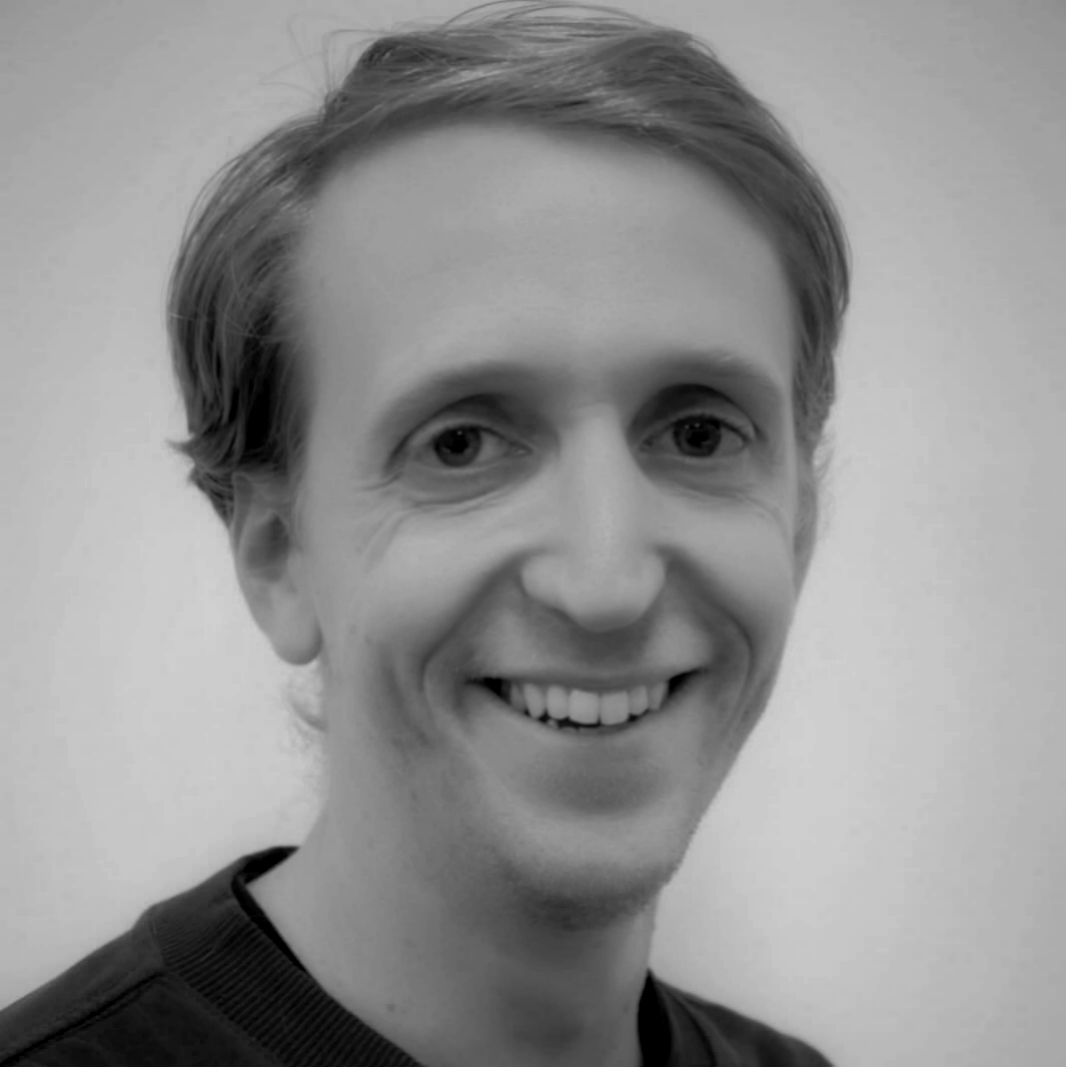}}]{Richard Pates}
    received the M.Eng degree in 2009, and the Ph.D. degree in 2014, both from the University of Cambridge. He is currently an Senior Lecturer at Lund University. His research interests include modular methods for control system design, stability and control of electrical power systems, and fundamental performance limitations in large-scale systems.
\end{IEEEbiography}

\begin{IEEEbiography}[{\includegraphics[width=1in,height=1.25in,clip,keepaspectratio]{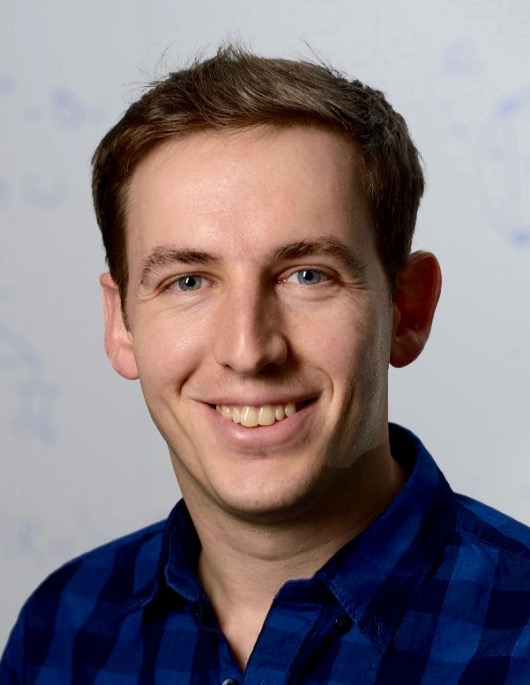}}]
{Enrique Mallada} (S'09-M'13-SM') is an Assistant Professor of Electrical and Computer Engineering at Johns Hopkins University. Prior to joining Hopkins in 2016, he was a Post-Doctoral Fellow in the Center for the Mathematics of Information at Caltech from 2014 to 2016. He received his Ingeniero en Telecomunicaciones degree from Universidad ORT, Uruguay, in 2005 and his Ph.D. degree in Electrical and Computer Engineering with a minor in Applied Mathematics from Cornell University in 2014. 
Dr. Mallada was awarded 
the NSF CAREER award in 2018,
the ECE Director's PhD Thesis Research Award for his dissertation in 2014, 
the Center for the Mathematics of Information (CMI) Fellowship from Caltech in 2014,
and the Cornell University Jacobs Fellowship in 2011. 
His research interests lie in the areas of control, dynamical systems and optimization, with applications to engineering networks such as power systems and the Internet.
\end{IEEEbiography}}{}
\balance

\end{document}